\newtheorem{theorem}{Theorem}
\newtheorem{prop}{Proposition}
\newtheorem{lemma}{Lemma}
\newtheorem{coro}[lemma]{Corollary}
\begin{document}

\title{Symmetrical Multilevel Diversity Coding and \\Subset Entropy Inequalities}

\author{Jinjing~Jiang, Neeharika Marukala, and Tie Liu
\thanks{This research was supported in part by the National Science Foundation under Grant CCF-08-45848 and CCF-09-16867. The material in this paper was presented in part at the 2012 International Symposium on Network Coding (NetCod), Cambridge, MA, June 2012 and in part at the IEEE International Symposium on Information Theory (ISIT), Cambridge, MA, July 2012. J.~Jiang and T.~Liu are with the Department of Electrical and Computer Engineering, Texas A\&M University, College Station, TX 77843, USA (email: \{jinjing,tieliu\}@tamu.edu). N.~Marukala was with the Texas A\&M University. She is now with Qualcomm Incoporation, San Diego, CA 92121, USA (email: neeha.iitm@gmail.com).}}

\date{\today}

\maketitle

\begin{abstract}
Symmetrical multilevel diversity coding (SMDC) is a classical model for coding over distributed storage. In this setting, a simple separate encoding strategy known as superposition coding was shown to be optimal in terms of achieving the minimum sum rate (Roche, Yeung, and Hau, 1997) and the entire admissible rate region (Yeung and Zhang, 1999) of the problem. The proofs utilized carefully constructed induction arguments, for which the classical subset entropy inequality of Han (1978) played a key role. This paper includes two parts. In the first part the existing optimality proofs for classical SMDC are revisited, with a focus on their connections to subset entropy inequalities. First, a new sliding-window subset entropy inequality is introduced and then used to establish the optimality of superposition coding for achieving the minimum sum rate under a weaker source-reconstruction requirement. Second, a subset entropy inequality recently proved by Madiman and Tetali (2010) is used to develop a new structural understanding to the proof of Yeung and Zhang on the optimality of superposition coding for achieving the entire admissible rate region. Building on the connections between classical SMDC and the subset entropy inequalities developed in the first part, in the second part the optimality of superposition coding is further extended to the cases where there is either an additional all-access encoder (SMDC-A) or an additional secrecy constraint (S-SMDC).
\end{abstract}

\section{Introduction}\label{sec:Intr}
Symmetrical multilevel diversity coding (SMDC) is a classical model for coding over distributed storage, which was first introduced by Roche \cite{Roc-Thesis92} and Yeung \cite{Yeu-IT95}. In this setting, there are a total of $L$ \emph{independent} discrete memoryless sources $S_1,\ldots,S_L$, where the importance of the source $S_l$ is assumed to decrease with the subscript $l$. The sources are to be encoded by a total of $L$ \emph{randomly accessible} encoders. The goal of encoding is to ensure that the number of sources that can be nearly perfectly reconstructed grows with the number of available encoder outputs at the decoder. More specifically, denote by $U \subseteq \Omega_L:=\{1,\ldots,L\}$ the set of accessible encoders. The realization of $U$ is \emph{unknown} a priori at the encoders. However, the sources $S_1,\ldots,S_\alpha$ need to be nearly perfectly reconstructed whenever $|U| \geq \alpha$ at the decoder. The word ``symmetrical" here refers to the fact that the sources that need to be nearly perfectly reconstructed depend on the set of accessible encoders only via its cardinality. The rate allocations at different encoders, however, can be different and are not necessarily symmetrical.

A natural strategy for SMDC is to encode the sources separately at each of the encoders (no coding across different sources) known as \emph{superposition coding} \cite{Yeu-IT95}. To show that the natural superposition coding strategy is also optimal, however, turned out to be rather nontrivial. The optimality of superposition coding in terms of achieving the \emph{minimum sum rate} was established by Roche, Yeung, and Hau \cite{RYH-IT97}. The proof used a carefully constructed induction argument, for which the classical subset entropy inequality of Han \cite{Han-IC78} played a key role. Later, the optimality of superposition coding in terms of achieving the \emph{entire admission rate region} was established by Yeung and Zhang \cite{YZ-IT99}. Their proof was based on a \emph{new} subset entropy inequality, which was established by carefully combining Han's subset inequality with several highly technical results on the analysis of a sequence of linear programs (which are used to characterize the performance of superposition coding). 

This paper includes two parts. In the first part (Section~\ref{sec:SMDC}), the optimality proofs of \cite{RYH-IT97} and \cite{YZ-IT99} are revisited in light of two new subset entropy inequalities: 
\begin{itemize}
\item First, a new \emph{sliding-window} subset entropy inequality is introduced, which not only implies the classical subset entropy inequality of Han \cite{Han-IC78} in a trivial way, but also leads to a new proof of the optimality of superposition encoding for achieving the minimum sum rate under a \emph{weaker} source-reconstruction requirement. 
\item Second, a subset entropy inequality recently proved by Madiman and Tetali \cite{MT-IT10} is leveraged to provide a new \emph{structural} understanding to the subset entropy inequality of Yeung and Zhang \cite{YZ-IT99}. Based on this new understanding, a \emph{conditional} version of the subset entropy inequality of Yeung and Zhang \cite{YZ-IT99} is further established, which plays a key role in extending the optimality of superposition coding to the case where there is an additional secrecy constraint.
\end{itemize}

In the second part of the paper (Section~\ref{sec:Ext}), two extensions of classical SMDC are considered:
\begin{itemize}
\item The first extension, which we shall refer to as SMDC-A, features an \emph{all-access encoder}, in addition to the $L$ randomly accessible encoders in the classical setting, whose output is available at the decoder \emph{at all time}. This model is mainly motivated by the proliferation of mobile computing devices (laptop computers, tablets, smart phones etc.), which can access both remote storage nodes via unreliable wireless links and local hard disks which are always available but are of limited capacity. It is shown that in this setting, superposition coding remains optimal in terms of achieving the entire admissible rate region. Key to our proof is to identify the supporting hyperplanes that define the superposition coding rate region and then apply the subset entropy inequality of Yeung and Zhang \cite{YZ-IT99}. 
\item The second extension, which we shall refer to as S-SMDC, extends the problem of SMDC to the \emph{secure} communication setting. The problem was first introduced in \cite{BLLLM-ITR12}, where the optimality of superposition coding for achieving the minimum sum rate was established via the classical subset entropy inequality of Han \cite{Han-IC78}. Through the conditional version of the subset entropy inequality of Yeung and Zhang \cite{YZ-IT99} established in the first part, here we show that superposition coding can, in fact, achieve the entire admissible rate region of the problem, resolving the conjecture of \cite{BLLLM-ITR12} by positive.
\end{itemize}

\section{SMDC Revisited}\label{sec:SMDC}
\subsection{Problem Statement and Optimality of Superposition Coding}
\subsubsection{Problem Statement}
\begin{figure}[!t]
\centering
\includegraphics[width=0.95\linewidth,draft=false]{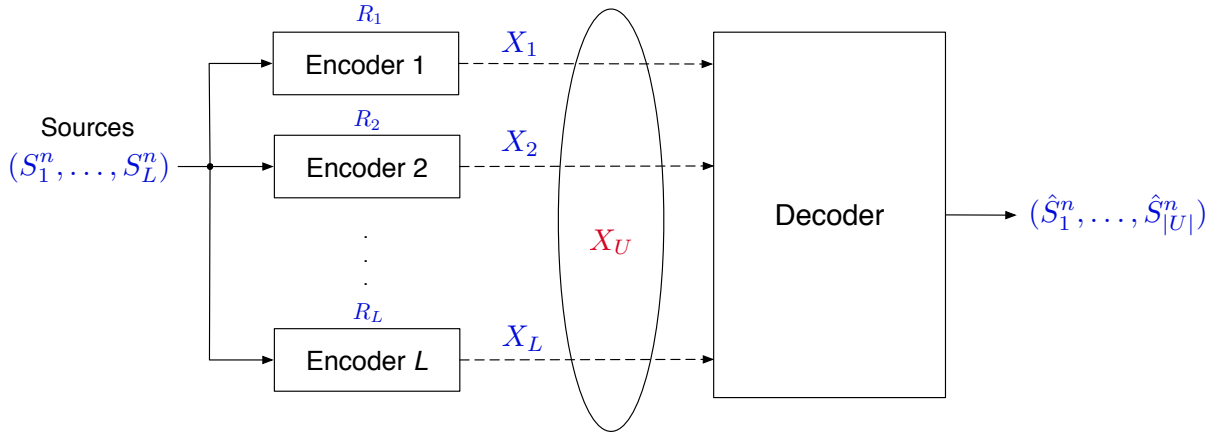}
\caption{The classical SMDC problem where a total of $L$ independent discrete memoryless sources $S_1,\ldots,S_L$ are to be encoded by a total of $L$ encoders. The decoder, which has access to a subset $U$ of the encoder outputs, needs to nearly perfectly reconstruct the sources $S_1,\ldots,S_{|U|}$ no matter what the realization of $U$ is.}
\label{fig:SMDC}
\end{figure}

As illustrated in Figure~\ref{fig:SMDC}, the problem of SMDC consists of:
\begin{itemize}
\item a total of $L$ \emph{independent} discrete memoryless sources $\{S_\alpha[t]\}_{t=1}^{\infty}$, where $\alpha=1,\ldots,L$ and $t$ is the time index;
\item a set of $L$ encoders (encoder $1$ to $L$);
\item a decoder which can access a nonempty subset $U \subseteq \Omega_L$ of the encoder outputs.
\end{itemize}
The realization of $U$ is \emph{unknown} a priori at the encoders. However, no matter which $U$ actually materializes, the decoder needs to nearly perfectly reconstruct the sources $S_1,\ldots,S_\alpha$ whenever $|U| \geq \alpha$. 

Formally, an $(n,(M_1,\ldots,M_L))$ code is defined by a collection of $L$ encoding functions:
\begin{equation}
e_l: \prod_{\alpha=1}^{L}\mathcal{S}_\alpha^n \rightarrow \{1,\ldots,M_l\}, \quad \forall l=1,\ldots,L
\end{equation}
and $2^L-1$ decoding functions:
\begin{equation}
d_U: \prod_{l\in U}\{1,\ldots,M_l\} \rightarrow \prod_{\alpha=1}^{|U|}\mathcal{S}_\alpha^n, \quad \forall U \subseteq \Omega_L \; \mbox{s.t.} \; U \neq \emptyset.
\end{equation}
A nonnegative rate tuple $(R_1,\ldots,R_L)$ is said to be \emph{admissible} if for every $\epsilon>0$, there exits, for sufficiently large block-length $n$, an $(n,(M_1,\ldots,M_L))$ code such that:
\begin{itemize}
\item (Rate constraints at the encoders)
\begin{equation}
\frac{1}{n}\log M_l \leq R_l +\epsilon, \qquad \forall l =1,\ldots,L;\label{eq:Rate}
\end{equation}
\item (Asymptotically perfect reconstructions at the decoder)
\begin{equation}
\mathrm{Pr}\left\{d_U(X_U) \neq (S_1^n,\ldots,S_{|U|}^n)\right\} \leq \epsilon, 
\qquad \forall U \subseteq \Omega_L \; \mbox{s.t.} \; U \neq \emptyset
\label{eq:PR}
\end{equation}
where $S_\alpha^n := \{S_\alpha[t]\}_{t=1}^{n}$, $X_l:=e_l(S_1^n,\ldots,S_L^n)$ is the output of encoder $l$, and $X_U:=\{X_l: l \in U\}$.
\end{itemize}
The \emph{admissible rate region} $\mathcal{R}$ is the collection of \emph{all} admissible rate tuples $(R_1,\ldots,R_L)$. The \emph{minimum sum rate} $R_{ms}$ is defined as
\begin{equation}
R_{ms} := \min_{(R_1,\ldots,R_L) \in \mathcal{R}}\sum_{l=1}^LR_l.
\end{equation}

\subsubsection{Superposition Coding Rate Region}
As mentioned previously, a natural strategy for SMDC is \emph{superposition coding}, i.e., to encode the sources separately at the encoders and there is \emph{no} coding across different sources. Formally, the problem of encoding a single source $S_\alpha$ can be viewed as a special case of the general SMDC problem, where the sources $S_m$ are deterministic for all $m \neq \alpha$. In this case, the source $S_\alpha$ needs to be nearly perfectly reconstructed whenever the decoder can access at least $\alpha$ encoder outputs. Thus, the problem is essentially to transmit $S_\alpha$ over an \emph{erasure} channel, and the following simple source-channel separation scheme is known to be optimal \cite{Roc-Thesis92,Yeu-IT95}:

\begin{itemize}
\item First compress the source sequence $S_\alpha^n$ into a source message $W_\alpha$ using a \emph{lossless} source code. It is well known \cite[Ch.~5]{CT-B06} that the rate of the source message $W_\alpha$ can be made arbitrarily close to the entropy rate $H(S_\alpha)$ for sufficiently large block-length $n$.
\item Next, the source message $W_\alpha$ is encoded at encoders $1$ to $L$ using a \emph{maximum distance separable} code \cite{Sin-IT64}. It is well known \cite{Roc-Thesis92,Yeu-IT95} that the source message $W_\alpha$ can be perfectly recovered at the decoder whenever 
\begin{equation}
\sum_{l \in U} R_l \geq \frac{1}{n}H(W_\alpha), \quad \forall U \in \Omega_L^{(\alpha)}
\end{equation}
for sufficiently large block length $n$, where $\Omega_L^{(\alpha)}$ denotes the collection of all subsets of $\Omega_L$ of size $\alpha$.
\end{itemize}

Combining the above two steps, we conclude that the admissible rate region for encoding a single source $S_\alpha$ is given by the collection of all nonnegative rate tuples $(R_1,\ldots,R_L)$ satisfying 
\begin{equation}
\sum_{l \in U} R_l \geq H(S_\alpha), \quad \forall U \in \Omega_L^{(\alpha)}.
\label{eq:SSDC}
\end{equation}
By definition, the superposition coding rate region $\mathcal{R}_{sup}$ for encoding the sources $S_1,\ldots,S_L$ is given by the collection of all nonnegative rate tuples $(R_1,\ldots,R_L)$ such that
\begin{equation}
R_l := \sum_{\alpha=1}^{L}r_l^{(\alpha)}
\label{eq:SMDC_sup}
\end{equation}
for some nonnegative $r_l^{(\alpha)}$, $\alpha=1,\ldots,L$ and $l=1,\ldots,L$, satisfying  
\begin{equation}
\sum_{l \in U} r_l^{(\alpha)} \geq H(S_\alpha), \quad \forall U \in \Omega_L^{(\alpha)}.
\label{eq:SSDC2}
\end{equation}

In principle, an explicit characterization of the superposition coding rate region $\mathcal{R}_{sup}$ can be obtained by eliminating $r_l^{(\alpha)}$, $\alpha=1,\ldots,L$ and $l=1,\ldots,L$, via a Fourier-Motzkin elimination from \eqref{eq:SMDC_sup} and \eqref{eq:SSDC2}. However, the elimination process is \emph{unmanageable} even for moderate $L$, as there are simply too many equations involved. On the other hand, note that the superposition coding rate region $\mathcal{R}_{sup}$ is a convex polyhedron with polyhedral cone being $(\mathbb{R}^+)^L$, so an equivalent characterization is to characterize the supporting hyperplanes: 
\begin{equation}
\sum_{l=1}^L\lambda_lR_l \geq f(\boldsymbol{\lambda}), \quad \forall \boldsymbol{\lambda}:=(\lambda_1,\ldots,\lambda_L) \in (\mathbb{R}^+)^L
\label{eq:sup}
\end{equation}
where
\begin{eqnarray}
f(\boldsymbol{\lambda}) & = & \min_{(R_1,\ldots,R_L) \in \mathcal{R}_{sup}}\sum_{l=1}^L\lambda_lR_l\\
& = & 
\begin{array}{rl}
\min & \sum_{l=1}^L\left(\sum_{\alpha=1}^L\lambda_lr_l^{(\alpha)}\right)\\
\mbox{subject to} & \sum_{l \in U} r_l^{(\alpha)} \geq H(S_\alpha), \quad \forall U \in \Omega_L^{(\alpha)} \; \mbox{and} \; \alpha=1,\ldots,L\\
& r_l^{(\alpha)} \geq 0, \quad \forall \alpha=1,\ldots,L \; \mbox{and} \; l=1,\ldots,L.
\end{array}
\end{eqnarray}
Clearly, the above optimization problem can be separated into the following $L$ sub-optimization problems:
\begin{equation}
f(\boldsymbol{\lambda})=\sum_{\alpha=1}^Lf'_\alpha(\boldsymbol{\lambda})
\end{equation}
where
\begin{eqnarray}
f'_\alpha(\boldsymbol{\lambda}) & = & 
\begin{array}{rl}
\min & \sum_{l=1}^L\lambda_lr_l^{(\alpha)}\\
\mbox{subject to} & \sum_{l \in U} r_l^{(\alpha)} \geq H(S_\alpha), \quad \forall U \in \Omega_L^{(\alpha)}\\
& r_l^{(\alpha)} \geq 0, \quad \forall l=1,\ldots,L
\end{array}\\
& = & 
\begin{array}{rl}
\max & \left(\sum_{U \in \Omega_L^{(\alpha)}}c_{\boldsymbol{\lambda}}(U)\right)H(S_\alpha)\\
\mbox{subject to} & \sum_{\{U \in \Omega_L^{(\alpha)}: U \ni l\}} c_{\boldsymbol{\lambda}}(U) \leq \lambda_l, \quad \forall l=1,\ldots,L\\
& c_{\boldsymbol{\lambda}}(U) \geq 0, \quad \forall U \in \Omega_L^{(\alpha)}.
\end{array}
\label{eq:dual}
\end{eqnarray}
and \eqref{eq:dual} follows from the strong \emph{duality} for linear programs. For any $\boldsymbol{\lambda} \in (\mathbb{R}^+)^L$ and any $\alpha=1,\ldots,L$, let
\begin{eqnarray}
f_\alpha(\boldsymbol{\lambda}) & := & 
\begin{array}{rl}
\max & \sum_{U \in \Omega_L^{(\alpha)}}c_{\boldsymbol{\lambda}}(U)\\
\mbox{subject to} & \sum_{\{U \in \Omega_L^{(\alpha)}: U \ni l\}} c_{\boldsymbol{\lambda}}(U) \leq \lambda_l, \quad \forall l=1,\ldots,L\\
& c_{\boldsymbol{\lambda}}(U) \geq 0, \quad \forall U \in \Omega_L^{(\alpha)}.
\end{array}
\label{eq:dual2}
\end{eqnarray}
Then, we have $f'_\alpha(\boldsymbol{\lambda})=f_\alpha(\boldsymbol{\lambda})H(S_\alpha)$ and hence 
\begin{equation}
f(\boldsymbol{\lambda})=\sum_{\alpha=1}^Lf_\alpha(\boldsymbol{\lambda})H(S_\alpha)
\label{eq:sup2}
\end{equation}
for any $\boldsymbol{\lambda} \in (\mathbb{R}^+)^L$. Substituting \eqref{eq:sup2} into \eqref{eq:sup}, we conclude that the superposition coding rate region $\mathcal{R}_{sup}$ is given by the collection of nonnegative rate tuples $(R_1,\ldots,R_L)$ satisfying
\begin{equation}
\sum_{l=1}^L\lambda_lR_l \geq \sum_{\alpha=1}^Lf_\alpha(\boldsymbol{\lambda})H(S_\alpha), \quad \forall \boldsymbol{\lambda} \in (\mathbb{R}^+)^L.
\label{eq:sup3}
\end{equation}

For a general $\boldsymbol{\lambda}$, the linear program \eqref{eq:dual2} does not admit a \emph{closed-form} solution. However, for $\boldsymbol{\lambda}=\boldsymbol{1}:=(1,\ldots,1)$ it can be easily verified that $c_{\boldsymbol{1}}^{(\alpha)}=\{c_{\boldsymbol{1}}(U): U \in \Omega_L^{(\alpha)}\}$ where
\begin{equation}
c_{\boldsymbol{1}}(U):=\frac{1}{
\left(
\begin{array}{c}
  L-1   \\
  \alpha-1   
\end{array}
\right)
}
\label{eq:C1u}
\end{equation}
is an \emph{optimal} solution to the linear program \eqref{eq:dual2}, and we thus have
\begin{equation}
f_\alpha(\boldsymbol{1}) = \sum_{U \in \Omega_L^{(\alpha)}}c_{\boldsymbol{1}}(U)
=\frac{
\left(
\begin{array}{c}
  L   \\
  \alpha   
\end{array}
\right)
}
{
\left(
\begin{array}{c}
  L-1   \\
  \alpha-1   
\end{array}
\right)
}
=\frac{L}{\alpha}
\end{equation}
for any $\alpha=1,\ldots,L$. Hence, the minimum sum rate that can be achieved by superposition coding is given by
\begin{equation}
\min_{(R_1,\ldots,R_L) \in \mathcal{R}_{sup}}\sum_{l=1}^LR_l =f(\boldsymbol{1})=\sum_{\alpha=1}^Lf_\alpha(\boldsymbol{1})H(S_\alpha)=\sum_{\alpha=1}^L(L/\alpha)H(S_\alpha).
\end{equation}

\subsubsection{Optimality of Superposition Coding: Known Proofs}
To show that superposition coding is optimal in terms of achieving the entire admissible rate region, we need to show that 
for any $\boldsymbol{\lambda} \in (\mathbb{R}^+)^L$ we have
\begin{equation}
\sum_{l=1}^L\lambda_lR_l \geq \sum_{\alpha=1}^Lf_\alpha(\boldsymbol{\lambda})H(S_\alpha), \quad \forall (R_1,\ldots,R_L) \in \mathcal{R}.
\label{eq:RR-Conv}
\end{equation}
In particular, to show that superposition coding is optimal in terms of achieving the minimum sum rate, we need to show that
\begin{equation}
\sum_{l=1}^LR_l \geq \sum_{\alpha=1}^Lf_\alpha(\boldsymbol{1})H(S_\alpha) = \sum_{\alpha=1}^L(L/\alpha)H(S_\alpha), \quad \forall (R_1,\ldots,R_L) \in \mathcal{R}.
\label{eq:SR-Conv}
\end{equation}

Note that for any admissible rate tuple $(R_1,\ldots,R_L) \in \mathcal{R}$ and $\epsilon>0$, by the rate constraints \eqref{eq:Rate} we have
\begin{equation}
n(R_l+\epsilon) \geq H(X_l), \quad \forall l=1,\ldots,L
\label{eq:Rate2}
\end{equation}
for sufficiently large block-length $n$. Furthermore, by the asymptotically perfect reconstruction requirement \eqref{eq:PR} and the well-known Fano's inequality we have
\begin{equation}
H(S_1^n,\ldots,S_\alpha^n|X_U) \leq n\delta_{\alpha}^{(n)}
\end{equation}
for any $U \in \Omega_L^{(\alpha)}$ and $\alpha=1,\ldots,L$, where $\delta_\alpha^{(n)} \rightarrow 0$ in the limit as $n \rightarrow \infty$ and $\epsilon \rightarrow 0$. Thus, for any $V \in \Omega_L^{(\alpha-1)}$ we have
\begin{align}
H(X_V|S_1^n,\ldots,S_{\alpha-2}^n) &= H(X_V|S_1^n,\ldots,S_{\alpha-1}^n)+
I(X_V;S_{\alpha-1}^n|S_1^n,\ldots,S_{\alpha-2}^n)\\
&=  H(X_V|S_1^n,\ldots,S_{\alpha-1}^n)+
H(S_{\alpha-1}^n|S_1^n,\ldots,S_{\alpha-2}^n)-\nonumber\\
& \hspace{15pt}H(S_{\alpha-1}^n|S_1^n,\ldots,S_{\alpha-2}^n,X_V)\\
& \geq H(X_V|S_1^n,\ldots,S_{\alpha-1}^n)+
H(S_{\alpha-1}^n)-H(S_1^n,\ldots,S_{\alpha-1}^n|X_V)\label{eq:SMDC-T1}\\
& \geq H(X_V|S_1^n,\ldots,S_{\alpha-1}^n)+
nH(S_{\alpha-1})-n\delta_{\alpha-1}^{(n)}\label{eq:PR2}
\end{align}
where \eqref{eq:SMDC-T1} follows from the facts that all sources are independent so $H(S_{\alpha-1}^n|S_1^n,\ldots,S_{\alpha-2}^n)=H(S_{\alpha-1}^n)$ and that 
\begin{eqnarray}
H(S_{\alpha-1}^n|S_1^n,\ldots,S_{\alpha-2}^n,X_V) &=& H(S_1^n,\ldots,S_{\alpha-1}^n|X_V)-H(S_1^n,\ldots,S_{\alpha-2}^n|X_V)\\
& \leq & H(S_1^n,\ldots,S_{\alpha-1}^n|X_V).
\end{eqnarray}
Therefore, starting with \eqref{eq:Rate2} and applying \eqref{eq:PR2} \emph{iteratively} may lead us towards a proof of \eqref{eq:RR-Conv} and \eqref{eq:SR-Conv}. Note, however, that to apply \eqref{eq:PR2} iteratively we shall need to bound from below $H(X_V|S_1^n,\ldots,S_{\alpha-1}^n)$ in terms of $H(X_U|S_1^n,\ldots,S_{\alpha-1}^n)$ for some $U \in \Omega_L^{(\alpha)}$. The key observation of \cite{RYH-IT97} and \cite{YZ-IT99} is that such bounds exist, not for an arbitrary individual pair of $U$ and $V$, but rather at the level of an appropriate \emph{averaging} among $V \in \Omega_L^{(\alpha-1)}$ and $U \in \Omega_L^{(\alpha)}$.

More specifically, \cite{RYH-IT97} considered the classical subset entropy inequality of Han \cite{Han-IC78}, which can be written as follows.

\begin{theorem}[A subset entropy inequality of Han \cite{Han-IC78}]\label{thm:Han}
For any collection of $L$ jointly distributed random variables $(X_1,\ldots,X_L)$, we have
\begin{equation}
\frac{1}
{
\left(
\begin{array}{c}
  L   \\
  \alpha-1   
\end{array}
\right)
}\sum_{V \in \Omega_L^{(\alpha-1)}}\frac{H(X_V)}{\alpha-1} \geq 
\frac{1}
{
\left(
\begin{array}{c}
  L   \\
  \alpha   
\end{array}
\right)
}\sum_{U \in \Omega_L^{(\alpha)}}\frac{H(X_U)}{\alpha}
\label{eq:Han}
\end{equation}
for any $\alpha=2,\ldots,L$.
\end{theorem}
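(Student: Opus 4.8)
The plan is to reduce Theorem~\ref{thm:Han} to a single ``local'' (per-subset) entropy inequality and then average that inequality over all size-$\alpha$ subsets of $\Omega_L$. The first step is to establish that for every $U \subseteq \Omega_L$ with $|U| = \alpha$,
\[
(\alpha - 1)\,H(X_U) \;\le\; \sum_{i \in U} H(X_{U \setminus \{i\}}).
\]
To prove this I would expand $H(X_U)$ in $\alpha$ different ways via the chain rule, peeling off one coordinate at a time: $H(X_U) = H(X_{U\setminus\{i\}}) + H(X_i \mid X_{U \setminus \{i\}})$ for each $i \in U$. Summing these $\alpha$ identities gives $\alpha H(X_U) = \sum_{i\in U} H(X_{U\setminus\{i\}}) + \sum_{i \in U} H(X_i \mid X_{U\setminus\{i\}})$, so it remains to show $\sum_{i \in U} H(X_i \mid X_{U \setminus \{i\}}) \le H(X_U)$. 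For this, fix an arbitrary ordering $u_1,\ldots,u_\alpha$ of $U$; since $\{u_1,\ldots,u_{k-1}\} \subseteq U \setminus \{u_k\}$, the fact that conditioning reduces entropy yields $H(X_{u_k} \mid X_{u_1},\ldots,X_{u_{k-1}}) \ge H(X_{u_k}\mid X_{U\setminus\{u_k\}})$ for each $k$, and summing over $k$ while applying the chain rule on the left collapses the bound to $H(X_U) \ge \sum_{i\in U} H(X_i\mid X_{U\setminus\{i\}})$, as needed.

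The second step is a double-counting argument. Summing the local inequality over all $U \in \Omega_L^{(\alpha)}$ and noting that each term $H(X_V)$ with $|V| = \alpha - 1$ occurs on the right-hand side exactly once for every extension of $V$ to a size-$\alpha$ subset of $\Omega_L$ --- that is, $L - \alpha + 1$ times --- gives
\[
(\alpha-1)\sum_{U \in \Omega_L^{(\alpha)}} H(X_U) \;\le\; (L - \alpha + 1)\sum_{V \in \Omega_L^{(\alpha-1)}} H(X_V).
\]
The final step is bookkeeping: dividing through by $\binom{L}{\alpha}$ and using $\binom{L}{\alpha}/\binom{L}{\alpha-1} = (L-\alpha+1)/\alpha$ turns this into precisely the normalized inequality \eqref{eq:Han}.

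The only genuinely non-mechanical ingredient is the local inequality in the first step, and within it the point requiring care is the choice of an ordering of $U$ so that ``conditioning reduces entropy'' can be applied term by term and the resulting sum of conditional entropies telescoped back into the single joint entropy $H(X_U)$. Beyond that, I anticipate no real obstacle; one just has to make sure the combinatorial multiplicity $L-\alpha+1$ and the binomial-coefficient normalization are matched exactly.
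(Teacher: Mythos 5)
Your proof is correct. The local inequality $(\alpha-1)H(X_U)\le\sum_{i\in U}H(X_{U\setminus\{i\}})$ does follow from the $\alpha$ chain-rule expansions plus the bound $\sum_{i\in U}H(X_i\mid X_{U\setminus\{i\}})\le H(X_U)$, which your ordering-and-conditioning argument establishes, and the double count ($L-\alpha+1$ parents for each $V\in\Omega_L^{(\alpha-1)}$) together with $\binom{L}{\alpha}/\binom{L}{\alpha-1}=(L-\alpha+1)/\alpha$ yields exactly \eqref{eq:Han}. This is a different route from the paper, which never proves Theorem~\ref{thm:Han} from first principles: it cites Han and later re-derives \eqref{eq:Han} in two ways, once from the sliding-window inequality \eqref{eq:SW} by averaging over all permutations of $\Omega_L$, and once from the Madiman--Tetali inequality \eqref{eq:MT} through Corollary~\ref{cor:MT}, taking the uniform fractional cover $g_U(V)=1/(\alpha-1)$ and the weights $c(U)=1/\bigl(\alpha\binom{L}{\alpha}\bigr)$ of \eqref{eq:cHan}. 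Your argument is, in effect, the elementary self-contained version of that second derivation: your local inequality is precisely the instance of \eqref{eq:MT} for the hypergraph $(U,\mathcal{V}_U)$ with the uniform cover (equivalently, \eqref{eq:Han2} applied to the variables indexed by $U$), and your double counting plays the role of the consistency condition \eqref{eq:MT2} for those particular weights. What your route buys is independence from the Madiman--Tetali machinery, since you prove the local step directly from the chain rule; what the paper's packaging buys is reusability, since the same Corollary~\ref{cor:MT} with non-uniform covers and weights also delivers the Yeung--Zhang inequality \eqref{eq:YZ}, which a direct count of this kind does not.
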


Iteratively applying \eqref{eq:PR2} and \eqref{eq:Han}, we may obtain
\begin{align}
\frac{1}{L}
\sum_{l=1}^LH(X_l) & = \frac{1}
{
\left(
\begin{array}{c}
  L   \\
  1   
\end{array}
\right)}
\sum_{V \in \Omega_L^{(1)}}H(X_V)\\
& \geq
\frac{1}
{
\left(
\begin{array}{c}
  L   \\
  m   
\end{array}
\right)}
\sum_{U \in \Omega_L^{(m)}}\frac{H(X_U|S_1^n,\ldots,S_m^n)}{m}+
n\sum_{\alpha=1}^m\frac{H(S_\alpha)}{\alpha}-n\sum_{\alpha=1}^m\frac{\delta_\alpha^{(n)}}{\alpha}
\end{align}
for any $m=1,\ldots,L$. In particular, let $m=L$, and we have
\begin{eqnarray}
\frac{1}{L}
\sum_{l=1}^LH(X_l) & \geq &
\frac{1}
{
\left(
\begin{array}{c}
  L   \\
  L   
\end{array}
\right)}
\sum_{U \in \Omega_L^{(L)}}\frac{H(X_U|S_1^n,\ldots,S_L^n)}{L}+
n\sum_{\alpha=1}^L\frac{H(S_\alpha)}{\alpha}-n\sum_{\alpha=1}^L\frac{\delta_\alpha^{(n)}}{\alpha}\\
& \geq & n\sum_{\alpha=1}^L\frac{H(S_\alpha)}{\alpha}-n\sum_{\alpha=1}^L\frac{\delta_\alpha^{(n)}}{\alpha}.\label{eq:SMDC-T2}
\end{eqnarray}
Substituting \eqref{eq:Rate2} into \eqref{eq:SMDC-T2} and dividing both sides of the inequality by $n$, we have
\begin{equation}
\frac{1}{L}\sum_{l=1}^L(R_l+\epsilon) \geq \sum_{\alpha=1}^L\frac{H(S_\alpha)}{\alpha}-\sum_{\alpha=1}^L\frac{\delta_\alpha^{(n)}}{\alpha}.
\end{equation}
Finally, letting $n \rightarrow \infty$ and $\epsilon \rightarrow 0$ completes the proof of \eqref{eq:SR-Conv}, i.e., superposition coding can achieve the minimum sum rate for the general SMDC problem.

To prove that superposition coding can in fact achieve the entire admissible rate region, Yeung and Zhang \cite{YZ-IT99} proved the following key subset entropy inequality.

\begin{theorem}[A subset entropy inequality of Yeung and Zhang \cite{YZ-IT99}]\label{thm:YZ}
For any $\boldsymbol{\lambda} \in (\mathbb{R}^+)^L$, there exists a function $c_{\boldsymbol{\lambda}}: 2^{\Omega_L}\setminus \emptyset \rightarrow \mathbb{R}^+$ such that:
\begin{itemize}
\item[1)] for each $\alpha=1,\ldots,L$, $c_{\boldsymbol{\lambda}}^{(\alpha)}:=\{c_{\boldsymbol{\lambda}}(U): U \in \Omega_L^{(\alpha)}\}$ is an \emph{optimal} solution to the linear program \eqref{eq:dual2}; and 
\item[2)] for each $\alpha=2,\ldots,L$,
\begin{equation}
\sum_{V \in \Omega_L^{(\alpha-1)}}c_{\boldsymbol{\lambda}}(V)H(X_V) \geq \sum_{U \in \Omega_L^{(\alpha)}}c_{\boldsymbol{\lambda}}(U)H(X_U)\label{eq:YZ}
\end{equation}
for any collection of $L$ jointly distributed random variables $(X_1,\ldots,X_L)$.
\end{itemize}
\end{theorem}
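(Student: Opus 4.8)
The plan is to decouple the statement into an \emph{entropy} part and a \emph{linear-programming} part: the entropy part shows that inequality~\eqref{eq:YZ} follows from a pointwise domination relation between the weight functions at two consecutive levels, and the LP part produces a family of optimal solutions to~\eqref{eq:dual2} satisfying that relation. This decoupling is exactly the ``structural understanding'' that the Madiman--Tetali subset entropy inequality exposes, isolating the genuinely technical difficulty in the linear programs.

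For the entropy part, for a nonnegative weight function $\mu$ on $\Omega_L^{(\alpha)}$ define its downward shadow $\mathsf{D}\mu$ on $\Omega_L^{(\alpha-1)}$ by
\[
(\mathsf{D}\mu)(V):=\frac{1}{\alpha-1}\sum_{\{U\in\Omega_L^{(\alpha)}:\,U\supseteq V\}}\mu(U).
\]
First I would show that whenever a nonnegative $\nu$ on $\Omega_L^{(\alpha-1)}$ satisfies $\nu(V)\ge(\mathsf{D}\mu)(V)$ for all $V$, then $\sum_{V\in\Omega_L^{(\alpha-1)}}\nu(V)H(X_V)\ge\sum_{U\in\Omega_L^{(\alpha)}}\mu(U)H(X_U)$ for every collection $(X_1,\dots,X_L)$. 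This is immediate: applying Theorem~\ref{thm:Han} to the ground set indexed by $U$ (equivalently, Shearer's lemma with mass $1/(\alpha-1)$ on each $(\alpha-1)$-subset of $U$, which is a fractional partition of $U$) gives $(\alpha-1)H(X_U)\le\sum_{\{V\subseteq U,\,|V|=\alpha-1\}}H(X_V)$; multiplying by $\mu(U)\ge0$, summing over $U$, exchanging the order of summation, and finally using $\nu\ge\mathsf{D}\mu$ together with $H(X_V)\ge0$ yields the claim. Consequently it suffices to produce a family $\{c_{\boldsymbol{\lambda}}^{(\alpha)}\}_{\alpha=1}^{L}$ in which each $c_{\boldsymbol{\lambda}}^{(\alpha)}$ is optimal for~\eqref{eq:dual2} and $c_{\boldsymbol{\lambda}}^{(\alpha-1)}\ge\mathsf{D}c_{\boldsymbol{\lambda}}^{(\alpha)}$ for $\alpha=2,\dots,L$.

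For the LP part, observe that~\eqref{eq:dual2} is the fractional-matching LP of the complete $\alpha$-uniform hypergraph on $\Omega_L$ with vertex capacities $\lambda_1,\dots,\lambda_L$, and that $\mathsf{D}$ is the natural bridge between levels: a short double-count shows $\mathsf{D}$ sends feasible solutions at level $\alpha$ to feasible solutions at level $\alpha-1$ while preserving every vertex-degree, so in particular $f_{\alpha-1}(\boldsymbol{\lambda})\ge\tfrac{\alpha}{\alpha-1}f_\alpha(\boldsymbol{\lambda})$; moreover (a standard fact for these LPs) the optimal value has the closed form $f_\alpha(\boldsymbol{\lambda})=\min_{0\le|T|\le\alpha-1}\tfrac{1}{\alpha-|T|}\sum_{l\notin T}\lambda_l$, the minimizing $T$ being a set of largest coordinates of $\boldsymbol{\lambda}$. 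I would then build the family top-down: $c_{\boldsymbol{\lambda}}^{(L)}(\Omega_L)=\min_l\lambda_l$ is forced, and given an optimal $c_{\boldsymbol{\lambda}}^{(\alpha)}$ I take $c_{\boldsymbol{\lambda}}^{(\alpha-1)}$ to be a maximizer of $\sum_V d(V)$ over the (nonempty, compact) polytope of level-$(\alpha-1)$ fractional matchings $d$ with $d\ge\mathsf{D}c_{\boldsymbol{\lambda}}^{(\alpha)}$. By construction the domination holds, and the only thing to verify is that imposing it costs nothing, i.e.\ that this maximum equals $f_{\alpha-1}(\boldsymbol{\lambda})$. Writing $d=\mathsf{D}c_{\boldsymbol{\lambda}}^{(\alpha)}+\delta$ with $\delta\ge0$, the constraint on $\delta$ is that it be a fractional matching at level $\alpha-1$ with residual capacities $s_l:=\lambda_l-\deg_l(c_{\boldsymbol{\lambda}}^{(\alpha)})\ge0$, so the maximum equals $\tfrac{\alpha}{\alpha-1}f_\alpha(\boldsymbol{\lambda})+f_{\alpha-1}(\boldsymbol{s})$, and the task becomes to show $\tfrac{\alpha}{\alpha-1}f_\alpha(\boldsymbol{\lambda})+f_{\alpha-1}(\boldsymbol{s})\ge f_{\alpha-1}(\boldsymbol{\lambda})$, choosing among all optimal $c_{\boldsymbol{\lambda}}^{(\alpha)}$ the one whose residual vector $\boldsymbol{s}$ is most favorable (complementary slackness confines the support of $\boldsymbol{s}$ to coordinates where the optimal level-$\alpha$ dual vanishes). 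Substituting the minimax formula for all three quantities then reduces this to an elementary inequality among partial sums of the sorted $\boldsymbol{\lambda}$.

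The main obstacle is precisely this last step. For a generic packing polytope a feasible point need not be dominated by any optimal point, so it is not automatic that the downward shadow of an optimal level-$\alpha$ solution can be padded up to an optimal level-$(\alpha-1)$ solution; it is the symmetry of the complete uniform hypergraph, encoded in the minimax formula for $f_\alpha(\boldsymbol{\lambda})$, that rescues the construction. Carrying out the term-by-term comparison of partial sums — and, before that, selecting at each level the right optimal $c_{\boldsymbol{\lambda}}^{(\alpha)}$ so that the residual capacities leave enough room — is the technical heart, the part that in the original treatment of Yeung and Zhang required a delicate analysis of a whole sequence of linear programs and that the Madiman--Tetali viewpoint is meant to organize rather than to eliminate.
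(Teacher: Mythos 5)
Your decoupling into an entropy half and an LP half is the same organization the paper uses (Corollary~\ref{cor:MT} plus Theorem~\ref{thm:YZ2}), but the interface you chose between the two halves is too rigid and the LP half, as you set it up, cannot be carried out. Your entropy lemma only invokes Han's inequality inside each parent $U$, i.e.\ the \emph{uniform} fractional cover $g_U(V)=\tfrac{1}{\alpha-1}$, so your LP task is to exhibit optimal solutions at consecutive levels with the pointwise domination $c_{\boldsymbol{\lambda}}^{(\alpha-1)}\ge \mathsf{D}c_{\boldsymbol{\lambda}}^{(\alpha)}$. This is impossible in general, and the inequality $\tfrac{\alpha}{\alpha-1}f_\alpha(\boldsymbol{\lambda})+f_{\alpha-1}(\boldsymbol{s})\ge f_{\alpha-1}(\boldsymbol{\lambda})$ to which you reduce it is false. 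Take $L=3$, $\boldsymbol{\lambda}=(3,1,1)$, $\alpha=3$. The level-$3$ program has the unique optimal solution $c_{\boldsymbol{\lambda}}(\{1,2,3\})=\min_l\lambda_l=1$ (so there is no freedom to pick a ``more favorable'' optimum), and its uniform shadow puts mass $\tfrac12$ on each $2$-subset. But $f_2(\boldsymbol{\lambda})=2$, and adding the vertex constraints at $2$ and $3$ gives $\sum_{V\in\Omega_L^{(2)}}c(V)\le 2-c(\{2,3\})$, so \emph{every} optimal level-$2$ solution has $c(\{2,3\})=0<\tfrac12$; no optimal level-$2$ solution dominates the shadow. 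In your bookkeeping: the residual capacities are $\boldsymbol{s}=(2,0,0)$, so $f_2(\boldsymbol{s})=0$ and the padded maximum is $\tfrac32 f_3(\boldsymbol{\lambda})+f_2(\boldsymbol{s})=\tfrac32<2=f_2(\boldsymbol{\lambda})$. So ``imposing the domination costs nothing'' fails exactly in the skewed regime $\lambda_1>\lambda_2+\cdots+\lambda_L$, and no comparison of partial sums of the sorted $\boldsymbol{\lambda}$ can rescue it, because the statement being reduced to is simply not true.

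This is precisely why the paper does not commit to the uniform cover: Corollary~\ref{cor:MT} (via Madiman--Tetali) permits an \emph{arbitrary} fractional cover $g_U$ of $(U,\mathcal{V}_U)$ for each parent, with the consistency condition $c_{\boldsymbol{\lambda}}(V)=\sum_{U\in\mathcal{U}_V}g_U(V)c_{\boldsymbol{\lambda}}(U)$, and the whole LP content of the theorem (Theorem~\ref{thm:YZ2}, proved in Appendix~\ref{app:A} by Yeung and Zhang's three-case induction) is that optimal solutions at consecutive levels can be linked by \emph{some} such covers, which are genuinely non-uniform when $\boldsymbol{\lambda}$ is skewed. In the example above one must take $g_{\{1,2,3\}}(\{1,2\})=g_{\{1,2,3\}}(\{1,3\})=1$ and $g_{\{1,2,3\}}(\{2,3\})=0$, which reproduces the optimal $(1,1,0)$ and turns the entropy step into $H(X_{\{1,2\}})+H(X_{\{1,3\}})\ge H(X_{\{1,2,3\}})$ --- an instance of Madiman--Tetali (equivalently submodularity) that Han's inequality within $U$ does not provide. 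So to repair your argument you would have to state the entropy lemma for general fractional covers (exactly Corollary~\ref{cor:MT}) and replace the shadow-plus-padding construction by a construction of $\boldsymbol{\lambda}$-adapted covers, which is where the Yeung--Zhang case analysis (uniform cover when $\lambda_1\le\tfrac{\lambda_2+\cdots+\lambda_L}{\alpha-1}$, covers supported on children containing the heavy coordinate otherwise) is actually doing the work.
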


Iteratively applying \eqref{eq:PR2} and \eqref{eq:YZ}, we may obtain
\begin{align}
\sum_{V \in \Omega_L^{(1)}}c_{\boldsymbol{\lambda}}(V)H(V) \geq \sum_{U \in \Omega_L^{(m)}}c_{\boldsymbol{\lambda}}(U)H(X_U|S_1^n,\ldots,S_m^n)+
n\sum_{\alpha=1}^mf_\alpha(\boldsymbol{\lambda})H(S_\alpha)-n\sum_{\alpha=1}^mf_\alpha(\boldsymbol{\lambda})\delta_\alpha^{(n)}
\end{align}
for any $m=1,\ldots,L$. In particular, let $m=L$, and note that for $\alpha=1$ the optimal solution to the linear program \eqref{eq:dual2} is unique and is given by
\begin{equation}
c_{\boldsymbol{\lambda}}(\{l\})=\lambda_l, \quad \forall l \in \Omega_L.
\end{equation}
We have
\begin{eqnarray}
\hspace{-20pt} \sum_{l=1}^L\lambda_lH(X_l) & \geq &
\sum_{U \in \Omega_L^{(L)}}c_{\boldsymbol{\lambda}}(U)H(X_U|S_1^n,\ldots,S_L^n)+
n\sum_{\alpha=1}^Lf_\alpha(\boldsymbol{\lambda})H(S_\alpha)-n\sum_{\alpha=1}^Lf_\alpha(\boldsymbol{\lambda})\delta_\alpha^{(n)}\\
& \geq & n\sum_{\alpha=1}^Lf_\alpha(\boldsymbol{\lambda})H(S_\alpha)-n\sum_{\alpha=1}^Lf_\alpha(\boldsymbol{\lambda})\delta_\alpha^{(n)}.\label{eq:SMDC-T3}
\end{eqnarray}
Substituting \eqref{eq:Rate2} into \eqref{eq:SMDC-T3} and dividing both sides of the inequality by $n$, we have
\begin{equation}
\sum_{l=1}^L\lambda_l(R_l+\epsilon) \geq \sum_{\alpha=1}^Lf_\alpha(\boldsymbol{\lambda})H(S_\alpha)-\sum_{\alpha=1}^Lf_\alpha(\boldsymbol{\lambda})\delta_\alpha^{(n)}.
\end{equation}
Finally, letting $n \rightarrow \infty$ and $\epsilon \rightarrow 0$ completes the proof of \eqref{eq:RR-Conv}, i.e., superposition coding can achieve the entire admissible rate region for the general SMDC problem.

\subsection{Minimum Sum Rate via a Sliding-Window Subset Entropy Inequality}
In this section, we prove a new \emph{sliding-window} subset entropy inequality and then use it to provide an alternative proof of the optimality of superposition coding for achieving the minimum sum rate. 

\subsubsection{A sliding-window subset entropy inequality}\label{subsubsec:sw}
\begin{figure}[!t]
\centering
\includegraphics[width=0.6\linewidth,draft=false]{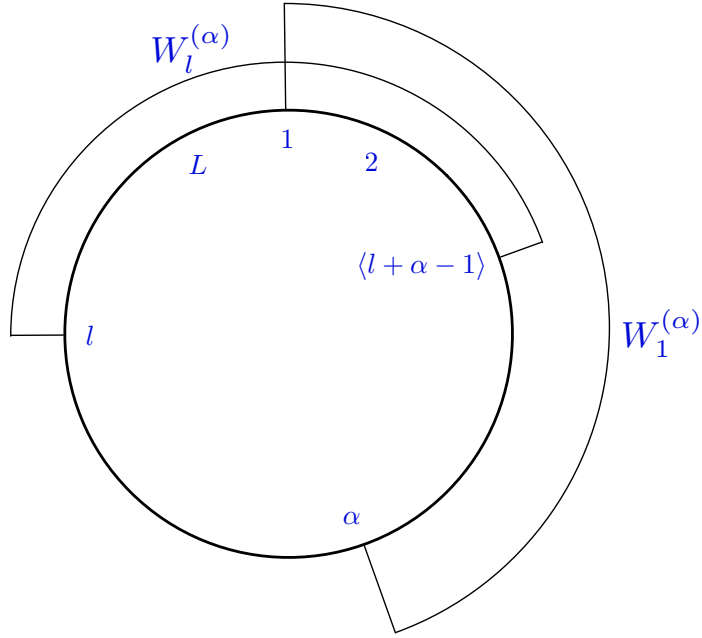}
\caption{An illustration of the sliding windows of length $\alpha$ when the integers $1,\ldots,L$ are circularly placed (clockwise) based on their natural order.}
\label{fig:sw}
\end{figure}

For any integer $l$ let
\begin{equation}
\langle l\rangle := \left\{
\begin{array}{ll}
l \bmod L, & \mbox{if $l \bmod L \neq 0$}\\
L, & \mbox{if $l \bmod L = 0$}
\end{array}
\right.
\end{equation} 
and for any $l=1,\ldots,L$ and $\alpha=1,\ldots,L$ let 
\begin{equation}
W_l^{(\alpha)} := \{l,\langle l+1\rangle,\ldots,\langle l+\alpha-1\rangle\}.
\end{equation}
As illustrated in Figure~\ref{fig:sw}, $W_l^{(\alpha)}$ represents a \emph{sliding window} of length $\alpha$ starting with $l$ when the integers $1,\ldots,L$ are circularly placed (clockwise or counter clockwise) based on their natural order. We have the following \emph{sliding-window} subset entropy inequality.

\begin{theorem}[A sliding-window subset entropy inequality] \label{thm:SW}
For any collection of $L$ jointly distributed random variables $(X_1,\ldots,X_L)$, we have
\begin{equation}
\sum_{l=1}^{L}\frac{H(X_{W_l^{(\alpha-1)}})}{\alpha-1} \geq \sum_{l=1}^{L}\frac{H(X_{W_l^{(\alpha)}})}{\alpha}
\label{eq:SW}
\end{equation}
for any $\alpha=2,\ldots,L$. The equalities hold when $X_1,\ldots,X_L$ are mutually independent of each other.
\end{theorem}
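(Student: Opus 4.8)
The plan is to prove the sliding-window inequality \eqref{eq:SW} by a telescoping/averaging argument analogous to the standard proof of Han's inequality, but carried out along the circular structure. First I would observe that it suffices to prove the ``one-step'' comparison between consecutive window lengths, i.e. that the normalized sum of entropies of windows of length $\alpha$ dominates (after dividing by $\alpha$) the corresponding sum for windows of length $\alpha+1$ divided by $\alpha+1$; then chaining these gives \eqref{eq:SW} for all $\alpha=2,\ldots,L$. For the one-step bound, the key identity is that each window $W_l^{(\alpha)}$ of length $\alpha$ is obtained from the window $W_l^{(\alpha+1)}$ of length $\alpha+1$ by deleting its last element $\langle l+\alpha\rangle$, and also from the window $W_{\langle l+1\rangle}^{(\alpha+1)}$ by deleting its first element $l$; in other words, every length-$(\alpha+1)$ window contains exactly two length-$\alpha$ windows as ``end-deletions,'' and every length-$\alpha$ window sits inside exactly two length-$(\alpha+1)$ windows in this way.

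Next I would write, for each $l$, the chain-rule decomposition
\begin{equation}
H(X_{W_l^{(\alpha+1)}}) = H(X_{W_l^{(\alpha)}}) + H(X_{\langle l+\alpha\rangle} \mid X_{W_l^{(\alpha)}})
\label{eq:sw-plan1}
\end{equation}
and symmetrically
\begin{equation}
H(X_{W_l^{(\alpha+1)}}) = H(X_{W_{\langle l+1\rangle}^{(\alpha)}}) + H(X_{l} \mid X_{W_{\langle l+1\rangle}^{(\alpha)}}).
\label{eq:sw-plan2}
\end{equation}
Summing \eqref{eq:sw-plan1} and \eqref{eq:sw-plan2} over $l=1,\ldots,L$, the sum of the two ``leading'' entropy terms on the right-hand sides both equal $\sum_{l=1}^L H(X_{W_l^{(\alpha)}})$ after reindexing the second sum by $l\mapsto\langle l+1\rangle$, so I get
\begin{equation}
2\sum_{l=1}^{L}H(X_{W_l^{(\alpha+1)}}) = 2\sum_{l=1}^{L}H(X_{W_l^{(\alpha)}}) + \sum_{l=1}^{L}\Bigl[H(X_{\langle l+\alpha\rangle}\mid X_{W_l^{(\alpha)}}) + H(X_l \mid X_{W_{\langle l+1\rangle}^{(\alpha)}})\Bigr].
\label{eq:sw-plan3}
\end{equation}
The heart of the argument is then to bound the bracketed ``increment'' sum from above by $\tfrac{2}{\alpha}\sum_{l=1}^L H(X_{W_l^{(\alpha)}})$, since combining that with \eqref{eq:sw-plan3} and rearranging yields exactly the one-step inequality $\tfrac{1}{\alpha}\sum_l H(X_{W_l^{(\alpha)}}) \ge \tfrac{1}{\alpha+1}\sum_l H(X_{W_l^{(\alpha+1)}})$.

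To get that bound I would use, inside each fixed window $W_l^{(\alpha)} = \{l,\langle l+1\rangle,\ldots,\langle l+\alpha-1\rangle\}$, the standard fact that dropping conditioning only increases entropy together with a within-window telescoping: $H(X_{W_l^{(\alpha)}})$ can be expanded as a sum of $\alpha$ conditional entropies $H(X_{\langle l+j\rangle}\mid X_{\langle l\rangle},\ldots,X_{\langle l+j-1\rangle})$ for $j=0,\ldots,\alpha-1$, each of which is at least the corresponding term $H(X_{\langle l+j\rangle}\mid X_{W_l^{(\alpha)}\setminus\{\langle l+j\rangle\}})$. The precise combinatorial accounting — showing that when we sum the increment terms over all $l$ and compare with the sum over all $l$ and all within-window positions $j$, each conditional-entropy contribution is counted with the right multiplicity so that the factor $1/\alpha$ emerges — is the step I expect to be the main obstacle; it is essentially the circular analogue of Han's double-counting and must be done so that the ``boundary'' terms $H(X_{\langle l+\alpha\rangle}\mid X_{W_l^{(\alpha)}})$ and $H(X_l\mid X_{W_{\langle l+1\rangle}^{(\alpha)}})$ are matched against interior conditional entropies of neighboring windows. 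An alternative, and perhaps cleaner, route for this step is to apply the classical Han inequality \eqref{eq:Han} (Theorem~\ref{thm:Han}) ``locally'': restricting attention to the $\alpha$ variables inside a single length-$\alpha$ window and invoking the one-step version of Han's inequality relating subsets of size $\alpha-1$ to the full set of size $\alpha$, then averaging over the $L$ windows; I would try this first since it lets me quote Theorem~\ref{thm:Han} as a black box rather than re-deriving the counting. Finally, the equality case is immediate: if $X_1,\ldots,X_L$ are mutually independent then $H(X_{W_l^{(\alpha)}}) = \sum_{i\in W_l^{(\alpha)}} H(X_i)$, and by circular symmetry $\sum_{l=1}^L H(X_{W_l^{(\alpha)}}) = \alpha\sum_{i=1}^L H(X_i)$, so both sides of \eqref{eq:SW} equal $\sum_{i=1}^L H(X_i)$.
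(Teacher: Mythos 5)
Your reduction is sound as far as it goes: chaining one-step comparisons is fine, and \eqref{eq:sw-plan3} is a correct identity. But notice that the bound you then need --- that the bracketed increment sum is at most $\tfrac{2}{\alpha}\sum_{l}H(X_{W_l^{(\alpha)}})$ --- is not an auxiliary lemma; it is \emph{exactly} the one-step inequality restated (the increments sum telescopes to $2\sum_l H(X_{W_l^{(\alpha+1)}})-2\sum_l H(X_{W_l^{(\alpha)}})$), so at this point nothing has been gained, and the step you yourself flag as ``the main obstacle'' is the whole theorem. Moreover, neither of the two tools you propose for it can deliver the constant $2/\alpha$. Applying Han's inequality (Theorem~\ref{thm:Han}) \emph{locally} inside a window $W_l^{(\alpha)}$ gives $\sum_{i\in W_l^{(\alpha)}}H(X_i\mid X_{W_l^{(\alpha)}\setminus\{i\}})\leq H(X_{W_l^{(\alpha)}})$, i.e.\ it controls the sum of \emph{all} $\alpha$ fully-conditioned terms by the full window entropy; summing over windows you only get the endpoint terms bounded by $\sum_l H(X_{W_l^{(\alpha)}})$, which is weaker than what you need by a factor of $\alpha/2$ (and is already tight, e.g., when the $X_i$ are independent). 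Worse, the per-window version of what you need, namely $H(X_{l}\mid X_{W_l^{(\alpha)}\setminus\{l\}})+H(X_{\langle l+\alpha-1\rangle}\mid X_{W_l^{(\alpha)}\setminus\{\langle l+\alpha-1\rangle\}})\leq \tfrac{2}{\alpha}H(X_{W_l^{(\alpha)}})$, is simply false: take $\alpha=3$ with the two endpoint variables independent uniform bits and the middle variable constant, so the left side is $2$ while the right side is $4/3$. Hence no argument that works window-by-window (conditioning-reduces-entropy plus local Han plus counting multiplicities) can close the gap; whatever works must exploit cancellation \emph{across} overlapping windows, and that is precisely what your plan leaves open.

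The paper closes this gap by not attempting a direct one-step proof at all. It inducts on the window length and, in the induction step, applies submodularity, $H(X_U)+H(X_V)\geq H(X_{U\cup V})+H(X_{U\cap V})$, to the two \emph{overlapping same-length} windows $U=W_l^{(r)}$ and $V=W_{\langle l+1\rangle}^{(r)}$, whose union is the length-$(r+1)$ window $W_l^{(r+1)}$ and whose intersection is the length-$(r-1)$ window $W_{\langle l+1\rangle}^{(r-1)}$; the resulting length-$(r-1)$ terms are then absorbed using the already-proved inequality for $\alpha=r$ (the induction hypothesis), and rearranging yields the case $\alpha=r+1$. This three-level interplay (lengths $r-1$, $r$, $r+1$), with the induction hypothesis supplying exactly the global control your local accounting cannot, is the missing idea; if you want to salvage your outline, replace the ``increment'' bound by this submodularity-plus-induction step (the base case $\alpha=2$ is just subadditivity, as in your plan). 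Your treatment of the equality case under mutual independence is correct and matches the paper.
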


\begin{proof}
Consider a proof via an induction on $\alpha$. First, for $\alpha=2$ we have
\begin{eqnarray}
\sum_{l=1}^{L}H(X_{W_l^{(1)}}) &=& \sum_{l=1}^{L}H(X_l)\\
& = & \sum_{l=1}^L\frac{H(X_l)+H(X_{\langle l+1\rangle})}{2}\\
& \geq & \sum_{l=1}^L\frac{H(X_l,X_{\langle l+1\rangle})}{2} \label{L4-1}\\
&=& \sum_{l=1}^{L}\frac{H(X_{W_l^{(2)}})}{2}
\end{eqnarray}
where \eqref{L4-1} follows from the independence bound on entropy.

Next, assume that the inequality \eqref{eq:SW} holds for $\alpha=r$ for some $r \in \{2,\ldots,L-1\}$, i.e.,
\begin{align}
\sum_{l=1}^{L}\frac{H(X_{W_l^{(r-1)}})}{r-1} \geq \sum_{l=1}^{L}\frac{H(X_{W_l^{(r)}})}{r}.\label{eq:swInd}
\end{align}
We have 
\begin{eqnarray}
\sum_{l=1}^{L}H(X_{W_l^{(r)}}) &=& \frac{1}{2}\sum_{l=1}^{L}\left[H(X_{W_l^{(r)}})+H(X_{W_{\langle l+1\rangle}^{(r)}})\right]\\
& \geq & \frac{1}{2}\sum_{l=1}^{L}\left[H(X_{W_l^{(r+1)}})+H(X_{W_{\langle l+1\rangle}^{(r-1)}})\right]\label{L4-2}\\
& = & \frac{1}{2}\sum_{l=1}^{L}H(X_{W_l^{(r+1)}})+\frac{1}{2}\sum_{l=1}^{L}H(X_{W_{\langle l+1\rangle}^{(r-1)}})\\
& = & \frac{1}{2}\sum_{l=1}^{L}H(X_{W_l^{(r+1)}})+\frac{1}{2}\sum_{l=1}^{L}H(X_{W_l^{(r-1)}})\label{L4-3}\\
& \geq & \frac{1}{2}\sum_{l=1}^{L}H(X_{W_l^{(r+1)}})+\frac{1}{2}\cdot\frac{r-1}{r}\sum_{l=1}^{L}H(X_{W_l^{(r)}})\label{L4-4}
\end{eqnarray}
where \eqref{L4-2} follows from the submodularity of entropy \cite[Ch.~14.A]{Yeu-B08}
\begin{equation}
H(X_U)+H(X_V) \geq H(X_{U \cup V})+H(X_{U \cap V})
\end{equation}
for $U=W_l^{(r)}$ and $V=W_{\langle l+1\rangle}^{(r)}$ so $U \cup V =W_l^{(r+1)}$ and $U \cap V=W_{\langle l+1\rangle}^{(r-1)}$, and \eqref{L4-4} follows from the induction assumption \eqref{eq:swInd}. Moving the second term on the right-hand side of \eqref{L4-4} to the left and multiplying both sides by $\frac{2}{r+1}$, we have
\begin{equation}
\frac{1}{r}\sum_{l=1}^{L}H(X_{W_l^{(r)}}) \geq \frac{1}{r+1}\sum_{l=1}^{L}H(X_{W_l^{(r+1)}}).
\end{equation}
We have thus proved that the inequality \eqref{eq:SW} also holds for $\alpha=r+1$.

Finally, note that when $X_1,\ldots,X_L$ are mutually independent, we have
\begin{equation} 
\sum_{l=1}^{L}\frac{H(X_{W_l^{(\alpha)}})}{\alpha} = \sum_{l=1}^{L}H(X_l), \quad \forall \alpha=1,\ldots,L.
\end{equation}
This completes the proof of Theorem~\ref{thm:SW}. 
\end{proof}

Note that for $\alpha=L$, the classical subset entropy inequality of Han \eqref{eq:Han} and the sliding-window subset entropy inequality \eqref{eq:SW} are equivalent, and both can be equivalently written as
\begin{align}
\frac{1}{L-1}\sum_{l=1}^L H(X_{\Omega_L\setminus\{l\}}) \geq H(X_{\Omega_L}).
\label{eq:Han2}
\end{align} 
For a \emph{general} $\alpha$, the classical subset entropy inequality of Han \eqref{eq:Han} can be derived from the sliding-window subset entropy inequality \eqref{eq:SW} via a simple \emph{permutation} argument as follows. Let $\pi$ be a permutation on $\Omega_L$. For any $l=1,\ldots,L$ and $\alpha=1,\ldots,L$, let
\begin{equation}
W_{\pi,l}^{(\alpha)} := \{\pi^{-1}(l),\pi^{-1}(\langle l+1\rangle),\ldots,\pi^{-1}(\langle l+\alpha-1\rangle)\}.
\end{equation}
By Theorem~\ref{thm:SW}, we have
\begin{equation}
\frac{1}{\alpha-1}\sum_{l=1}^{L}H(X_{W_{\pi,l}^{(\alpha-1)}}) \geq \frac{1}{\alpha}\sum_{l=1}^{L}H(X_{W_{\pi,l}^{(\alpha)}})
\label{eq:SW2}
\end{equation}
for any $\alpha=2,\ldots,L$. Averaging \eqref{eq:SW2} over all possible permutations $\pi$, we have
\begin{equation}
\frac{1}{L!}\sum_{\pi}\left[\frac{1}{\alpha-1}\sum_{l=1}^{L}H(X_{W_{\pi,l}^{(\alpha-1)}})\right] \geq \frac{1}{L!}\sum_{\pi}\left[\frac{1}{\alpha}\sum_{l=1}^{L}H(X_{W_{\pi,l}^{(\alpha)}})\right].
\label{eq:SW3}
\end{equation}
Note that for any $\alpha=1,\ldots,L$,
\begin{equation}
\sum_{\pi}\sum_{l=1}^{L}H(X_{W_{\pi,l}^{(\alpha)}}) = L\cdot\alpha!(L-\alpha)!\sum_{U \in \Omega_L^{(\alpha)}}H(X_U). 
\label{eq:SW4}
\end{equation}
Substituting \eqref{eq:SW4} into \eqref{eq:SW3} and dividing both sides of the inequality by $L$ establish the classical subset entropy inequality of Han \eqref{eq:Han}.

\subsubsection{The minimum sum rate}
The sliding-window subset entropy inequality \eqref{eq:SW} can be used to provide an alternative proof of the optimality of superposition coding for achieving the minimum sum rate as follows. Let us first show that 
\begin{align}
\frac{1}{L}\sum_{l=1}^LH(X_l) &= \frac{1}{L}\sum_{l=1}^LH(X_{W_l^{(1)}})\\
& \geq
\frac{1}{L}
\sum_{l=1}^L\frac{H(X_{W_l^{(m)}}|S_1^n,\ldots,S_m^n)}{m}+
n\sum_{\alpha=1}^m\frac{H(S_\alpha)}{\alpha}-n\sum_{\alpha=1}^m\frac{\delta_\alpha^{(n)}}{\alpha}
\label{eq:SMDC-T100}
\end{align}
for any $m=1,\ldots,L$. 

Consider a proof via an induction on $m$. When $m=1$, \eqref{eq:SMDC-T100} can be written as
\begin{equation}
\frac{1}{L}\sum_{l=1}^LH(X_l) \geq \frac{1}{L}\sum_{l=1}^LH(X_l|S_1^n)+nH(S_1)-n\delta_1^{(n)}
\end{equation}
which can be obtained via a uniform averaging of \eqref{eq:PR2} for $\alpha=2$ and $V=\{l\}$ for $l=1,\ldots,L$. Now assume that the inequality \eqref{eq:SMDC-T100} holds for $m=r-1$ for some $r \in \{2,\ldots,L\}$. We have
\begin{align}
\frac{1}{L}\sum_{l=1}^LH(X_l) & \geq \frac{1}{L}
\sum_{l=1}^L\frac{H(X_{W_l^{(r-1)}}|S_1^n,\ldots,S_{r-1}^n)}{r-1}+
n\sum_{\alpha=1}^{r-1}\frac{H(S_\alpha)}{\alpha}-n\sum_{\alpha=1}^{r-1}\frac{\delta_\alpha^{(n)}}{\alpha}\\
& \geq \frac{1}{L}
\sum_{l=1}^L\frac{H(X_{W_l^{(r)}}|S_1^n,\ldots,S_{r-1}^n)}{r}+
n\sum_{\alpha=1}^{r-1}\frac{H(S_\alpha)}{\alpha}-n\sum_{\alpha=1}^{r-1}\frac{\delta_\alpha^{(n)}}{\alpha}
\label{eq:SMDC-T200}
\end{align}
where \eqref{eq:SMDC-T200} follows from the sliding-window subset entropy inequality \eqref{eq:SW} with $\alpha=r$. Letting $\alpha=r+1$ and $V=W_l^{(r)}$ in \eqref{eq:PR2}, we have
\begin{align}
H(X_{W_l^{(r)}}|S_1^n,\ldots,S_{r-1}^n) & \geq H(X_{W_l^{(r)}}|S_1^n,\ldots,S_{r}^n)+nH(S_{r})-n\delta_{r}^{(n)}.
\label{eq:SMDC-T300}
\end{align}
Substituting \eqref{eq:SMDC-T300} into \eqref{eq:SMDC-T200} gives
\begin{align}
\frac{1}{L}\sum_{l=1}^LH(X_l)
& \geq
\frac{1}{L}
\sum_{l=1}^L\frac{H(X_{W_l^{(r)}}|S_1^n,\ldots,S_{r}^n)}{r}+
n\sum_{\alpha=1}^{r}\frac{H(S_\alpha)}{\alpha}-n\sum_{\alpha=1}^{r}\frac{\delta_\alpha^{(n)}}{\alpha}.
\end{align}
This completes the proof of the induction step and hence \eqref{eq:SMDC-T100}.

Now let $m=L$, and we have
\begin{align}
\frac{1}{L}\sum_{l=1}^LH(X_l) & \geq 
\frac{1}{L}
\sum_{l=1}^L\frac{H(X_{W_l^{(L)}}|S_1^n,\ldots,S_L^n)}{L}+
n\sum_{\alpha=1}^L\frac{H(S_\alpha)}{\alpha}-n\sum_{\alpha=1}^L\frac{\delta_\alpha^{(n)}}{\alpha}
\\
& \geq n\sum_{\alpha=1}^L\frac{H(S_\alpha)}{\alpha}-n\sum_{\alpha=1}^L\frac{\delta_\alpha^{(n)}}{\alpha}.
\label{eq:SMDC-T400}
\end{align}
Substituting \eqref{eq:Rate2} into \eqref{eq:SMDC-T400} and dividing both sides of the inequality by $n$, we have
\begin{equation}
\frac{1}{L}\sum_{l=1}^L(R_l+\epsilon) \geq \sum_{\alpha=1}^L\frac{H(S_\alpha)}{\alpha}-\sum_{\alpha=1}^L\frac{\delta_\alpha^{(n)}}{\alpha}.
\end{equation}
Finally, letting $n \rightarrow \infty$ and $\epsilon \rightarrow 0$ completes the proof of \eqref{eq:SR-Conv}, i.e., superposition coding can achieve the minimum sum rate for the general SMDC problem.

Note that unlike the original proof of \cite{RYH-IT97}, which uses the classical subset entropy inequality of Han \cite{Han-IC78} and hence involves \emph{all} nonempty subsets $U$ of $\Omega_L$, our proof relies on the sliding-window subset entropy inequality \eqref{eq:SW} and hence only involves the subsets $U$ of a sliding-window type, i.e., $U=W_l^{(\alpha)}$ for some $l=1,\ldots,L$ and $\alpha=1,\ldots,L$. Therefore, based on our proof, the converse result \eqref{eq:SR-Conv} remains to be true even if we weaken the asymptotically perfect reconstruction requirement \eqref{eq:PR} to 
\begin{equation}
\mathrm{Pr}\left\{d_U(X_U) \neq (S_1^n,\ldots,S_{|U|}^n)\right\} \leq \epsilon, 
\qquad \forall U \in \left\{W_l^{(\alpha)}: l=1,\ldots,L\; \mbox{and} \; \alpha=1,\ldots,L\right\}.
\end{equation}
This is the \emph{definitive} advantage of our proof over that based on the classical subset entropy inequality of Han \cite{Han-IC78}.

\subsection{The Subset Entropy Inequality of Yeung and Zhang Revisited}\label{sec:YZR}
In this section, we revisit the subset entropy inequality of Yeung and Zhang \eqref{eq:YZ}, which played a key in their proof \cite{YZ-IT99} of the optimality of superposition coding for achieving the entire admissible rate region of the problem. As mentioned previously, in \cite{YZ-IT99} the subset entropy inequality \eqref{eq:YZ} was proved by combining the classical subset entropy inequality of Han \cite{Han-IC78} and a number of analysis results on the sequence of linear programs \eqref{eq:dual2}. However, the inequality, as stated in Theorem~\ref{thm:YZ}, does not even directly imply the classical subset entropy inequality of Han \cite{Han-IC78}. The reason is that Theorem~\ref{thm:YZ} merely asserts the existence of a set of optimal solutions $c_{\boldsymbol{\lambda}}^{(\alpha)}$, $\alpha=1,\ldots,L$, that satisfies the subset entropy inequality \eqref{eq:YZ}, rather than providing a sufficient condition for the inequality to hold. Below, we shall use a subset entropy inequality recently proved by Madiman and Tetali \cite{MT-IT10} to summarize the analysis results of \cite{YZ-IT99} on the sequence of linear programs \eqref{eq:dual2} into a succinct sufficient condition for the subset entropy inequality \eqref{eq:YZ} to hold. 

\subsubsection{A Subset Entropy Inequality of Madiman and Tetali}
Consider a \emph{hypergraph} $(U,\mathcal{V})$ where $U$ is a finite ground set and $\mathcal{V}$ is a collection of subsets of $U$. A function $g: \mathcal{V} \rightarrow \mathbb{R}^+$ is called a \emph{fractional cover} of $(U,\mathcal{V})$ if it satisfies 
\begin{equation}
\sum_{\{V \in \mathcal{V}: V \ni i\}}g(V) \geq 1, \quad \forall i \in U.
\end{equation}

\begin{theorem}[A subset entropy inequality of Madiman and Tetali \cite{MT-IT10}]\label{thm:MT}
Let $(U,\mathcal{V})$ be a hypergraph, and let $g$ be a fractional cover of $(U,\mathcal{V})$. Then
\begin{equation}
\sum_{V \in \mathcal{V}}g(V)H(X_V) \geq H(X_U) 
\label{eq:MT}
\end{equation}
for any collection of jointly distributed random variables $X_U$.
\end{theorem}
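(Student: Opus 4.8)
The plan is to prove the subset entropy inequality of Madiman and Tetali (Theorem~\ref{thm:MT}) by reducing it to a statement about \emph{conditional} entropies and then telescoping along an arbitrary ordering of the ground set $U$. Write $U = \{u_1,\ldots,u_n\}$ in some fixed order, and for each $V \in \mathcal{V}$ expand $H(X_V)$ using the chain rule as $H(X_V) = \sum_{i \in V} H(X_i \mid X_{V \cap \{u_1,\ldots,u_{i-1}\}})$, where the index $i$ ranges over the positions of the elements of $V$ with respect to the chosen order. The key monotonicity fact is that conditioning reduces entropy: since $V \cap \{u_1,\ldots,u_{i-1}\} \subseteq \{u_1,\ldots,u_{i-1}\}$, we have $H(X_i \mid X_{V \cap \{u_1,\ldots,u_{i-1}\}}) \geq H(X_i \mid X_{\{u_1,\ldots,u_{i-1}\}})$. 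Hence each term in the chain-rule expansion of $H(X_V)$ dominates the corresponding term in the chain-rule expansion of $H(X_U)$.

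Next I would combine this with the fractional-cover hypothesis. For a fixed element $u_i \in U$, summing the bound above over all $V \in \mathcal{V}$ with $V \ni u_i$, weighted by $g(V)$, gives
\begin{equation}
\sum_{\{V \in \mathcal{V}: V \ni u_i\}} g(V)\, H\bigl(X_i \mid X_{V \cap \{u_1,\ldots,u_{i-1}\}}\bigr) \;\geq\; \left(\sum_{\{V \in \mathcal{V}: V \ni u_i\}} g(V)\right) H\bigl(X_i \mid X_{\{u_1,\ldots,u_{i-1}\}}\bigr) \;\geq\; H\bigl(X_i \mid X_{\{u_1,\ldots,u_{i-1}\}}\bigr),
\end{equation}
where the last inequality is exactly the fractional-cover condition $\sum_{\{V \ni u_i\}} g(V) \geq 1$ together with the nonnegativity of conditional entropy. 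Then I would sum this over $i = 1,\ldots,n$. On the left-hand side, reversing the order of summation recovers $\sum_{V \in \mathcal{V}} g(V) \sum_{i \in V} H(X_i \mid X_{V \cap \{u_1,\ldots,u_{i-1}\}}) = \sum_{V \in \mathcal{V}} g(V) H(X_V)$; on the right-hand side, the chain rule gives $\sum_{i=1}^{n} H(X_i \mid X_{\{u_1,\ldots,u_{i-1}\}}) = H(X_U)$. This yields \eqref{eq:MT}.

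The steps I expect to need the most care are bookkeeping ones rather than conceptual ones: making the indexing consistent so that ``the position of element $u_i$ within the subset $V$'' is handled cleanly (it is cleanest to index everything by the global order on $U$ and let $X_{V \cap \{u_1,\ldots,u_{i-1}\}}$ simply denote the empty conditioning when that intersection is empty), and justifying the exchange of the two finite sums. The only genuine mathematical ingredients are the chain rule for entropy, the fact that conditioning on fewer variables gives a larger conditional entropy, and nonnegativity of (conditional) entropy — all elementary — so there is no serious obstacle; the main thing is to present the telescoping argument transparently. If one prefers a shorter route, the same computation can be phrased as: fix any total order, apply the ``conditioning reduces entropy'' bound termwise, and invoke the fractional-cover inequality coordinatewise before resumming; this is the argument of Madiman and Tetali and I would reproduce it in essentially that form.
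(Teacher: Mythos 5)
Your argument is correct: the chain-rule expansion along a fixed ordering of $U$, the ``conditioning reduces entropy'' bound termwise, the exchange of the two finite sums, and the fractional-cover condition together with nonnegativity of conditional entropy give exactly \eqref{eq:MT}. Note that the paper itself offers no proof of Theorem~\ref{thm:MT} --- it is quoted directly from Madiman and Tetali \cite{MT-IT10} --- and what you have written is precisely the standard proof from that reference (the fractional generalization of Shearer's lemma), so there is nothing to reconcile beyond the bookkeeping you already flagged.
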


The following corollary provides a ``chain" form of the subset entropy inequality \eqref{eq:MT}. Let $M$ be a positive integer, and let $\Sigma$ be a finite ground set. Let $\Sigma^{(\alpha)}$ be a collection of subsets of $\Sigma$ for each $\alpha=1,\ldots,M,$. Assuming that $\Sigma^{(\alpha)}$, $\alpha=1,\ldots,M$, are \emph{mutually exclusive}, then $\{\Sigma^{(\alpha)}:\alpha=1,\ldots,M\}$ \emph{induces} a collection of hypergraphs $\{(U,\mathcal{V}_U): U \in \cup_{\alpha=2}^M\Sigma^{(\alpha)}\}$ where
\begin{align}
\mathcal{V}_U := \{V\in \Sigma^{(\alpha-1)}:V \subseteq U\}, \quad \forall U \in \Sigma^{(\alpha)}.
\end{align}
We shall term each subset $V \in \mathcal{V}_U$ a ``child" of $U$. For convenience, we shall also define
\begin{align}
\mathcal{U}_V := \{U\in \Sigma^{(\alpha)}:U \supseteq V\}, \quad \forall V \in \Sigma^{(\alpha-1)}
\end{align}
and term each subset $U \in \mathcal{U}_V$ a ``parent" of $V$. 

\begin{coro}\label{cor:MT}
Let $c: \cup_{\alpha=1}^M\Sigma^{(\alpha)} \rightarrow \mathbb{R}^+$. For any $\alpha=2,\ldots,M$, if there exists a collection of functions $\{g_U: U \in \Sigma^{(\alpha)}\}$ for which each $g_U$ is a fractional cover of $(U,\mathcal{V}_U)$ and such that
\begin{equation}
c(V) = \sum_{U \in \mathcal{U}_V}g_U(V)c(U), \quad \forall V \in \Sigma^{(\alpha-1)}
\label{eq:MT2}
\end{equation}
we have
\begin{equation}
\sum_{V \in \Sigma^{(\alpha-1)}}c(V)H(X_V) \geq \sum_{U \in \Sigma^{(\alpha)}}c(U)H(X_U)
\label{eq:MT3}
\end{equation}
for any collection of jointly distributed random variables $X_\Sigma$.
\end{coro}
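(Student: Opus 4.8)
The plan is to reduce the ``chain'' inequality \eqref{eq:MT3}, for each fixed $\alpha$, to a nonnegatively weighted superposition of the Madiman--Tetali inequality \eqref{eq:MT} applied to the induced hypergraphs $(U,\mathcal{V}_U)$, $U\in\Sigma^{(\alpha)}$, followed by an interchange of the order of summation.

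First I would fix $\alpha\in\{2,\ldots,M\}$ and take the collection $\{g_U:U\in\Sigma^{(\alpha)}\}$ of fractional covers supplied by the hypothesis. For each $U\in\Sigma^{(\alpha)}$, since $g_U$ is a fractional cover of $(U,\mathcal{V}_U)$, Theorem~\ref{thm:MT} gives
\[
\sum_{V\in\mathcal{V}_U}g_U(V)H(X_V)\ \geq\ H(X_U).
\]
Multiplying through by $c(U)\geq 0$ and summing over all $U\in\Sigma^{(\alpha)}$ then yields
\[
\sum_{U\in\Sigma^{(\alpha)}}c(U)\sum_{V\in\mathcal{V}_U}g_U(V)H(X_V)\ \geq\ \sum_{U\in\Sigma^{(\alpha)}}c(U)H(X_U),
\]
whose right-hand side is already the right-hand side of \eqref{eq:MT3}.

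Second, I would rewrite the left-hand side by swapping the two (finite) summations, using the symmetry $V\in\mathcal{V}_U\iff U\in\mathcal{U}_V$ (both say $V\subseteq U$ with $V\in\Sigma^{(\alpha-1)}$ and $U\in\Sigma^{(\alpha)}$):
\[
\sum_{U\in\Sigma^{(\alpha)}}c(U)\sum_{V\in\mathcal{V}_U}g_U(V)H(X_V)
=\sum_{V\in\Sigma^{(\alpha-1)}}H(X_V)\sum_{U\in\mathcal{U}_V}g_U(V)c(U)
=\sum_{V\in\Sigma^{(\alpha-1)}}c(V)H(X_V),
\]
where the last step is exactly the hypothesis \eqref{eq:MT2}. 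Chaining the two displays establishes \eqref{eq:MT3}.

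I do not expect a genuine obstacle: the proof is essentially a bookkeeping interchange of finite sums. The only points needing care are (i) that $g_U(V)$ is well-defined wherever it appears in the double sum, which holds because $V\in\mathcal{V}_U\iff U\in\mathcal{U}_V$ and the mutual exclusivity of the $\Sigma^{(\alpha)}$ makes membership in consecutive levels unambiguous; and (ii) the nonnegativity of $c(U)$, which is what lets us preserve the inequality direction when we multiply through and add. Everything else is routine.
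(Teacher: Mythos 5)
Your proposal is correct and follows essentially the same route as the paper's proof: apply the Madiman--Tetali inequality to each hypergraph $(U,\mathcal{V}_U)$, multiply by the nonnegative weight $c(U)$, sum over $U\in\Sigma^{(\alpha)}$, and then interchange the order of summation so that the hypothesis \eqref{eq:MT2} identifies the resulting inner sum with $c(V)$. No gaps.
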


\begin{proof}
Fix $\alpha \in \{2,\ldots,M\}$. For any $U \in \Sigma^{(\alpha)}$, $g_U$ is a fractional cover of $(U,\mathcal{V}_U)$. By the subset entropy inequality of Madiman and Tetali \eqref{eq:MT}, we have
\begin{align}
\sum_{V \in \mathcal{V}_U}g_U(V)H(X_V) \geq H(X_U), \quad \forall U \in \Sigma^{(\alpha)}.
\label{eq:MT4}
\end{align}
Multiplying both sides of \eqref{eq:MT4} by $c(U)$ and summing over $U \in \Sigma^{(\alpha)}$, we have
\begin{align}
\sum_{U \in \Sigma^{(\alpha)}}\sum_{V \in \mathcal{V}_U}c(U)g_U(V)H(X_V) \geq \sum_{U \in \Sigma^{(\alpha)}}c(U)H(X_U).
\label{eq:MT5}
\end{align}
Note that
\begin{align}
\sum_{U \in \Sigma^{(\alpha)}}\sum_{V \in \mathcal{V}_U}c(U)g_U(V)H(X_V) &=\sum_{V \in \Sigma^{(\alpha-1)}}\left(\sum_{U \in \mathcal{U}_V}g_U(V)c(U)\right)H(X_V)\\
&=\sum_{V \in \Sigma^{(\alpha-1)}}c(V)H(X_V)\label{eq:MT6}
\end{align}
where \eqref{eq:MT6} follows \eqref{eq:MT2}. Substituting \eqref{eq:MT6} into \eqref{eq:MT5} completes the proof of the corollary.
\end{proof}

\subsubsection{Connections to the Subset Entropy Inequalities of Han and Yeung--Zhang}
Specifying $\Sigma=\Omega_L$, $M=L$, and $\Sigma^{(\alpha)}=\Omega_L^{(\alpha)}$ for $\alpha=1,\ldots,L$, the subset entropy inequality of Madiman and Tetali can be used to provide a \emph{unifying} proof for both the subset entropy inequality of Han and the subset entropy inequality of Yeung and Zhang. Note that the choice $\{\Sigma^{(\alpha)}=\Omega_L^{(\alpha)}:\alpha=1,\ldots,L\}$ is \emph{regular} in that each subset $U \in \Omega_L^{(\alpha)}$ has exactly $\alpha$ children in $\Omega_L^{(\alpha-1)}$, and each subset $V \in \Omega_L^{(\alpha-1)}$ has exactly $L-(\alpha-1)$ parents in $\Omega_L^{(\alpha)}$.

To see how the subset entropy inequality of Madiman and Tetali \eqref{eq:MT} implies the subset entropy inequality of Han \eqref{eq:Han}, let 
\begin{align}
c(U):=\frac{1}{\alpha
\left(
\begin{array}{c}
  L   \\
  \alpha   
\end{array}
\right)
}, \quad \forall U \in \Omega_L^{(\alpha)}\; \mbox{and} \; \alpha=1,\ldots,L\label{eq:cHan}
\end{align}
and
\begin{align}
g_U(V):=\frac{1}{\alpha-1}, \quad \forall U \in \Omega_L^{(\alpha)}, \; V \in \mathcal{V}_U, \; \mbox{and} \; \alpha=2,\ldots,L.
\end{align}
For any $\alpha=2,\ldots,L$ and $U \in \Omega_L^{(\alpha)}$,
\begin{align}
\sum_{\{V \in \mathcal{V}_U: V \ni i\}}g_U(V) =\frac{|\{V \in \mathcal{V}_U: V \ni i\}|}{\alpha-1}=\frac{\alpha-1}{\alpha-1}=1, \quad \forall i \in U
\end{align}
so $g_U$ is a \emph{uniform} fractional cover of $(U,\mathcal{V}_U)$. Furthermore, for any $\alpha=2,\ldots,L$ and $V \in \Omega_L^{(\alpha-1)}$ we have
\begin{align}
\sum_{U \in \mathcal{U}_V}g_U(V)c(U)&=
\frac{|\mathcal{U}_V|}{(\alpha-1)\alpha
\left(
\begin{array}{c}
  L   \\
  \alpha   
\end{array}
\right)
}
=
\frac{L-(\alpha-1)}{(\alpha-1)\alpha
\left(
\begin{array}{c}
  L   \\
  \alpha   
\end{array}
\right)
}
=
\frac{1}{(\alpha-1)
\left(
\begin{array}{c}
  L   \\
  \alpha-1   
\end{array}
\right)
}=c(V).
\end{align}
Substituting \eqref{eq:cHan} into \eqref{eq:MT3} immediately gives the subset entropy inequality of Han \eqref{eq:Han}.

To see how the subset entropy inequality of Madiman and Tetali \eqref{eq:MT} implies the subset entropy inequality of Yeung and Zhang \eqref{eq:YZ}, we shall need the following result, which is a synthesis of the analytical results on the sequence of linear programs \eqref{eq:dual2} established in \cite{YZ-IT99}. (For completeness, a sketched proof based on the results of \cite{YZ-IT99} is included in Appendix~\ref{app:A}.) 

\begin{theorem}[A linear programing result of Yeung and Zhang \cite{YZ-IT99}]\label{thm:YZ2}
For any $\boldsymbol{\lambda} \in (\mathbb{R}^+)^L$, any $\alpha=2,\ldots,L$, and any $c_{\boldsymbol{\lambda}}^{(\alpha)}$ which is an optimal solution to the linear program \eqref{eq:dual2} with the optimal value $f_{\alpha}(\boldsymbol{\lambda})>0$, there exists a collection of functions $\{g_U: U \in \Omega_L^{(\alpha)}\}$ for which each $g_U$ is a fractional cover of $(U,\mathcal{V}_U)$ and such that $c_{\boldsymbol{\lambda}}^{(\alpha-1)}=\{c_{\boldsymbol{\lambda}}(V): V \in \Omega_L^{(\alpha-1)}\}$ where
\begin{equation}
c_{\boldsymbol{\lambda}}(V) := \sum_{U \in \mathcal{U}_V}g_U(V)c_{\boldsymbol{\lambda}}(U)
\label{eq:YZ2}
\end{equation}
is an optimal solution to the linear program \eqref{eq:dual2} with $\alpha$ replaced by $\alpha-1$.
\end{theorem}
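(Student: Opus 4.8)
The plan is to strip Theorem~\ref{thm:YZ2} down to its purely linear-programming content in two moves: first pin down the structure of the optimal solutions of \eqref{eq:dual2} by elementary duality, then recast the existence of the covers $\{g_U\}$ as a network-flow feasibility problem whose feasibility is exactly the ``highly technical'' content of \cite{YZ-IT99}. \emph{(Normalization and structure of optima.)} Since \eqref{eq:dual2} and the identity \eqref{eq:YZ2} are invariant under permuting the coordinates of $\boldsymbol{\lambda}$, assume $\lambda_1 \ge \cdots \ge \lambda_L$. If $\lambda_L = 0$ then every feasible point of \eqref{eq:dual2} vanishes on all $U \ni L$, so the problem collapses onto the ground set $\Omega_{L-1}$ (and one may attach to every such $U$ the trivial uniform cover $g_U \equiv \frac{1}{\alpha-1}$, which contributes nothing to \eqref{eq:YZ2}); iterating, reduce to $\lambda_L > 0$, with at least $\alpha$ coordinates surviving because $f_\alpha(\boldsymbol{\lambda}) > 0$. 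By LP duality, \eqref{eq:dual2} equals the fractional vertex-cover program $\min\{\sum_l \lambda_l r_l : \sum_{l \in U} r_l \ge 1\ \forall U \in \Omega_L^{(\alpha)},\ r \ge 0\}$, whose optimum is attained, by the rearrangement inequality, at one of the points $r^{(j)}$ ($0 \le j \le \alpha-1$) given by $r^{(j)}_l = 0$ for $l \le j$ and $r^{(j)}_l = \frac{1}{\alpha-j}$ for $l > j$; hence $f_\alpha(\boldsymbol{\lambda}) = \min_{0 \le j \le \alpha-1}\frac{\lambda_{j+1}+\cdots+\lambda_L}{\alpha-j}$, attained at the largest minimizer $j^*_\alpha$. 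Complementary slackness against $r^{(j^*_\alpha)}$ then forces every optimal $c^{(\alpha)}_{\boldsymbol{\lambda}}$ to be supported on $\{U \in \Omega_L^{(\alpha)}: \{1,\dots,j^*_\alpha\} \subseteq U\}$ and to satisfy $\sum_{\{U \in \Omega_L^{(\alpha)}: U \ni l\}}c^{(\alpha)}_{\boldsymbol{\lambda}}(U) = \lambda_l$ for every $l > j^*_\alpha$; a short computation with the value formula also gives $j^*_{\alpha-1} \le j^*_\alpha$, $f_\alpha(\boldsymbol{\lambda}) \le \lambda_{j^*_\alpha}$, and $(\alpha-1)f_{\alpha-1}(\boldsymbol{\lambda}) \ge \alpha f_\alpha(\boldsymbol{\lambda})$.

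\emph{(Reduction to a flow.)} Put $h_U(V) := g_U(V)\,c^{(\alpha)}_{\boldsymbol{\lambda}}(U) \ge 0$ for $U \in \Omega_L^{(\alpha)}$ and $V = U \setminus \{i\}$, $i \in U$. Then $g_U$ being a fractional cover of $(U,\mathcal{V}_U)$ is the family of facet inequalities $\sum_{\{V \subset U: V \ni i'\}}h_U(V) \ge c^{(\alpha)}_{\boldsymbol{\lambda}}(U)$, $i' \in U$ (vacuous when $c^{(\alpha)}_{\boldsymbol{\lambda}}(U) = 0$, so those $U$ may be ignored), and summing over $i' \in U$ forces the total out-flow $\sum_{V \subset U}h_U(V) \ge \frac{\alpha}{\alpha-1}c^{(\alpha)}_{\boldsymbol{\lambda}}(U)$. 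The candidate $c^{(\alpha-1)}_{\boldsymbol{\lambda}}(V) = \sum_{U \in \mathcal{U}_V}h_U(V)$ is automatically nonnegative, and it is an \emph{optimal} solution of \eqref{eq:dual2} at level $\alpha-1$ precisely when the in-flow $\sum_{\{V: V \ni l\}}c^{(\alpha-1)}_{\boldsymbol{\lambda}}(V)$ is $\le \lambda_l$ for all $l$, is $= \lambda_l$ for $l > j^*_{\alpha-1}$, and the total $\sum_V c^{(\alpha-1)}_{\boldsymbol{\lambda}}(V)$ equals $f_{\alpha-1}(\boldsymbol{\lambda})$. This is a linear feasibility system in $\{h_U(V)\}$; by Farkas' lemma --- equivalently, by the max-flow/min-cut feasibility criterion for the bipartite transportation network with left nodes the active $U$'s, right nodes the $V$'s, the facet lower bounds above, and the coordinate upper bounds above --- it is solvable iff a finite family of cut inequalities holds.

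\emph{(Verifying the cuts --- the main obstacle.)} What remains, and what I expect to be the crux, is to exhibit an actual flow, i.e., to verify those cut inequalities; this is exactly the step for which the detailed ``block'' description of the optimal solutions of \eqref{eq:dual2} from \cite{YZ-IT99} is needed. I would anchor it on the uniform choice: on the active ground set $\{j^*_\alpha+1,\dots,L\}$ the given $c^{(\alpha)}_{\boldsymbol{\lambda}}$ saturates every coordinate, so assigning $g_U \equiv \frac{1}{\alpha-1}$ to each active $U$ --- a fractional cover by exactly the computation used for Han's inequality in Section~\ref{sec:YZR} --- pushes $c^{(\alpha)}_{\boldsymbol{\lambda}}$ to a \emph{feasible} level-$(\alpha-1)$ point whose in-flow already equals $\lambda_l$ at every $l > j^*_\alpha$ and equals $f_\alpha(\boldsymbol{\lambda})$ at every $l \le j^*_\alpha$. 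The work is then to perturb the covers so as to (i) saturate the coordinates $j^*_{\alpha-1}+1,\dots,j^*_\alpha$, which must reach $\lambda_l$ at level $\alpha-1$ although they lie in the inactive prefix at level $\alpha$, (ii) raise the total to $f_{\alpha-1}(\boldsymbol{\lambda})$ by increasing the in-flow at the coordinates $\le j^*_{\alpha-1}$, which have slack $\lambda_l - f_\alpha(\boldsymbol{\lambda}) \ge 0$, all while (iii) leaving the coordinates $> j^*_\alpha$ untouched and keeping every cover inequality valid. The possibility of such a perturbation comes down to the numerical inequalities that make $j^*_{\alpha-1}$ a minimizer of the level-$(\alpha-1)$ value formula --- in particular $f_{\alpha-1}(\boldsymbol{\lambda}) \le \lambda_l$ for $l \le j^*_{\alpha-1}$ supplies the room in (ii) --- together with a routing argument whose bookkeeping, worst when $j^*_\alpha - j^*_{\alpha-1} > 1$ and when the minimizers are non-unique so that several ``balanced types'' of solution must be matched up, is the real content of the proof. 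Everything preceding it is elementary linear algebra; for a clean writeup one simply quotes the relevant structural lemmas of \cite{YZ-IT99}, which is what the sketch in Appendix~\ref{app:A} does.
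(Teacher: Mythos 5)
Your preliminary analysis is sound and, in fact, parallels what Yeung and Zhang establish: the closed form $f_\alpha(\boldsymbol{\lambda})=\min_{0\le j\le\alpha-1}\frac{\lambda_{j+1}+\cdots+\lambda_L}{\alpha-j}$ (after sorting), the complementary-slackness conclusion that every optimal $c_{\boldsymbol{\lambda}}^{(\alpha)}$ is supported on sets containing the prefix $\{1,\ldots,j^*_\alpha\}$ and saturates the coordinates $l>j^*_\alpha$, and the observation that the uniform push-down $g_U\equiv\frac{1}{\alpha-1}$ yields a feasible level-$(\alpha-1)$ point of total weight $\frac{\alpha}{\alpha-1}f_\alpha(\boldsymbol{\lambda})$. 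But the theorem you are asked to prove is precisely the existence of covers $\{g_U\}$ whose push-down \eqref{eq:YZ2} is \emph{optimal} at level $\alpha-1$, and your write-up stops exactly there: recasting existence as a transportation-type feasibility system and invoking Farkas/max-flow tells you it is solvable if and only if certain cut inequalities hold, but you never verify a single one of them, and you explicitly defer the "routing argument" to unspecified "structural lemmas of \cite{YZ-IT99}." That deferred step is not bookkeeping; it is the entire content of Theorem~\ref{thm:YZ2}. Moreover, your proposed perturbation of the uniform base point meets a concrete obstruction you do not resolve: after the uniform push-down every coordinate $l>j^*_\alpha$ is already saturated at $\lambda_l$, and each child $V$ of an active $U$ still contains all but one element of $U$, hence (when $\alpha-j^*_\alpha\ge 2$) at least one saturated coordinate; so raising the total from $\frac{\alpha}{\alpha-1}f_\alpha(\boldsymbol{\lambda})$ to $f_{\alpha-1}(\boldsymbol{\lambda})$ cannot be done by adding mass alone, but requires pushing some $g_U(V)$ \emph{below} $\frac{1}{\alpha-1}$ while keeping every per-$U$ cover inequality valid, which is exactly the delicate construction the theorem asserts.

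By contrast, the paper's proof in Appendix~\ref{app:A} actually exhibits the covers: it argues by induction on $L$ (with $\lambda_1\ge\cdots\ge\lambda_L$) and splits into three regimes — $\lambda_1\le\frac{\lambda_2+\cdots+\lambda_N}{\alpha-1}$, where the uniform cover already suffices; $\lambda_1>\frac{\lambda_2+\cdots+\lambda_N}{\alpha-2}$, where all optimal mass sits on sets containing element $1$, so the problem reduces to $(L-1,\alpha-1)$ and the induction hypothesis supplies covers that are lifted back; and the intermediate regime, where an explicit non-uniform cover $g_U=\bigl(1-\frac{\beta}{f_\alpha(\boldsymbol{\lambda})}\bigr)\frac{1}{\alpha-1}+\sum_{m,\tau}\xi_{U,m,\tau}$ is written down and checked to be a fractional cover, with optimality of the resulting $c_{\boldsymbol{\lambda}}^{(\alpha-1)}$ cited from specific equations of \cite{YZ-IT99}. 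Your plan has no counterpart to the induction on $L$, no case analysis, and no candidate for the rebalanced covers in the intermediate regime, so as it stands it is a reformulation of the problem rather than a proof; to complete it you would have to either verify the cut inequalities of your flow network in all three regimes or reproduce constructions equivalent to those in Appendix~\ref{app:A}.
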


Now fix $\boldsymbol{\lambda} \in (\mathbb{R}^+)^L$, and consider the following construction of $c_{\boldsymbol{\lambda}}=\cup_{\alpha=1}^Lc_{\boldsymbol{\lambda}}^{(\alpha)}$. For $\alpha=L$, choose $c_{\boldsymbol{\lambda}}^{(L)}$ to be an arbitrary \emph{optimal} solution to the linear program \eqref{eq:dual2}. For $\alpha=1,\ldots,L-1$, construct $c_{\boldsymbol{\lambda}}^{(\alpha)}$ iteratively as follows. Suppose that $c_{\boldsymbol{\lambda}}^{(\alpha)}$ is already in place for some  $\alpha=2,\ldots,L$ such that $c_{\boldsymbol{\lambda}}^{(\alpha)}$ is an optimal solution to the linear program \eqref{eq:dual2}. If the optimal value $f_{\alpha}(\boldsymbol{\lambda})>0$, construct $c_{\boldsymbol{\lambda}}^{(\alpha-1)}$ according to \eqref{eq:MT2} so $c_{\boldsymbol{\lambda}}^{(\alpha-1)}$ is an optimal solution to the linear program \eqref{eq:dual2} with $\alpha$ replaced by $\alpha-1$. Moreover, by Corollary~\ref{cor:MT} $c_{\boldsymbol{\lambda}}^{(\alpha-1)}$ and $c_{\boldsymbol{\lambda}}^{(\alpha)}$ satisfy the subset entropy inequality of Yeung and Zhang \eqref{eq:YZ}. If, on the other hand, $f_{\alpha}(\boldsymbol{\lambda})=0$, we have $c_{\boldsymbol{\lambda}}(U)=0$ for \emph{all} $U \in \Omega_L^{(\alpha)}$. In this case, choose $c_{\boldsymbol{\lambda}}^{(\alpha-1)}$ to be an arbitrary \emph{optimal} solution to the linear program \eqref{eq:dual2} with $\alpha$ replaced by $\alpha-1$, and $c_{\boldsymbol{\lambda}}^{(\alpha-1)}$ and $c_{\boldsymbol{\lambda}}^{(\alpha)}$ will trivially satisfy the subset entropy inequality of Yeung and Zhang \eqref{eq:YZ}. We have thus constructed for any $\boldsymbol{\lambda} \in (\mathbb{R}^+)^L$, a sequence of $c_{\boldsymbol{\lambda}}^{(\alpha)}$, $\alpha=1,\ldots,L$, such that each $c_{\boldsymbol{\lambda}}^{(\alpha)}$ is an optimal solution to the linear program \eqref{eq:dual2}, and the subset entropy inequality of Yeung and Zhang \eqref{eq:YZ} holds for each $\alpha=2,\ldots,L$.

We mention here that even though both the subset entropy inequality of Han and the subset entropy inequality of Yeung and Zhang can be directly established from the subset entropy inequality of Madiman and Tetali, this is \emph{not} the case for the sliding-window subset entropy inequality \eqref{eq:SW} except for $\alpha=2$ and $L$. This can be seen as follows. 

Let $\Sigma=\Omega_L$, $M=L$, and $\Sigma^{(\alpha)}=\{W_l^{(\alpha)}:l=1,\ldots,L\}$ for $\alpha=1,\ldots,L$. Note that for any $\alpha=1,\ldots,L-1$, each sliding window $W_l^{(\alpha)}$ represents a \emph{different} subset for different $l$. (For $\alpha=L$, all sliding windows $W_l^{(L)}$, $l=1,\ldots,L$, represent the \emph{same} subset $\Omega_L$.) Furthermore, for any $\alpha=2,\ldots,L-1$ each sliding window $W_l^{(\alpha)}$ has only two children: $W_l^{(\alpha-1)}$ and $W_{\langle l+1\rangle}^{(\alpha-1)}$, and each sliding window $W_l^{(\alpha-1)}$ has only two parents: $W_l^{(\alpha)}$ and $W_{\langle l-1\rangle}^{(\alpha)}$. Now consider the elements $l$ and $\langle l+\alpha-1 \rangle$ from $W_l^{(\alpha)}$. Note that among the two children $W_l^{(\alpha-1)}$ and $W_{\langle l+1\rangle}^{(\alpha-1)}$ of $W_l^{(\alpha)}$, $l$ belongs only to $W_l^{(\alpha-1)}$, and $\langle l+\alpha-1 \rangle$ belong only to $W_{\langle l+1\rangle}^{(\alpha-1)}$. Thus, \emph{any} fractional cover $g_{W_l^{(\alpha)}}$ of the hypergraph $(W_l^{(\alpha)},\{W_l^{(\alpha-1)},W_{\langle l+1\rangle}^{(\alpha-1)}\})$ must satisfy 
\begin{align}
g_{W_l^{(\alpha)}}(W_l^{(\alpha-1)}) \geq 1 \quad \mbox{and} \quad g_{W_l^{(\alpha)}}(W_{\langle l+1\rangle}^{(\alpha-1)}) \geq 1.
\end{align}
Now let $c(W_l^{(\alpha)}):=1/\alpha$ for all $l=1,\ldots,L$ and $\alpha=1,\ldots,L-1$. We have
\begin{align}
g_{W_l^{(\alpha)}}(W_l^{(\alpha-1)})c(W_l^{(\alpha)})+g_{W_{\langle l-1\rangle}^{(\alpha)}}(W_l^{(\alpha-1)})c(W_{\langle l-1\rangle}^{(\alpha)}) \geq \frac{2}{\alpha} > \frac{1}{\alpha-1} = c(W_l^{(\alpha-1)})
\end{align}
for any $\alpha>2$. We thus conclude that for any $2<\alpha<L$, the sliding-window subset entropy inequality \eqref{eq:SW} cannot be directly inferred from the subset entropy inequality of Madiman and Tetali.

\subsubsection{A Conditional Subset Entropy Inequality of Yeung and Zhang}
We conclude this section by providing a conditional extension of the subset entropy inequality of Yeung and Zhang, which will play a key role in proving the optimality of superposition coding for achieving the entire admissible rate region of the general S-SMDC problem. We shall start with the following generalization of Corollary~\ref{cor:MT}.

Let $\Sigma$ be a finite ground set, and let $\Sigma^{(\alpha)}$, $\alpha=1,\ldots,M$, be a collection of subsets of $\Sigma$. As before, we shall assume that the collections $\Sigma^{(\alpha)}$, $\alpha=1,\ldots,M$, are mutually exclusive, so $\{\Sigma^{(\alpha)}: \alpha=1,\ldots,M\}$ induces a hypergraph $(U,\mathcal{V}_U)$ for every $U \in \cup_{\alpha=2}^{M}\Sigma^{(\alpha)}$. For each $U \in \Sigma^{(M)}$ let $\mathcal{A}_U$ be a collection of subsets of $\Sigma$, and let $\mathcal{A}^{(M)}:=\{\mathcal{A}_U:U \in \Sigma^{(M)}\}$. For $\alpha=1,\ldots,M-1$, define $\mathcal{A}^{(\alpha)}:=\{\mathcal{A}_U: U \in \Sigma^{(\alpha)}\}$ iteratively as follows. Suppose that $\mathcal{A}^{(\alpha)}$ is already in place for some $\alpha=2,\ldots,M$. Let $\mathcal{A}^{(\alpha-1)}=\{\mathcal{A}_V: V \in \Sigma^{(\alpha-1)}\}$ where
\begin{align}
\mathcal{A}_V := \cup_{U \in \mathcal{U}_V}\mathcal{A}_U.
\label{eq:Av}
\end{align}

\begin{prop}\label{prop:cMT}
For each $U \in \cup_{\alpha=1}^M\Sigma^{(\alpha)}$, let $s(U,\cdot): \mathcal{A}_U \rightarrow \mathbb{R}^+$. For any $\alpha=2,\ldots,M$, if there exists a collection of functions $\{g_U: U \in \Sigma^{(\alpha)}\}$ for which each $g_U$ is a fractional cover of $(U,\mathcal{V}_U)$ and such that
\begin{equation}
s(V,A) = \sum_{\{U \in \mathcal{U}_V: \mathcal{A}_U \ni A\}}g_U(V)s(U,A), \quad \forall V \in \Sigma^{(\alpha-1)}\; \mbox{and} \; A \in \mathcal{A}_V
\label{eq:cMT}
\end{equation}
we have
\begin{equation}
\sum_{V \in \Sigma^{(\alpha-1)}}\sum_{A \in \mathcal{A}_V}s(V,A)H(X_V|X_A) \geq 
\sum_{U \in \Sigma^{(\alpha)}}\sum_{A \in \mathcal{A}_U}s(U,A)H(X_U|X_A)
\label{eq:cMT2}
\end{equation}
for any collection of jointly distributed random variables $X_\Sigma$.
\end{prop}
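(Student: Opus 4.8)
The plan is to re-run the proof of Corollary~\ref{cor:MT} essentially verbatim, with one extra ingredient: a conditional version of the subset entropy inequality of Madiman and Tetali (Theorem~\ref{thm:MT}). Concretely, I would first record that for any hypergraph $(U,\mathcal{V})$, any fractional cover $g$ of $(U,\mathcal{V})$, any index set $A$, and any jointly distributed random variables $X_{U\cup A}$,
\[
\sum_{V \in \mathcal{V}} g(V)\,H(X_V \mid X_A) \;\geq\; H(X_U \mid X_A).
\]
This is immediate from Theorem~\ref{thm:MT}: the condition defining a fractional cover, $\sum_{\{V\in\mathcal{V}:V\ni i\}}g(V)\geq 1$ for all $i\in U$, is purely combinatorial and is unaffected by conditioning, so for each value $a$ in the support of $X_A$ the inequality \eqref{eq:MT}, applied to the collection $X_U$ under the conditional law given $X_A=a$, reads $\sum_{V\in\mathcal{V}} g(V) H(X_V\mid X_A=a) \geq H(X_U\mid X_A=a)$; averaging over $X_A$ gives the displayed inequality. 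No degeneracy issue arises if $A$ meets $U$, since Theorem~\ref{thm:MT} holds for arbitrary, possibly degenerate, random variables.

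With this in hand, fix $\alpha \in \{2,\ldots,M\}$. For every $U \in \Sigma^{(\alpha)}$, $g_U$ is a fractional cover of $(U,\mathcal{V}_U)$, so for every $A \in \mathcal{A}_U$ the conditional Madiman--Tetali inequality yields
\[
\sum_{V \in \mathcal{V}_U} g_U(V)\,H(X_V \mid X_A) \;\geq\; H(X_U \mid X_A).
\]
Multiplying by $s(U,A)\geq 0$ and summing over all $U \in \Sigma^{(\alpha)}$ and $A \in \mathcal{A}_U$ produces, on the right, exactly the right-hand side of \eqref{eq:cMT2}. It then remains to identify the left-hand side, i.e.\ to show
\[
\sum_{U \in \Sigma^{(\alpha)}} \sum_{A \in \mathcal{A}_U} \sum_{V \in \mathcal{V}_U} s(U,A)\,g_U(V)\,H(X_V \mid X_A) = \sum_{V \in \Sigma^{(\alpha-1)}} \sum_{A \in \mathcal{A}_V} s(V,A)\,H(X_V \mid X_A).
\]
I would interchange the outer two summations via the equivalence $V \in \mathcal{V}_U \Leftrightarrow U \in \mathcal{U}_V$, obtaining $\sum_{V \in \Sigma^{(\alpha-1)}}\sum_{U\in\mathcal{U}_V}\sum_{A\in\mathcal{A}_U}(\cdots)$, and then, for each fixed $V$, regroup the $(U,A)$-sum: by \eqref{eq:Av} we have $\mathcal{A}_V = \bigcup_{U \in \mathcal{U}_V}\mathcal{A}_U$, so the set of pairs $\{(U,A): U \in \mathcal{U}_V,\ A \in \mathcal{A}_U\}$ coincides with $\{(U,A): A \in \mathcal{A}_V,\ U \in \mathcal{U}_V,\ \mathcal{A}_U \ni A\}$. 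Hence the left-hand side equals $\sum_{V \in \Sigma^{(\alpha-1)}}\sum_{A \in \mathcal{A}_V}\big(\sum_{\{U \in \mathcal{U}_V:\,\mathcal{A}_U \ni A\}} g_U(V)\,s(U,A)\big)H(X_V\mid X_A)$, and the hypothesis \eqref{eq:cMT} collapses the parenthesized inner sum to $s(V,A)$, completing the identification.

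The only nontrivial part is the bookkeeping in this double-sum interchange — in particular tracking that an index set $A$ can appear alongside $V$ exactly when it is inherited from at least one parent $U \in \mathcal{U}_V$, which is precisely what $\mathcal{A}_V = \bigcup_{U}\mathcal{A}_U$ guarantees — together with the verification that the fractional-cover hypothesis on each $g_U$ survives conditioning. Both are routine; in essence the proposition just says that Corollary~\ref{cor:MT} tensorizes over the auxiliary conditioning variable $X_A$, so no genuine obstacle is expected.
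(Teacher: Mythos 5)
Your proof is correct and follows essentially the same route as the paper's: apply the Madiman--Tetali inequality conditioned on $X_A$, weight by $s(U,A)\geq 0$, sum over $U\in\Sigma^{(\alpha)}$ and $A\in\mathcal{A}_U$, and use the relation $\mathcal{A}_V=\cup_{U\in\mathcal{U}_V}\mathcal{A}_U$ together with \eqref{eq:cMT} to collapse the reindexed double sum to the left-hand side of \eqref{eq:cMT2}. The only difference is that you explicitly justify that the fractional-cover inequality survives conditioning on $X_A$ (by applying \eqref{eq:MT} to the conditional law and averaging), a routine step the paper invokes implicitly.
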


\begin{proof}
Fix $\alpha \in \{2,\ldots,M\}$. For any $U \in \Sigma^{(\alpha)}$, $g_U$ is a fractional cover of $(U,\mathcal{V}_U)$. By the subset entropy inequality of Madiman and Tetali \eqref{eq:MT}, we have
\begin{align}
\sum_{V \in \mathcal{V}_U}g_U(V)H(X_V|X_A) \geq H(X_U|X_A), \quad \forall U \in \Sigma^{(\alpha)} \; \mbox{and} \; A \in \mathcal{A}_U.
\label{eq:cMT3}
\end{align}
Multiplying both sides of \eqref{eq:cMT3} by $s(U,A)$ and summing over $A \in \mathcal{A}_U$ and $U \in \Sigma^{(\alpha)}$, we have
\begin{align}
\sum_{U \in \Sigma^{(\alpha)}}\sum_{A \in \mathcal{A}_U}\sum_{V \in \mathcal{V}_U}s(U,A)g_U(V)H(X_V|X_A) \geq \sum_{U \in \Sigma^{(\alpha)}}\sum_{A \in \mathcal{A}_U}s(U,A)H(X_U|X_A).
\label{eq:cMT4}
\end{align}
Note that
\begin{align}
\sum_{U \in \Sigma^{(\alpha)}}&\sum_{A \in \mathcal{A}_U}\sum_{V \in \mathcal{V}_U}s(U,A)g_U(V)H(X_V|X_A)\notag\\
&=\sum_{U \in \Sigma^{(\alpha)}}\sum_{V \in \mathcal{V}_U}\sum_{A \in \mathcal{A}_U}s(U,A)g_U(V)H(X_V|X_A)\\
&=\sum_{V \in \Sigma^{(\alpha-1)}}\sum_{U \in \mathcal{U}_V}\sum_{A \in \mathcal{A}_U}s(U,A)g_U(V)H(X_V|X_A)\\
&=\sum_{V \in \Sigma^{(\alpha-1)}}\sum_{A \in \mathcal{A}_V}\left(\sum_{\{U \in \mathcal{U}_V: \mathcal{A}_U \ni A\}}s(U,A)g_U(V)\right)H(X_V|X_A)\\
&=\sum_{V \in \Sigma^{(\alpha-1)}}\sum_{A \in \mathcal{A}_V}s(V,A)H(X_V|X_A)\label{eq:cMT5}
\end{align}
where \eqref{eq:cMT5} follows from \eqref{eq:cMT}. Substituting \eqref{eq:cMT5} into \eqref{eq:cMT4} completes the proof of the proposition.
\end{proof}

\begin{theorem}[A conditional subset entropy inequality of Yeung and Zhang]\label{thm:CYZ}
For any $\boldsymbol{\lambda} \in (\mathbb{R}^+)^L$ and $N=0,\ldots,L-1$, there exists for each $U \in \cup_{\alpha=1}^{L-N} \Omega_L^{(\alpha)}$ a collection of subsets $\mathcal{A}_U$ of $\Omega_L$ such that: 
\begin{align}
|A|=N \quad \mbox{and} \quad A \cap U =\emptyset, \quad \forall A \in \mathcal{A}_U
\end{align} 
and a function $s_{\boldsymbol{\lambda}}(U,\cdot): \mathcal{A}_U \rightarrow \mathbb{R}^+$ such that:  
\begin{itemize}
\item[1)] for each $\alpha=1,\ldots,L-N$, $c_{\boldsymbol{\lambda}}^{(\alpha)}=\{c_{\boldsymbol{\lambda}}(U):U \in \Omega_L^{(\alpha)}\}$ where
\begin{align}
c_{\boldsymbol{\lambda}}(U) := \sum_{A \in \mathcal{A}_U}s_{\boldsymbol{\lambda}}(U,A)
\label{eq:CYZ1}
\end{align}
is an \emph{optimal} solution to the linear program \eqref{eq:dual2}; and
\item[2)] for each $\alpha=2,\ldots,L-N$,
\begin{align}
\sum_{V \in \Omega_L^{(\alpha-1)}}\sum_{A \in \mathcal{A}_V}s(V,A)H(X_V|X_A) \geq 
\sum_{U \in \Omega_L^{(\alpha)}}\sum_{A \in \mathcal{A}_U}s(U,A)H(X_U|X_A)\label{eq:CYZ2}
\end{align}
for \emph{any} collection of $L$ jointly distributed random variables $(X_1,\ldots,X_L)$.
\end{itemize}
\end{theorem}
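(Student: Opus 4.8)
The plan is to prove Theorem~\ref{thm:CYZ} by induction on $N$, using Theorem~\ref{thm:YZ2} and Proposition~\ref{prop:cMT} as the two workhorses. For the base case $N=0$, the sets $\mathcal{A}_U$ should each be taken to be the single element $\{\emptyset\}$ (a collection containing only the empty set), so that $H(X_U|X_\emptyset)=H(X_U)$ and $s_{\boldsymbol{\lambda}}(U,\emptyset)=c_{\boldsymbol{\lambda}}(U)$; then \eqref{eq:CYZ1} is trivial and \eqref{eq:CYZ2} reduces exactly to the (unconditional) subset entropy inequality of Yeung and Zhang \eqref{eq:YZ}, which has already been established in Section~\ref{sec:YZR} via Theorem~\ref{thm:YZ2} and Corollary~\ref{cor:MT}. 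So the base case is essentially a restatement of what is already proved.

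For the induction step, suppose the theorem holds for some $N-1$ with $1 \le N \le L-1$; I want to produce the structure for $N$. The idea is to ``peel off'' one coordinate. Fix $\boldsymbol{\lambda}$. Apply the induction hypothesis to obtain, for each $U \in \cup_{\alpha=1}^{L-N+1}\Omega_L^{(\alpha)}$, a collection $\mathcal{A}_U'$ of $(N-1)$-subsets disjoint from $U$ and weights $s_{\boldsymbol{\lambda}}'(U,\cdot)$ satisfying properties 1) and 2) at level $N-1$. Now for each $U \in \Omega_L^{(\alpha)}$ with $\alpha \le L-N$, I want to enlarge each $A' \in \mathcal{A}_U'$ by adjoining one more index $j \in \Omega_L \setminus (U \cup A')$, splitting the weight $s_{\boldsymbol{\lambda}}'(U,A')$ among the available choices of $j$. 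Concretely, set $\mathcal{A}_U := \{A' \cup \{j\}: A' \in \mathcal{A}_U', \, j \notin U \cup A'\}$ and distribute $s_{\boldsymbol{\lambda}}(U, A' \cup \{j\})$ so that $\sum_j s_{\boldsymbol{\lambda}}(U,A'\cup\{j\}) = s_{\boldsymbol{\lambda}}'(U,A')$, which immediately gives property 1) since $\sum_{A\in\mathcal{A}_U}s_{\boldsymbol{\lambda}}(U,A)=\sum_{A'\in\mathcal{A}_U'}s_{\boldsymbol{\lambda}}'(U,A')=c_{\boldsymbol{\lambda}}(U)$, still an optimal LP solution. The subtlety is that the splitting rule across $j$ must be chosen consistently with the parent--child relations inherited from Theorem~\ref{thm:YZ2}, so that the compatibility condition \eqref{eq:cMT} of Proposition~\ref{prop:cMT} is met; the natural choice is a uniform split, $s_{\boldsymbol{\lambda}}(U, A'\cup\{j\}) = s_{\boldsymbol{\lambda}}'(U,A')/|\Omega_L \setminus (U\cup A')|$, and then verify that the $g_U$ furnished by Theorem~\ref{thm:YZ2} (together with the induction hypothesis giving the $g_U$ at level $N-1$) continue to satisfy \eqref{eq:cMT} for the new $s_{\boldsymbol{\lambda}}$. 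Once \eqref{eq:cMT} holds, Proposition~\ref{prop:cMT} delivers \eqref{eq:CYZ2} at level $N$ for each $\alpha=2,\ldots,L-N$, completing the induction.

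An alternative, and perhaps cleaner, organization avoids nesting the induction inside itself: prove directly that for \emph{fixed} $N$ one may choose, for $U \in \Omega_L^{(L-N)}$, the collection $\mathcal{A}_U$ to consist of all $N$-subsets of $\Omega_L \setminus U$ with $s_{\boldsymbol{\lambda}}(U,A)$ obtained by uniformly apportioning $c_{\boldsymbol{\lambda}}(U)$ among them, and then define $\mathcal{A}_V$ and $s_{\boldsymbol{\lambda}}(V,\cdot)$ for smaller $V$ by the recursion \eqref{eq:Av} and \eqref{eq:cMT} using the $g_U$'s from Theorem~\ref{thm:YZ2}. Property 1) then follows because summing \eqref{eq:cMT} over $A \in \mathcal{A}_V$ recovers exactly the defining recursion \eqref{eq:YZ2} for $c_{\boldsymbol{\lambda}}(V)$ (the fractional-cover structure is preserved), so each $c_{\boldsymbol{\lambda}}^{(\alpha)}$ remains an optimal LP solution by Theorem~\ref{thm:YZ2}; property 2) is then immediate from Proposition~\ref{prop:cMT}. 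One must also handle the degenerate case $f_\alpha(\boldsymbol{\lambda})=0$ exactly as in Section~\ref{sec:YZR}: when $c_{\boldsymbol{\lambda}}(U)=0$ for all $U \in \Omega_L^{(\alpha)}$, pick an arbitrary optimal solution at level $\alpha-1$ and an arbitrary disjoint assignment of $N$-subsets, and \eqref{eq:CYZ2} holds trivially for that $\alpha$.

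The main obstacle I anticipate is verifying the compatibility condition \eqref{eq:cMT} after the $N$-subsets are introduced --- that is, checking that the \emph{same} fractional covers $g_U$ that relate $c_{\boldsymbol{\lambda}}^{(\alpha)}$ to $c_{\boldsymbol{\lambda}}^{(\alpha-1)}$ (via \eqref{eq:YZ2}) also correctly push the finer weights $s_{\boldsymbol{\lambda}}(U,A)$ down to $s_{\boldsymbol{\lambda}}(V,A)$ for every individual conditioning set $A$, not merely in aggregate. This requires that the apportionment of $c_{\boldsymbol{\lambda}}(U)$ among its $N$-subsets $A$ be done in a way that is ``coherent'' along the whole chain $\Omega_L^{(L-N)} \to \cdots \to \Omega_L^{(1)}$; the uniform apportionment works precisely because the induced hypergraphs $\{(U,\mathcal{V}_U)\}$ are \emph{regular} in the sense noted after Theorem~\ref{thm:MT}, so that the combinatorial counts of how many $A$'s are shared between a parent $U$ and a child $V$ match up. Making this bookkeeping precise --- essentially a counting identity relating $|\mathcal{A}_U|$, $|\mathcal{A}_V|$, and $|\mathcal{U}_V|$ --- is the one genuinely technical point; everything else is assembling previously proved pieces.
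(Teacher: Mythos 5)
Your second, ``alternative'' organization is in substance exactly the paper's own proof: fix $N$, take at the top level $\alpha=L-N$ the collection of $N$-subsets of $\Omega_L\setminus U$ (note that since $|\Omega_L\setminus U|=N$ this collection is the single set $\Omega_L\setminus U$, so there is nothing to apportion --- one simply sets $s_{\boldsymbol{\lambda}}(U,\Omega_L\setminus U)=c_{\boldsymbol{\lambda}}(U)$ for an arbitrary optimal $c_{\boldsymbol{\lambda}}^{(L-N)}$), then define $\mathcal{A}_V$ by \eqref{eq:Av} and $s_{\boldsymbol{\lambda}}(V,\cdot)$ by the recursion \eqref{eq:cMT} using the fractional covers supplied by Theorem~\ref{thm:YZ2}, treating the degenerate case $f_\alpha(\boldsymbol{\lambda})=0$ separately. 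Summing \eqref{eq:cMT} over $A\in\mathcal{A}_V$ recovers \eqref{eq:YZ2}, so LP-optimality propagates down the chain, Proposition~\ref{prop:cMT} gives \eqref{eq:CYZ2} for each $\alpha$, and the conditions $|A|=N$, $A\cap V=\emptyset$ propagate because every $A\in\mathcal{A}_V$ lies in some $\mathcal{A}_U$ with $U\in\mathcal{U}_V$ and $A\cap V\subseteq A\cap U=\emptyset$. With these observations your sketch closes, and it coincides with the published argument.

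Two corrections, however. First, the ``main obstacle'' you anticipate --- verifying \eqref{eq:cMT} for each individual conditioning set $A$ and finding a ``coherent'' apportionment via a counting identity relating $|\mathcal{A}_U|$, $|\mathcal{A}_V|$, $|\mathcal{U}_V|$ --- does not exist in this organization: $s_{\boldsymbol{\lambda}}(V,A)$ is \emph{defined} to be the right-hand side of \eqref{eq:cMT}, so compatibility holds by construction, and no regularity or uniformity argument is needed (nor could one expect it, since for general $\boldsymbol{\lambda}$ the optimal weights $c_{\boldsymbol{\lambda}}$ are far from uniform). Second, your primary plan --- induction on $N$, enlarging each $A'\in\mathcal{A}'_U$ by one index with the uniform split $s_{\boldsymbol{\lambda}}(U,A'\cup\{j\})=s'_{\boldsymbol{\lambda}}(U,A')/|\Omega_L\setminus(U\cup A')|$ --- does not go through as stated: the normalizing factor $L-|U|-(N-1)$ differs by one between a parent $U\in\Omega_L^{(\alpha)}$ and a child $V\in\Omega_L^{(\alpha-1)}$, so substituting the level-$(N-1)$ compatibility relation into the level-$N$ condition \eqref{eq:cMT} leaves mismatched denominators; moreover a given $A$ may arise as $A'\cup\{j\}$ for several distinct $A'\in\mathcal{A}'_U$, which your bookkeeping ignores. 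Since the second route already proves the theorem, that gap is harmless, but it should not be presented as if only routine bookkeeping were missing.
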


\begin{proof}
Fix $\boldsymbol{\lambda} \in (\mathbb{R}^+)^L$ and $N \in \{0,\ldots,L-1\}$, and let $\Sigma=\Omega_L$, $M=L-N$, and $\Sigma^{(\alpha)}=\Omega_L^{(\alpha)}$ for $\alpha=1,\ldots,L-N$. Consider the following construction of $\mathcal{A}^{(\alpha)}$ and $s_{\boldsymbol{\lambda}}^{(\alpha)}:=\{s(U,\cdot):U\in\Omega_L^{(\alpha)}\}$, $\alpha=1,\ldots,L-N$.

For $\alpha=L-N$, let $\mathcal{A}^{(L-N)}=\{\mathcal{A}_U:U \in \Omega_L^{(L-N)}\}$ where $\mathcal{A}_U:=\{\Omega_L\setminus U\}$, i.e., each $\mathcal{A}_U$ contains a single subset $A=\Omega_L\setminus U$ of size $|A|=L-(L-N)=N$ and such that $A\cap U=\emptyset$. Furthermore, let $c_{\boldsymbol{\lambda}}^{(L-N)}=\{c_{\boldsymbol{\lambda}}(U): U \in \Omega_L^{(L-N)}\}$ be an \emph{optimal} solution to the linear program \eqref{eq:dual2} for $\alpha=L-N$, and let
\begin{align}
s_{\boldsymbol{\lambda}}(U,\Omega_L\setminus U) := c_{\boldsymbol{\lambda}}(U), \quad \forall U \in \Omega_L^{(L-N)}.
\end{align}
Since by construction each $\mathcal{A}_U$, $U \in \Omega_L^{(L-N)}$, contains a \emph{single} subset $A=\Omega_L\setminus U$, we trivially have
\begin{align}
\sum_{A \in \mathcal{A}_U}s_{\boldsymbol{\lambda}}(U,A)=c_{\boldsymbol{\lambda}}(U), \quad \forall U \in \Omega_L^{(L-N)}.
\end{align}

For $\alpha=1,\ldots,L-N-1$, let us construct $\mathcal{A}^{(\alpha)}$ and $s_{\boldsymbol{\lambda}}^{(\alpha)}$ iteratively as follows. Suppose that $\mathcal{A}^{(\alpha)}$ and $s_{\boldsymbol{\lambda}}^{(\alpha)}$ are already in place for some $\alpha=2,\ldots,L-N$ such that $|A|=N$ and $A\cap U =\emptyset$ for any $U \in \Omega_L^{(\alpha)}$ and $A \in \mathcal{A}_U$, and $c_{\boldsymbol{\lambda}}^{(\alpha)}=\{c_{\boldsymbol{\lambda}}(U):U \in \Omega_L^{(\alpha)}\}$ where $c_{\boldsymbol{\lambda}}(U)$ is given by \eqref{eq:CYZ1} is an \emph{optimal} solution to the linear program \eqref{eq:dual2}. First, construct $\mathcal{A}^{(\alpha-1)}$ according to \eqref{eq:Av}. Based on this construction, for any $V \in \Omega_L^{(\alpha-1)}$ and $A \in \mathcal{A}_V$ we have $A \in \mathcal{A}_U$ for some $U \in \mathcal{U}_V \subseteq \Omega_L^{(\alpha)}$. Therefore, by the induction assumption we must have $|A|=N$ and
\begin{align}
A\cap V \subseteq A\cap U = \emptyset
\end{align}
for any $V \in \Omega_L^{(\alpha-1)}$ and $A \in \mathcal{A}_V$. Next, construct $s_{\boldsymbol{\lambda}}^{(\alpha-1)}$ as follows. If the optimal value $f_\alpha(\boldsymbol{\lambda})>0$, by Theorem~\ref{thm:YZ2} there exists a collection of functions $\{g_U: U \in \Omega_L^{(\alpha)}\}$ for which each $g_U$ is a fractional cover of $(U,\mathcal{V}_U)$ and such that $c_{\boldsymbol{\lambda}}^{(\alpha-1)}=\{c_{\boldsymbol{\lambda}}(V): V \in \Omega_L^{(\alpha-1)}\}$ where
$c_{\boldsymbol{\lambda}}(V)$ is given by \eqref{eq:YZ2} is an \emph{optimal} solution to the linear program \eqref{eq:dual2} with $\alpha$ replaced by $\alpha-1$. In this case, let $s_{\boldsymbol{\lambda}}^{(\alpha-1)}=\{s_{\boldsymbol{\lambda}}(V,\cdot): V \in \Omega_L^{(\alpha-1)}\}$ where 
\begin{align}
s_{\boldsymbol{\lambda}}(V,A) := \sum_{\{U \in \mathcal{U}_V: \mathcal{A}_U \ni A\}}g_U(V)s_{\boldsymbol{\lambda}}(U,A).
\end{align}
Thus, for each $V \in \Omega_L^{(\alpha-1)}$ we have
\begin{align}
\sum_{A \in \mathcal{A}_V}s_{\boldsymbol{\lambda}}(V,A)&=\sum_{A \in \mathcal{A}_V}\left[\sum_{\{U \in \mathcal{U}_V: \mathcal{A}_U \ni A\}}g_U(V)s_{\boldsymbol{\lambda}}(U,A)\right]\\
&= \sum_{U \in \mathcal{U}_V}g_U(V)\left[\sum_{A \in \mathcal{A}_U}s_{\boldsymbol{\lambda}}(U,A)\right]\\
&= \sum_{U \in \mathcal{U}_V}g_U(V)c_{\boldsymbol{\lambda}}(U)\\
&=c_{\boldsymbol{\lambda}}(V)
\end{align}
Furthermore, by Proposition~\ref{prop:cMT} $s_{\boldsymbol{\lambda}}^{(\alpha-1)}$ and $s_{\boldsymbol{\lambda}}^{(\alpha-1)}$ satisfy the subset entropy inequality \eqref{eq:CYZ2}. If, on the other hand, $f_\alpha(\boldsymbol{\lambda})=0$, we have $s_{\boldsymbol{\lambda}}(U,A)=0$ for \emph{all} $U \in \Omega_L^{(\alpha)}$ and $A \in \mathcal{A}_U$. In this case, choose an arbitrary $s_{\boldsymbol{\lambda}}^{(\alpha-1)}$ such that $c_{\boldsymbol{\lambda}}^{(\alpha-1)}=\{c_{\boldsymbol{\lambda}}(V): V \in \Omega_L^{(\alpha-1)}\}$ where
\begin{align}
c_{\boldsymbol{\lambda}}(V) := \sum_{A \in \mathcal{A}_V}s_{\boldsymbol{\lambda}}(V,A)
\end{align}
is an optimal solution to the linear program \eqref{eq:dual2} with $\alpha$ being replaced by $\alpha-1$, and $s_{\boldsymbol{\lambda}}^{(\alpha-1)}$ and $s_{\boldsymbol{\lambda}}^{(\alpha)}$ will trivially satisfy the subset entropy inequality \eqref{eq:CYZ2}.

We have thus constructed for any $\boldsymbol{\lambda} \in (\mathbb{R}^+)^L$ and $N=0,\ldots,L-1$, a sequence of $\mathcal{A}^{(\alpha)}$ and $c_{\boldsymbol{\lambda}}^{(\alpha)}$, $\alpha=1,\ldots,L-N$, such that all conditions of Theorem~\ref{thm:CYZ} are met simultaneously. This completes the proof of the theorem.
\end{proof}

\section{Two Extensions: SMDC-A and S-SMDC}\label{sec:Ext}
\subsection{Extension 1: SMDC-A}\label{sec:Ext-SMDCA}
\subsubsection{Problem Statement}\label{sec:Ext-SMDCA-PS}

\begin{figure}[!t]
\centering
\includegraphics[width=0.95\linewidth,draft=false]{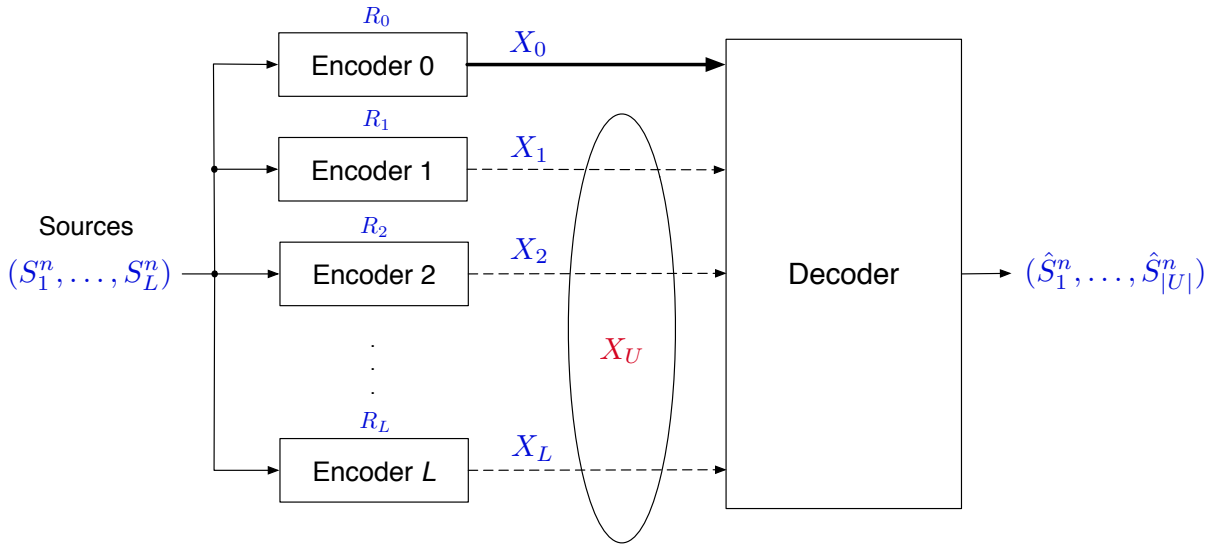}
\caption{SMDC with an all-access encoder 0 and $L$ randomly accessible encoders $1$ to $L$. A total of $L$ independent discrete memoryless sources $(S_1,\ldots,S_L)$ are to be encoded at the encoders. The decoder, which has access to encoder 0 and a subset $U$ of the randomly accessible encoders, needs to nearly perfectly reconstruct the sources $(S_1,\ldots,S_{|U|})$ no matter what the realization of $U$ is.}
\label{fig:model}
\end{figure}

As illustrated in Figure~\ref{fig:model}, the problem of SMDC-A consists of:
\begin{itemize}
\item a total of $L$ \emph{independent} discrete memoryless sources $\{S_\alpha[t]\}_{t=1}^{\infty}$, where $\alpha=1,\ldots,L$ and $t$ is the time index;
\item a set of $L+1$ encoders (encoder $0$ to $L$);
\item a decoder who has access to a subset $\{0\}\cup U$ of the encoder outputs for some nonempty $U \subseteq \Omega_L$.
\end{itemize}
The realization of $U$ is \emph{unknown} a priori at the encoders. However, no matter which $U$ actually materializes, the decoder needs to nearly perfectly reconstruct the sources $(S_1,\ldots,S_\alpha)$ whenever $|U| \geq \alpha$. 

Formally, an $(n,(M_0,M_1,\ldots,M_L))$ code is defined by a collection of $L+1$ encoding functions
\begin{equation}
e_l: \prod_{\alpha=1}^{L}\mathcal{S}_\alpha^n \rightarrow \{1,\ldots,M_l\}, \quad \forall l=0,1,\ldots,L
\end{equation}
and $2^L-1$ decoding functions
\begin{equation}
d_U: \{1,\ldots,M_0\}\times \prod_{l\in U}\{1,\ldots,M_l\} \rightarrow \prod_{\alpha=1}^{|U|}\mathcal{S}_\alpha^n, \quad \forall U \subseteq \Omega_L\; \mbox{s.t.}\; U \neq \emptyset.
\end{equation}
A nonnegative rate tuple $(R_0,R_1,\ldots,R_L)$ is said to be \emph{admissible} if for every $\epsilon>0$, there exits, for sufficiently large block length $n$, an $(n,(M_0,M_1,\ldots,M_L))$ code such that:
\begin{itemize}
\item (Rate constraints at the encoders)
\begin{equation}
\frac{1}{n}\log M_l \leq R_l +\epsilon, \qquad \forall l =0,1,\ldots,L;\label{eq:Rate-A}
\end{equation}
\item (Asymptotically perfect reconstructions at the decoder)
\begin{equation}
\mathrm{Pr}\left\{d_U(X_{\{0\}\cup U}) \neq (S_1^n,\ldots,S_{|U|}^n)\right\} \leq \epsilon, 
\qquad \forall U \subseteq \Omega_L\; \mbox{s.t.}\; U \neq \emptyset
\label{eq:PR-A}
\end{equation}
where $S_\alpha^n := \{S_\alpha[t]\}_{t=1}^{n}$, $X_l:=e_l(S_1^n,\ldots,S_L^n)$ is the output of encoder $l$, and $X_{\{0\}\cup U}:=\{X_l: l \in \{0\}\cup U\}$.
\end{itemize}
The \emph{admissible rate region} $\mathcal{R}$ is the collection of \emph{all} admissible rate tuples $(R_0,R_1,\ldots,R_L)$. 

\subsubsection{Superposition Coding Rate Region}\label{sec:Ext-SMDCA-SC}
Similar to classical SMDC, a natural strategy for SMDC-A is superposition coding, i.e., to encode the sources separately at the encoders and there is no coding across different sources. Formally, the problem of encoding a single source $S_\alpha$ can be viewed as a special case of the general problem where the sources $S_m$ are deterministic for all $m \neq \alpha$. In this case, the source $S_\alpha$ needs to be nearly perfectly reconstructed whenever the decoder can access at least $\alpha$ randomly accessible encoders in addition to the all-access encoder $0$. The following scheme is a natural extension of the simple source-channel separation scheme considered previously for classical SMDC:

\begin{itemize}
\item First compress the source sequence $S_\alpha^n$ into a source message $W_\alpha$ using a \emph{lossless} source code. It is well known \cite[Ch.~5]{CT-B06} that the rate of the source message $W_\alpha$ can be made arbitrarily close to the entropy rate $H(S_\alpha)$ for sufficiently large block length $n$.
\item Next, divide the source message $W_\alpha$ into two independent sub-messages $W_\alpha^{(0)}$ and $W_\alpha^{(1)}$ so we have
\begin{align}
H(W_\alpha) &= H(W_\alpha^{(0)})+H(W_\alpha^{(1)}).
\label{eq:SH}
\end{align} 
The sub-message $W_\alpha^{(0)}$ is stored at the all-access encoder 0 \emph{without} any coding, which requires
\begin{align}
R_0 & \geq \frac{1}{n}H(W_\alpha^{(0)}).\label{eq:SH2}
\end{align} 
The sub-message $W_\alpha^{(1)}$ is encoded by the randomly accessible encoders 1 to $L$ using a \emph{maximum distance separable} code \cite{Sin-IT64}. Clearly, the sub-message $W_\alpha^{(1)}$ can be perfectly recovered at the decoder whenever 
\begin{equation}
\sum_{l \in U} R_l \geq \frac{1}{n}H(W_\alpha^{(1)}), \quad \forall U \in \Omega_L^{(\alpha)}\label{eq:SH3}
\end{equation}
for sufficiently large block length $n$. Eliminating $H(W_\alpha^{(0)})$ and $H(W_\alpha^{(1)})$ from \eqref{eq:SH}--\eqref{eq:SH3}, we conclude that the source message $W_\alpha$ can be perfectly recovered at the decoder whenever 
\begin{equation}
R_0+\sum_{l \in U} R_l \geq \frac{1}{n}H(W_\alpha), \quad \forall U \in \Omega_L^{(\alpha)}\label{eq:SH4}
\end{equation}
\end{itemize}

Combining the above two steps, we conclude that the rate region that can be achieved by the above source-channel separation scheme is given by the collection of all nonnegative rate tuples $(R_0,R_1,\ldots,R_L)$ satisfying
\begin{align}
R_0+\sum_{l \in U} R_l & \geq H(S_\alpha), \quad \forall U \in \Omega_L^{(\alpha)}.
\label{eq:SSDC-A}
\end{align}
Following the same footsteps as those for classical SMDC \cite{Roc-Thesis92,Yeu-IT95}, it is straightforward to show that the above rate region is in fact the admissible rate region for encoding the single source $S_\alpha$. By definition, the superposition coding rate region $\mathcal{R}_{sup}$ for SMDC-A is given by the collection of all nonnegative rate tuples $(R_0,R_1,\ldots,R_L)$ such that
\begin{equation}
R_l := \sum_{\alpha=1}^{L}r_l^{(\alpha)}
\end{equation}
for some nonnegative $r_l^{(\alpha)}$, $\alpha=1,\ldots,L$ and $l=0,1,\ldots,L$, satisfying  
\begin{equation}
r_0^{(\alpha)}+\sum_{l \in U} r_l^{(\alpha)} \geq H(S_\alpha), \quad \forall U \in \Omega_L^{(\alpha)}.
\end{equation}

Similar to classical SMDC, the superposition coding rate region $\mathcal{R}_{sup}$ for SMDC-A is a \emph{polyhedron} with polyhedral cone being the nonnegative orthant in $\mathbb{R}^{L+1}$ and hence can be completely characterized by the supporting hyperplanes 
\begin{equation}
\sum_{l=0}^L\lambda_lR_l \geq f(\lambda_0,\boldsymbol{\lambda}), \quad \forall \lambda_0 \geq 0 \; \mbox{and} \;\boldsymbol{\lambda}:=(\lambda_1,\ldots,\lambda_L) \in (\mathbb{R}^+)^L
\label{eq:sup-A}
\end{equation}
where
\begin{eqnarray}
f(\lambda_0,\boldsymbol{\lambda}) & = & \min_{(R_0,R_1,\ldots,R_L) \in \mathcal{R}_{sup}}\sum_{l=0}^{L}\lambda_lR_l\\
& = & 
\begin{array}{rl}
\min & \sum_{l=0}^L\left(\sum_{\alpha=1}^L\lambda_lr_l^{(\alpha)}\right)\\
\mbox{subject to} & r_0^{(\alpha)}+\sum_{l \in U} r_l^{(\alpha)} \geq H(S_\alpha), \quad \forall U \in \Omega_L^{(\alpha)} \; \mbox{and} \; \alpha=1,\ldots,L\\
& r_l^{(\alpha)} \geq 0, \quad \forall \alpha=1,\ldots,L \; \mbox{and} \; l=0,\ldots,L.
\end{array}
\end{eqnarray}
Clearly, the above optimization problem can be separated into the following $L$ sub-optimization problems:
\begin{equation}
f(\lambda_0,\boldsymbol{\lambda})=\sum_{\alpha=1}^Lf'_\alpha(\lambda_0,\boldsymbol{\lambda})
\end{equation}
where
\begin{eqnarray}
f'_\alpha(\lambda_0,\boldsymbol{\lambda}) & = & 
\begin{array}{rl}
\min & \sum_{l=0}^L\lambda_lr_l^{(\alpha)}\\
\mbox{subject to} & r_0^{(\alpha)}+\sum_{l \in U} r_l^{(\alpha)} \geq H(S_\alpha), \quad \forall U \in \Omega_L^{(\alpha)}\\
& r_l^{(\alpha)} \geq 0, \quad \forall l=0,\ldots,L
\end{array}\\
& = & 
\begin{array}{rl}
\max & \left(\sum_{U \in \Omega_L^{(\alpha)}}c_{\lambda_0,\boldsymbol{\lambda}}(U)\right)H(S_\alpha)\\
\mbox{subject to} &  \sum_{U \in \Omega_L^{(\alpha)}} c_{\lambda_0,\boldsymbol{\lambda}}(U) \leq \lambda_0\\
& \sum_{\{U \in \Omega_L^{(\alpha)}: U \ni l\}} c_{\lambda_0,\boldsymbol{\lambda}}(U) \leq \lambda_l, \quad \forall l=1,\ldots,L\\
& c_{\lambda_0,\boldsymbol{\lambda}}(U) \geq 0, \quad \forall U \in \Omega_L^{(\alpha)}.
\end{array}
\label{eq:dual-A}
\end{eqnarray}
Here, \eqref{eq:dual-A} follows from the strong duality for linear programs. For any $\lambda_0 \geq 0$, $\boldsymbol{\lambda} \in (\mathbb{R}^+)^L$, and $\alpha=1,\ldots,L$, let
\begin{eqnarray}
f_\alpha(\lambda_0,\boldsymbol{\lambda}) & := & 
\begin{array}{rl}
\max & \sum_{U \in \Omega_L^{(\alpha)}}c_{\lambda_0,\boldsymbol{\lambda}}(U)\\
\mbox{subject to} &  \sum_{U \in \Omega_L^{(\alpha)}} c_{\lambda_0,\boldsymbol{\lambda}}(U) \leq \lambda_0\\
& \sum_{\{U \in \Omega_L^{(\alpha)}: U \ni l\}} c_{\lambda_0,\boldsymbol{\lambda}}(U) \leq \lambda_l, \quad \forall l=1,\ldots,L\\
& c_{\lambda_0,\boldsymbol{\lambda}}(U) \geq 0, \quad \forall U \in \Omega_L^{(\alpha)}.
\end{array}
\label{eq:dual-A2}
\end{eqnarray}
We have $f'_\alpha(\lambda_0,\boldsymbol{\lambda}) =f_\alpha(\lambda_0,\boldsymbol{\lambda}) H(S_\alpha)$ and hence
\begin{equation}
f(\lambda_0,\boldsymbol{\lambda}) = \sum_{\alpha=1}^{L}f_\alpha(\lambda_0,\boldsymbol{\lambda})H(S_\alpha)
\label{eq:sup-A2}
\end{equation} 
for any $\lambda_0 \geq 0$ and $\boldsymbol{\lambda} \in (\mathbb{R}^+)^L$.

Note that in the optimization problem \eqref{eq:dual-A2}, if the constraint $\sum_{U \in \Omega_L^{(\alpha)}} c_{\lambda_0,\boldsymbol{\lambda}}(U) \leq \lambda_0$ is inactive, it can be removed from the program. In this case the optimal value $f_\alpha(\lambda_0,\boldsymbol{\lambda})=f_\alpha(\boldsymbol{\lambda})$, where $f_\alpha(\boldsymbol{\lambda})$ is the optimal value of the linear program \eqref{eq:dual2}. On the other hand, if the constraint $\sum_{U \in \Omega_L^{(\alpha)}} c_{\lambda_0,\boldsymbol{\lambda}} (U) \leq \lambda_0$ is active, the optimal value $f_\alpha(\lambda_0,\boldsymbol{\lambda})=\lambda_0$. Combing these two cases, we have 
\begin{equation}
f_\alpha(\lambda_0,\boldsymbol{\lambda})=\min(f_\alpha(\boldsymbol{\lambda}),\lambda_0), \quad \forall \alpha=1,\ldots,L.
\label{eq:sup-A2.5}
\end{equation}
Substituting \eqref{eq:sup-A2} and \eqref{eq:sup-A2.5} into \eqref{eq:sup-A}, we conclude that the superposition coding rate region $\mathcal{R}_{sup}$ for SMDC with an all-access encoder is given by the collection of all nonnegative rate tuples $(R_0,R_1,\ldots,R_L)$ satisfying
\begin{equation}
\sum_{l=0}^L\lambda_lR_l \geq \sum_{\alpha=1}^L \min(f_\alpha(\boldsymbol{\lambda}),\lambda_0)H(S_\alpha), \quad \forall \lambda_0 \geq 0\; \mbox{and} \; \boldsymbol{\lambda} \in (\mathbb{R}^+)^L.
\label{eq:sup-A3}
\end{equation}

As mentioned previously, the superposition coding rate region $\mathcal{R}_{sup}$ is a polyhedron, so among all $\lambda_0 \geq 0$ and $\boldsymbol{\lambda} \in (\mathbb{R}^+)^L$, most of the inequalities in \eqref{eq:sup-A3} are \emph{redundant}. Identifying those which define the \emph{faces} of the superposition coding rate region $\mathcal{R}_{sup}$ appears to be very difficult. Note, however, that for any given $(R_0,R_1,\ldots,R_L) \in (\mathbb{R}^+)^{L+1}$ and $\boldsymbol{\lambda} \in (\mathbb{R}^+)^L$, the left-hand side of \eqref{eq:sup-A3} is a linear, nondecreasing function of $\lambda_0$, and the right-hand side of \eqref{eq:sup-A3} is a piecewise linear, nondecreasing, and concave function of $\lambda_0$. Thus, the left-hand side of \eqref{eq:sup-A3} will dominate the right-hand side for \emph{every} $\lambda_0 \geq 0$ if and only if it dominates the right-hand side at its boundary points $\lambda_0=f_m(\boldsymbol{\lambda})$, $m=1,\ldots,L$, between the adjacent line segments. See Figure~\ref{fig:obs} for an illustration.

\begin{figure}[!t]
\centering
\includegraphics[width=0.75\linewidth,draft=false]{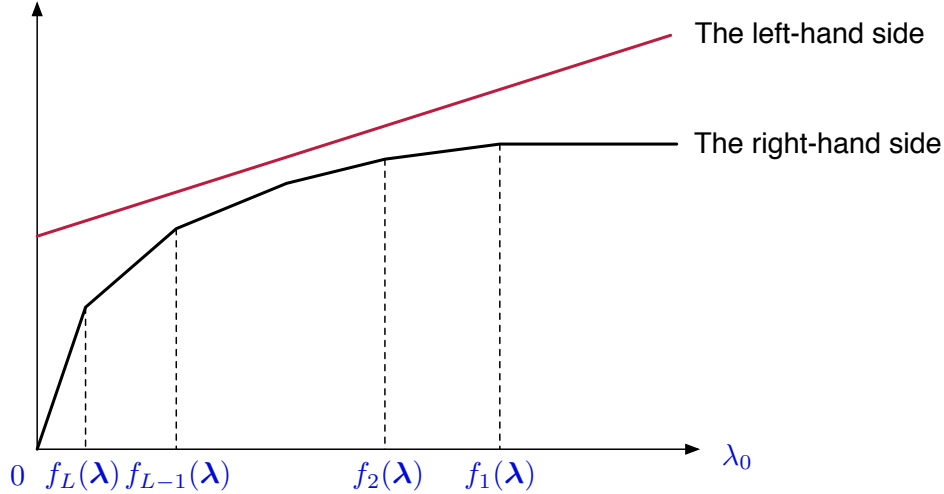}
\caption{The left-hand and right-hand sides of \eqref{eq:sup-A3} as a function of $\lambda_0$ for a fixed $(R_0,R_1,\ldots,R_L) \in (\mathbb{R}^+)^{L+1}$ and $\boldsymbol{\lambda} \in (\mathbb{R}^+)^L$.}
\label{fig:obs}
\end{figure}

Formally, we have the following proposition, which plays a key role next in proving the optimality of superposition coding for achieving the entire admissible rate region of SMDC-A.

\begin{prop}\label{prop:SC}
The superposition coding rate region $\mathcal{R}_{sup}$ for SMDC-A is given by the collection of all nonnegative rate tuples $(R_0,R_1,\ldots,R_L)$ satisfying
\begin{align}
f_m(\boldsymbol{\lambda})R_0+\sum_{l=1}^L\lambda_lR_l & \geq  \sum_{\alpha=1}^L \min(f_\alpha(\boldsymbol{\lambda}),f_m(\boldsymbol{\lambda}))H(S_\alpha) \label{eq:sup-A4}\\
&= f_m(\boldsymbol{\lambda})\sum_{\alpha=1}^{m}H(S_\alpha)+\sum_{\alpha=m+1}^Lf_\alpha(\boldsymbol{\lambda})H(S_\alpha), \quad \forall m=1,\ldots,L\; \mbox{and} \; \boldsymbol{\lambda} \in (\mathbb{R}^+)^L
\label{eq:sup-A5}
\end{align}
where $f_\alpha(\boldsymbol{\lambda})$ is the optimal value of the linear program \eqref{eq:dual2}.
\end{prop}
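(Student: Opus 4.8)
The plan is to take the halfspace description \eqref{eq:sup-A3} of $\mathcal{R}_{sup}$ as the starting point and, for each fixed $\boldsymbol{\lambda}\in(\mathbb{R}^+)^L$, to read it as the statement that the affine function $\lambda_0\mapsto h(\lambda_0):=\lambda_0R_0+\sum_{l=1}^L\lambda_lR_l$ dominates the function $\lambda_0\mapsto g(\lambda_0):=\sum_{\alpha=1}^L\min(f_\alpha(\boldsymbol{\lambda}),\lambda_0)H(S_\alpha)$ on all of $[0,\infty)$. Two facts then need to be proved: (a) such domination on the whole ray is equivalent to domination at the finite set of abscissae $\lambda_0\in\{f_1(\boldsymbol{\lambda}),\dots,f_L(\boldsymbol{\lambda})\}$, which is exactly \eqref{eq:sup-A4}; and (b) $f_\alpha(\boldsymbol{\lambda})$ is nonincreasing in $\alpha$, which converts the $\min$ in \eqref{eq:sup-A4} into the closed form \eqref{eq:sup-A5}.

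For (b) I would argue directly from the linear program \eqref{eq:dual2}: given any feasible $c$ at level $\alpha+1$, the assignment $c'(V):=\frac1\alpha\sum_{\{U\in\Omega_L^{(\alpha+1)}:\,U\supseteq V\}}c(U)$ on $\Omega_L^{(\alpha)}$ is feasible at level $\alpha$, because each $l$ belongs to exactly $\alpha$ of the $\alpha$-subsets of a given $(\alpha+1)$-subset containing $l$, whence $\sum_{\{V\ni l\}}c'(V)=\sum_{\{U\ni l\}}c(U)\leq\lambda_l$; its objective value equals $\frac{\alpha+1}{\alpha}\sum_Uc(U)$, so maximizing over feasible $c$ gives $f_\alpha(\boldsymbol{\lambda})\geq\frac{\alpha+1}{\alpha}f_{\alpha+1}(\boldsymbol{\lambda})\geq f_{\alpha+1}(\boldsymbol{\lambda})$. (The same conclusion can also be extracted from Theorem~\ref{thm:YZ2} by summing the fractional-cover inequalities over the elements of $U$.) Monotonicity then yields $\min(f_\alpha(\boldsymbol{\lambda}),f_m(\boldsymbol{\lambda}))=f_m(\boldsymbol{\lambda})$ for $\alpha\leq m$ and $=f_\alpha(\boldsymbol{\lambda})$ for $\alpha>m$, which is precisely the equality between the right-hand sides of \eqref{eq:sup-A4} and \eqref{eq:sup-A5}.

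For (a) I would fix $(R_0,\dots,R_L)\in(\mathbb{R}^+)^{L+1}$ and $\boldsymbol{\lambda}$ and note that $h$ is affine with nonnegative slope $R_0$ and $h(0)=\sum_{l=1}^L\lambda_lR_l\geq0=g(0)$, while $g$ is a sum of concave, nondecreasing, piecewise-linear functions, each of which is linear up to $\lambda_0=f_\alpha(\boldsymbol{\lambda})$ and constant thereafter; hence $g$ is concave, nondecreasing, piecewise linear with all breakpoints contained in $\{f_1(\boldsymbol{\lambda}),\dots,f_L(\boldsymbol{\lambda})\}$, and eventually constant. Writing $0=t_0<t_1<\dots<t_k$ for $0$ together with the distinct breakpoints, on each bounded piece $[t_{j-1},t_j]$ both $g$ and $h$ are affine, so $h\geq g$ at the two endpoints forces $h\geq g$ throughout the piece; on the terminal ray $[t_k,\infty)$, $g$ is constant and $h$ nondecreasing, so $h(t_k)\geq g(t_k)$ forces $h\geq g$ there. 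Since $h(0)\geq g(0)$ is automatic and $h(t_j)\geq g(t_j)$ for each $t_j>0$ is precisely the instance $\lambda_0=t_j=f_m(\boldsymbol{\lambda})$ of \eqref{eq:sup-A4}, it follows that \eqref{eq:sup-A4} (over all $m$ and $\boldsymbol{\lambda}$) implies \eqref{eq:sup-A3} (over all $\lambda_0\geq0$ and $\boldsymbol{\lambda}$); the converse is immediate since each \eqref{eq:sup-A4} is just the $\lambda_0=f_m(\boldsymbol{\lambda})$ case of \eqref{eq:sup-A3}. Thus the two descriptions define the same polyhedron, which by \eqref{eq:sup-A3} is $\mathcal{R}_{sup}$.

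The argument is essentially elementary once \eqref{eq:sup-A3} is in hand; the steps requiring care are the breakpoint bookkeeping in (a) — in particular, that the unbounded last segment is controlled by the monotonicity of $h$ rather than by a right endpoint — and, if one does not wish to cite Theorem~\ref{thm:YZ2}, the self-contained verification in (b) that $f_\alpha(\boldsymbol{\lambda})$ is nonincreasing in $\alpha$. I do not expect either to be a serious obstacle.
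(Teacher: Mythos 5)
Your proposal is correct and follows essentially the same route as the paper: both start from \eqref{eq:sup-A3}, use the monotonicity $f_1(\boldsymbol{\lambda})\geq\cdots\geq f_L(\boldsymbol{\lambda})\geq 0$ to turn the $\min$ expression into \eqref{eq:sup-A5}, and reduce the family of inequalities over $\lambda_0\geq 0$ to the breakpoints $\lambda_0=f_m(\boldsymbol{\lambda})$ --- your endpoint/affine-interpolation argument on the linear pieces is exactly the content of the paper's three-case analysis (its Case~3 convex combination is your interpolation on $[f_{r+1}(\boldsymbol{\lambda}),f_r(\boldsymbol{\lambda})]$, its Case~2 scaling is your interpolation from $\lambda_0=0$ using $h(0)\geq g(0)=0$, its Case~1 is your terminal ray). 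The only deviation is that the paper simply cites \eqref{eq:IC2} from \cite{YZ-IT99}, whereas you prove the monotonicity directly from the linear program \eqref{eq:dual2} via the feasible averaging $c'(V)=\frac{1}{\alpha}\sum_{\{U\in\Omega_L^{(\alpha+1)}:U\supseteq V\}}c(U)$, which is a correct and self-contained substitute for that citation.
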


\begin{proof}
Let us first recall the following results from \cite{YZ-IT99}: for any $\boldsymbol{\lambda} \in(\mathbb{R}^+)^L$ we have
\begin{equation}
f_1(\boldsymbol{\lambda}) \geq f_2(\boldsymbol{\lambda}) \geq \cdots \geq f_L(\boldsymbol{\lambda}) \geq 0.
\label{eq:IC2}
\end{equation}
It follows that
\begin{align}
\sum_{\alpha=1}^L \min(f_\alpha(\boldsymbol{\lambda}),f_m(\boldsymbol{\lambda}))H(S_\alpha) &=
\sum_{\alpha=1}^m f_m(\boldsymbol{\lambda})H(S_\alpha)+\sum_{\alpha=m+1}^L f_\alpha(\boldsymbol{\lambda})H(S_\alpha)\\
&=
f_m(\boldsymbol{\lambda})\sum_{\alpha=1}^m H(S_\alpha)+\sum_{\alpha=m+1}^L f_\alpha(\boldsymbol{\lambda})H(S_\alpha).
\end{align}
It remains to show that for any given $\boldsymbol{\lambda} \in (\mathbb{R}^+)^L$, the set of inequalities \eqref{eq:sup-A3} over all $\lambda_0 \geq 0$ is dominated by that over $\lambda_0=f_m(\boldsymbol{\lambda})$ for $m=1,\ldots,L$. 

Fix $\boldsymbol{\lambda} \in (\mathbb{R}^+)^L$, and consider the following three cases separately.

Case 1: $\lambda_0\geq f_1(\boldsymbol{\lambda})$. By \eqref{eq:IC2}, we have $\lambda_0\geq f_\alpha(\boldsymbol{\lambda})$ and hence $\min(f_\alpha(\boldsymbol{\lambda}),\lambda_0)=f_\alpha(\boldsymbol{\lambda})$ for any $\alpha=1,\ldots, L$. For $m=1$, the inequality \eqref{eq:sup-A5} can be written as
\begin{equation}
f_1(\boldsymbol{\lambda})R_0+\sum_{l=1}^{L}\lambda_lR_l \geq \sum_{\alpha=1}^{L}f_\alpha (\boldsymbol{\lambda})H(S_\alpha)
\label{Pf-SC-5}
\end{equation}
which implies that
\begin{align}
\lambda_0R_0+\sum_{l=1}^{L}\lambda_lR_l & \geq f_1(\boldsymbol{\lambda})R_0+\sum_{l=1}^{L}\lambda_lR_l\\
& \geq \sum_{\alpha=1}^{L}f_\alpha(\boldsymbol{\lambda})H(S_\alpha)\\
&= \sum_{\alpha=1}^L \min(f_\alpha(\boldsymbol{\lambda}),\lambda_0)H(S_\alpha)
\end{align}
for any $\lambda_0\geq f_1(\boldsymbol{\lambda})$.

Case 2: $0 \leq \lambda_0 < f_L(\boldsymbol{\lambda})$. By \eqref{eq:IC2}, we have $\lambda_0 < f_\alpha(\boldsymbol{\lambda})$ and hence $\min(f_\alpha(\boldsymbol{\lambda}),\lambda_0)=\lambda_0$ for any $\alpha=1,\ldots, L$. For $m=L$, the inequality \eqref{eq:sup-A5} can be written as
\begin{equation}
f_L(\boldsymbol{\lambda})R_0+\sum_{l=1}^{L}\lambda_lR_l \geq f_L(\boldsymbol{\lambda})\sum_{\alpha=1}^{L}H(S_\alpha)
\label{Pf-SC-6}
\end{equation}
which implies that
\begin{eqnarray}
\lambda_0R_0+\sum_{l=1}^{L}\lambda_lR_l 
& \geq &  \frac{\lambda_0}{f_L(\boldsymbol{\lambda})}\left(f_L(\boldsymbol{\lambda})R_0+\sum_{l=1}^{L}\lambda_lR_l\right)\\
& \geq & \frac{\lambda_0}{f_L(\boldsymbol{\lambda})}\left(f_L(\boldsymbol{\lambda})\sum_{\alpha=0}^{L}H(S_\alpha)\right)\\
& = & \lambda_0 \sum_{\alpha=0}^{L}H(S_\alpha)\\
&= & \sum_{\alpha=1}^L \min(f_\alpha(\boldsymbol{\lambda}),\lambda_0)H(S_\alpha)
\end{eqnarray}
for any $0 \leq \lambda_0 < f_L(\boldsymbol{\lambda})$.

Case 3: $f_{r+1}(\boldsymbol{\lambda}) \leq \lambda_0 < f_r(\boldsymbol{\lambda})$ for some $r=1,\ldots,L-1$. By \eqref{eq:IC2}, we have $\lambda_0 < f_\alpha(\boldsymbol{\lambda})$ and hence $\min(f_\alpha(\boldsymbol{\lambda}),\lambda_0)=\lambda_0$ for $\alpha=1,\ldots, r$, and $\lambda_0 \geq f_\alpha(\boldsymbol{\lambda})$ and hence $\min(f_\alpha(\boldsymbol{\lambda}),\lambda_0)=f_\alpha(\boldsymbol{\lambda})$ for $\alpha=r+1,\ldots, L$. For $m=r$ and $r+1$, the inequality \eqref{eq:sup-A5} can be written as
\begin{eqnarray}
f_r(\boldsymbol{\lambda})R_0+\sum_{l=1}^{L}\lambda_lR_l & \geq&  f_r(\boldsymbol{\lambda})\sum_{\alpha=1}^{r}H(S_\alpha)+\sum_{\alpha=r+1}^{L}f_\alpha(\boldsymbol{\lambda})H(S_\alpha)\label{Pf-SC-7}\\
\mbox{and} \quad f_{r+1}(\boldsymbol{\lambda})R_0+\sum_{l=1}^{L}\lambda_lR_l & \geq&  f_{r+1}(\boldsymbol{\lambda})\sum_{\alpha=1}^{r+1}H(S_\alpha)+\sum_{\alpha=r+2}^{L}f_\alpha(\boldsymbol{\lambda})H(S_\alpha)\label{Pf-SC-8}
\end{eqnarray}
respectively, which together imply that
\begin{align}
\lambda_0R_0+\sum_{l=1}^{L}\lambda_lR_l & = \frac{\lambda_0-f_{r+1}(\boldsymbol{\lambda})}{f_r(\boldsymbol{\lambda})-f_{r+1}(\boldsymbol{\lambda})}\left(f_r(\boldsymbol{\lambda})R_0+\sum_{l=1}^{L}\lambda_lR_l\right)+\notag\\
& \hspace{16pt} \frac{f_r(\boldsymbol{\lambda})-\lambda_0}{f_r(\boldsymbol{\lambda})-f_{r+1}(\boldsymbol{\lambda})}\left(f_{r+1}(\boldsymbol{\lambda})R_0+\sum_{l=1}^{L}\lambda_lR_l\right)\\
& \geq \frac{\lambda_0-f_{r+1}(\boldsymbol{\lambda})}{f_r(\boldsymbol{\lambda})-f_{r+1}(\boldsymbol{\lambda})}\left(f_r(\boldsymbol{\lambda})\sum_{\alpha=1}^{r}H(S_\alpha)+\sum_{\alpha=r+1}^{L}f_\alpha(\boldsymbol{\lambda})H(S_\alpha)\right)+\notag\\
& \hspace{16pt} \frac{f_r(\boldsymbol{\lambda})-\lambda_0}{f_r(\boldsymbol{\lambda})-f_{r+1}(\boldsymbol{\lambda})}\left(f_{r+1}(\boldsymbol{\lambda})\sum_{\alpha=1}^{r+1}H(S_\alpha)+\sum_{\alpha=r+2}^{L}f_\alpha(\boldsymbol{\lambda})H(S_\alpha)\right)\\
&= \lambda_0\sum_{\alpha=1}^{r}H(S_\alpha)+\sum_{\alpha=r+1}^{L}f_\alpha(\boldsymbol{\lambda})H(S_\alpha)\\
&=  \sum_{\alpha=1}^L \min(f_\alpha(\boldsymbol{\lambda}),\lambda_0)H(S_\alpha)
\end{align}
for any $f_{r+1}(\boldsymbol{\lambda}) \leq \lambda_0 < f_r(\boldsymbol{\lambda})$.

Combining these three cases completes the proof of the proposition.
\end{proof}

\subsubsection{Optimality of Superposition Coding}\label{sec:Ext-SMDCA-Main}
The main result of this section is that superposition coding remains optimal in terms of achieving the entire admissible rate region for SMDC-A, as summarized in the following theorem.

\begin{theorem}\label{thm:SMDC-A}
For the general SMDC-A problem, the admissible rate region
\begin{equation}
\mathcal{R}=\mathcal{R}_{sup}.
\end{equation}
\end{theorem}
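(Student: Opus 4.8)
The inclusion $\mathcal{R}_{sup}\subseteq\mathcal{R}$ is immediate from the separation-based superposition scheme of Section~\ref{sec:Ext-SMDCA-SC}, so the work is the converse $\mathcal{R}\subseteq\mathcal{R}_{sup}$. By Proposition~\ref{prop:SC} it suffices to show that every admissible tuple $(R_0,R_1,\ldots,R_L)$ satisfies \eqref{eq:sup-A5}, i.e.
\begin{equation}
f_m(\boldsymbol{\lambda})R_0+\sum_{l=1}^L\lambda_lR_l \;\geq\; f_m(\boldsymbol{\lambda})\sum_{\alpha=1}^m H(S_\alpha)+\sum_{\alpha=m+1}^L f_\alpha(\boldsymbol{\lambda})H(S_\alpha), \qquad \forall m=1,\ldots,L,\ \boldsymbol{\lambda}\in(\mathbb{R}^+)^L.
\end{equation}
Fix $\boldsymbol{\lambda}$, $m$, a small $\epsilon>0$, and a corresponding $(n,(M_0,\ldots,M_L))$ code. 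As usual \eqref{eq:Rate-A} gives $H(X_l)\le n(R_l+\epsilon)$ for $l=0,\ldots,L$, and Fano's inequality applied to \eqref{eq:PR-A} gives $H(S_1^n,\ldots,S_\alpha^n\mid X_0,X_U)\le n\delta_\alpha^{(n)}$ for every $U\in\Omega_L^{(\alpha)}$ and $\alpha=1,\ldots,L$.

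The heart of the argument is the Yeung--Zhang chain from Section~\ref{sec:SMDC}, run \emph{conditioned on $X_0$} throughout; conditioning on $X_0$ is forced, since in SMDC-A the Fano step only controls $H(S_1^n,\ldots,S_{\alpha-1}^n\mid X_0,X_V)$, not $H(S_1^n,\ldots,S_{\alpha-1}^n\mid X_V)$. Repeating the derivation of \eqref{eq:PR2} while carrying $X_0$ in the conditioning, one obtains for $V\in\Omega_L^{(\alpha-1)}$
\begin{equation}
H(X_V\mid X_0,S_1^n,\ldots,S_{\alpha-2}^n)\;\geq\; H(X_V\mid X_0,S_1^n,\ldots,S_{\alpha-1}^n)+nH(S_{\alpha-1})-I(X_0;S_{\alpha-1}^n\mid S_1^n,\ldots,S_{\alpha-2}^n)-n\delta_{\alpha-1}^{(n)},
\end{equation}
the only new feature being the ``leakage'' term $I(X_0;S_{\alpha-1}^n\mid S_1^n,\ldots,S_{\alpha-2}^n)$, which arises because $H(S_{\alpha-1}^n\mid X_0,S_1^n,\ldots,S_{\alpha-2}^n)=nH(S_{\alpha-1})-I(X_0;S_{\alpha-1}^n\mid S_1^n,\ldots,S_{\alpha-2}^n)$ rather than $nH(S_{\alpha-1})$. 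Taking the function $c_{\boldsymbol{\lambda}}$ from Theorem~\ref{thm:YZ}, multiplying by $c_{\boldsymbol{\lambda}}(V)$, summing over $V\in\Omega_L^{(\alpha-1)}$ (so that $\sum_V c_{\boldsymbol{\lambda}}(V)=f_{\alpha-1}(\boldsymbol{\lambda})$ multiplies both $nH(S_{\alpha-1})$ and the leakage term), applying \eqref{eq:YZ} conditioned on $(X_0,S_1^n,\ldots,S_{\alpha-1}^n)$, and iterating for $\alpha=2,\ldots,L$ together with one final step at level $L$ (where $c_{\boldsymbol{\lambda}}(\Omega_L)=f_L(\boldsymbol{\lambda})$ and the nonnegative residual $c_{\boldsymbol{\lambda}}(\Omega_L)H(X_{\Omega_L}\mid X_0,S_1^n,\ldots,S_L^n)$ is dropped) yields, writing $S^{\beta}:=(S_1^n,\ldots,S_\beta^n)$,
\begin{equation}
\sum_{l=1}^L\lambda_l H(X_l)\;\geq\;\sum_{l=1}^L\lambda_l H(X_l\mid X_0)\;\geq\; n\sum_{\alpha=1}^L f_\alpha(\boldsymbol{\lambda})H(S_\alpha)-\sum_{\alpha=1}^L f_\alpha(\boldsymbol{\lambda})\,I(X_0;S_\alpha^n\mid S^{\alpha-1})-n\sum_{\alpha=1}^L f_\alpha(\boldsymbol{\lambda})\delta_\alpha^{(n)}.
\end{equation}

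The remaining, and main, step is to absorb the total leakage into $f_m(\boldsymbol{\lambda})H(X_0)$ together with the slack $n\sum_{\alpha=1}^m\!\big(f_\alpha(\boldsymbol{\lambda})-f_m(\boldsymbol{\lambda})\big)H(S_\alpha)$ that separates $\sum_{\alpha=1}^L f_\alpha(\boldsymbol{\lambda})H(S_\alpha)$ from the target right-hand side. Here I would invoke the monotonicity $f_1(\boldsymbol{\lambda})\ge\cdots\ge f_L(\boldsymbol{\lambda})\ge0$ from \eqref{eq:IC2}: for $\alpha\le m$ split $f_\alpha=(f_\alpha-f_m)+f_m$ and bound $(f_\alpha-f_m)I(X_0;S_\alpha^n\mid S^{\alpha-1})\le (f_\alpha-f_m)\,nH(S_\alpha)$ using $I(X_0;S_\alpha^n\mid S^{\alpha-1})\le H(S_\alpha^n\mid S^{\alpha-1})=nH(S_\alpha)$, while for $\alpha>m$ use $f_\alpha\le f_m$ directly; collecting all the $f_m$-terms over $\alpha=1,\ldots,L$ and using the chain rule gives $f_m(\boldsymbol{\lambda})\sum_{\alpha=1}^L I(X_0;S_\alpha^n\mid S^{\alpha-1})=f_m(\boldsymbol{\lambda})\,I(X_0;S^{L})\le f_m(\boldsymbol{\lambda})H(X_0)$, hence
\begin{equation}
\sum_{\alpha=1}^L f_\alpha(\boldsymbol{\lambda})\,I(X_0;S_\alpha^n\mid S^{\alpha-1})\;\leq\; f_m(\boldsymbol{\lambda})H(X_0)+n\sum_{\alpha=1}^m\big(f_\alpha(\boldsymbol{\lambda})-f_m(\boldsymbol{\lambda})\big)H(S_\alpha).
\end{equation}
Adding $f_m(\boldsymbol{\lambda})H(X_0)$ to the chain bound, substituting this estimate, then replacing $H(X_0),H(X_l)$ by $n(R_0+\epsilon),n(R_l+\epsilon)$, dividing by $n$, and letting $n\to\infty$, $\epsilon\to0$ (so $\delta_\alpha^{(n)}\to0$) produces exactly the inequality \eqref{eq:sup-A5}; together with achievability this gives $\mathcal{R}=\mathcal{R}_{sup}$. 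I expect the delicate point to be precisely this last accounting: the chain must be run with $X_0$ always present, and one must verify that the leakage it generates is never larger than the amount $f_m(\boldsymbol{\lambda})H(X_0)$ available on the left plus the gap $n\sum_{\alpha\le m}\!\big(f_\alpha(\boldsymbol{\lambda})-f_m(\boldsymbol{\lambda})\big)H(S_\alpha)$ — and it is exactly the monotonicity of the $f_\alpha(\boldsymbol{\lambda})$'s, together with the reduction in Proposition~\ref{prop:SC} to the breakpoints $\lambda_0=f_m(\boldsymbol{\lambda})$, that makes the two sides match with nothing to spare.
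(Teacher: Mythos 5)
Your proposal is correct, and it establishes the converse by a genuinely different route than the paper. The paper makes the same reduction to the breakpoints $\lambda_0=f_m(\boldsymbol{\lambda})$ via Proposition~\ref{prop:SC}, but it never conditions on $X_0$: it first runs the Yeung--Zhang chain \eqref{eq:YZ} unconditionally from level $1$ down only to level $m$, then uses the optimality of $c_{\boldsymbol{\lambda}}^{(m)}$ (so that $\sum_{U\in\Omega_L^{(m)}}c_{\boldsymbol{\lambda}}(U)=f_m(\boldsymbol{\lambda})$) to merge the all-access output via $f_m(\boldsymbol{\lambda})H(X_0)+\sum_{U}c_{\boldsymbol{\lambda}}(U)H(X_U)\ge\sum_{U}c_{\boldsymbol{\lambda}}(U)H(X_0,X_U)$, extracts $f_m(\boldsymbol{\lambda})\sum_{\alpha\le m}H(S_\alpha)$ with a single Fano step at level $m$, and finishes with a separate downward induction (from level $L$ back to level $m$) on the augmented quantities $H(X_0,X_U\mid S_1^n,\ldots,S_\alpha^n)$, applying \eqref{eq:YZ} to the pairs $(X_0,X_l)$; no mutual-information terms ever appear because $X_0$ is injected exactly where its weight matches. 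You instead carry $X_0$ in the conditioning from level $1$ onward, pay a leakage term $I(X_0;S_\alpha^n\mid S_1^n,\ldots,S_{\alpha-1}^n)$ with coefficient $f_\alpha(\boldsymbol{\lambda})$ at each level, and absorb the total leakage using the monotonicity \eqref{eq:IC2}, the bound $I(X_0;S_\alpha^n\mid S_1^n,\ldots,S_{\alpha-1}^n)\le nH(S_\alpha)$ for $\alpha\le m$, and the chain rule $\sum_{\alpha}I(X_0;S_\alpha^n\mid S_1^n,\ldots,S_{\alpha-1}^n)=I(X_0;S_1^n,\ldots,S_L^n)\le H(X_0)$; your accounting closes exactly and reproduces \eqref{eq:sup-A5}, so the argument is sound (the conditional use of \eqref{eq:YZ} given $(X_0,S_1^n,\ldots,S_{\alpha-1}^n)$ is legitimate since \eqref{eq:YZ} holds for every joint distribution and hence averages over the conditioning). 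The trade-off: your version works only with the original variables $X_1,\ldots,X_L$ and makes transparent why the hyperplane coefficient $f_m(\boldsymbol{\lambda})$ is precisely the price of the information $X_0$ may carry about the sources, at the cost of mutual-information bookkeeping and a second invocation of \eqref{eq:IC2} beyond its use inside Proposition~\ref{prop:SC}; the paper's version keeps every step a pure (conditional) subset-entropy inequality on the augmented variables $(X_0,X_U)$ and splits the chain at level $m$, avoiding leakage terms altogether.
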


\begin{proof}
Based on the discussions from Section~\ref{sec:Ext-SMDCA-SC}, we naturally have $\mathcal{R}_{sup} \subseteq \mathcal{R}$. Thus, to show $\mathcal{R}_{sup} = \mathcal{R}$ we only need to show that $\mathcal{R} \subseteq \mathcal{R}_{sup}$. In light of Proposition~\ref{prop:SC}, it is sufficient to show that \emph{any} admissible rate tuple $(R_0,R_1,\ldots,R_L)$ must satisfy
\begin{align}
f_m(\boldsymbol{\lambda})R_0+\sum_{l=1}^L\lambda_lR_l & \geq  f_m(\boldsymbol{\lambda})\sum_{\alpha=1}^{m}H(S_\alpha)+\sum_{\alpha=m+1}^Lf_\alpha(\boldsymbol{\lambda})H(S_\alpha)
\end{align}
for \emph{all} $m=1,\ldots,L$ and $\boldsymbol{\lambda} \in (\mathbb{R}^+)^L$.

Let $(R_0,R_1,\ldots,R_L)$ be an admissible rate tuple. By definition, for any sufficiently large block-length $n$ there exists an 
$(n,(M_0,M_1,\ldots,M_L))$ code satisfying the rate constraints \eqref{eq:Rate-A} for the admissible rate tuple $(R_0,R_1,\ldots,R_L)$ and the asymptotically perfect reconstruction requirement \eqref{eq:PR-A}. Fix $\boldsymbol{\lambda} \in (\mathbb{R}^+)^L$, and let $\{c_{\boldsymbol{\lambda}}^{(\alpha)}:\alpha=1,\ldots,L\}$ be a set of \emph{optimal} solutions that satisfies the subset entropy inequality of Yeung and Zhang \eqref{eq:YZ}. 

Note that for $\alpha=1$, the optimal solution for the linear program \eqref{eq:dual2} is \emph{unique} and is given by
\begin{equation}
c_{\boldsymbol{\lambda}}(\{l\})=\lambda_l, \quad \forall l=1,\ldots,L.
\end{equation}
We thus have for any $m=1,\ldots,L$
\begin{align}
n\left(f_m(\boldsymbol{\lambda})R_0+\sum_{l=1}^L\lambda_lR_l\right)
&= f_m(\boldsymbol{\lambda})nR_0+\sum_{l=1}^Lc_{\boldsymbol{\lambda}}(\{l\})nR_l\\
& \geq  f_m(\boldsymbol{\lambda})(H(X_0)-n\epsilon)+\sum_{l=1}^Lc_{\boldsymbol{\lambda}}(\{l\})(H(X_l)-n\epsilon) \label{Pf-Main-1}\\
& =  f_m(\boldsymbol{\lambda})H(X_0)+\sum_{l=1}^Lc_{\boldsymbol{\lambda}}(\{l\})H(X_l)-n(f_1(\boldsymbol{\lambda})+f_m(\boldsymbol{\lambda}))\epsilon \label{Pf-Main-2}\\
& \geq  f_m(\boldsymbol{\lambda})H(X_0)+\sum_{U \in \Omega_L^{(m)}}c_{\boldsymbol{\lambda}}(U)H(X_U)-n(f_1(\boldsymbol{\lambda})+f_m(\boldsymbol{\lambda}))\epsilon \label{Pf-Main-3}\\
& =  \sum_{U \in \Omega_L^{(m)}}c_{\boldsymbol{\lambda}}(U)(H(X_0)+H(X_U))-n(f_1(\boldsymbol{\lambda})+f_m(\boldsymbol{\lambda}))\epsilon \label{Pf-Main-4}\\
& \geq  \sum_{U \in \Omega_L^{(m)}}c_{\boldsymbol{\lambda}}(U)H(X_0,X_U)-n(f_1(\boldsymbol{\lambda})+f_m(\boldsymbol{\lambda}))\epsilon \label{Pf-Main-5}
\end{align}
where \eqref{Pf-Main-1} follows from the rate constraint \eqref{eq:Rate-A}, \eqref{Pf-Main-2} and \eqref{Pf-Main-4} are due to the fact that $c_{\boldsymbol{\lambda}}^{(1)}$ and $c_{\boldsymbol{\lambda}}^{(m)}$ are optimal so we have 
\begin{align}
\sum_{l=1}^Lc_{\boldsymbol{\lambda}}(\{l\}) =f_1({\boldsymbol{\lambda}}) \quad
\mbox{and} \quad \sum_{U \in \Omega_L^{(m)}}c_{\boldsymbol{\lambda}}(U) = f_m({\boldsymbol{\lambda}})
\end{align}
\eqref{Pf-Main-3} follows from the subset entropy inequality of Yeurng and Zhang \eqref{eq:YZ} so we have
\begin{equation}
\sum_{l=1}^Lc_{\boldsymbol{\lambda}}(\{l\})H(X_l) \geq \sum_{U \in \Omega_L^{(m)}}c_{\boldsymbol{\lambda}}(U)H(X_U)
\end{equation}
and \eqref{Pf-Main-5} follows from the independence bound on entropy. 

For any $U \in \Omega_L^{(m)}$ and $m=1,\ldots,L$, by the asymptotically perfect reconstruction requirement \eqref{eq:PR-A} and the well-known Fano's inequality we have
\begin{equation}
H(S_1^n,\ldots,S_m^n|X_0,X_U) \leq n\delta_m^{(n)}
\label{Pf-Main-6}
\end{equation}
where $\delta_m^{(n)} \rightarrow 0$ in the limit as $n \rightarrow \infty$ and $\epsilon \rightarrow 0$. By the chain rule for entropy, 
\begin{eqnarray}
H(X_0,X_U) &= & H(X_0,X_U,S_1^n,\ldots,S_m^n)-H(S_1^n,\ldots,S_m^n|X_0,X_U)\label{Pf-Main-8}\\
&= & H(S_1^n,\ldots,S_m^n)+H(X_0,X_U|S_1^n,\ldots,S_m^n)-H(S_1^n,\ldots,S_m^n|X_0,X_U)\label{Pf-Main-9}\\
&=& n\sum_{\alpha=1}^mH(S_\alpha)+H(X_0,X_U|S_1^n,\ldots,S_m^n)-H(S_1^n,\ldots,S_m^n|X_0,X_U)\label{Pf-Main-10}\\
&\geq & n\sum_{\alpha=1}^mH(S_\alpha)+H(X_0,X_U|S_1^n,\ldots,S_m^n)-n\delta_m^{(n)}\label{Pf-Main-11}
\end{eqnarray}
where \eqref{Pf-Main-10} is due to the fact that $S_1,\ldots,S_L$ are independent memoryless sources, and \eqref{Pf-Main-11} follows from \eqref{Pf-Main-6}. Substituting \eqref{Pf-Main-11} into \eqref{Pf-Main-5}, we have
\begin{align}
&n\left(f_m(\boldsymbol{\lambda})R_0+\sum_{l=1}^L\lambda_lR_l\right)\notag\\
& \geq \sum_{U \in \Omega_L^{(m)}}c_{\boldsymbol{\lambda}}(U)\left(
n\sum_{\alpha=1}^mH(S_\alpha)+H(X_0,X_U|S_1^n,\ldots,S_m^n)-n\delta_m^{(n)}\right)-n(f_1(\boldsymbol{\lambda})+f_m(\boldsymbol{\lambda}))\epsilon\\
&= nf_m(\boldsymbol{\lambda})\sum_{\alpha=1}^mH(S_\alpha)+\sum_{U \in \Omega_L^{(m)}}c_{\boldsymbol{\lambda}}(U)H(X_0,X_U|S_1^n,\ldots,S_m^n)-\notag\\
&\hspace{15pt} n(f_m(\boldsymbol{\lambda})\delta_m^{(n)}+(f_1(\boldsymbol{\lambda})+f_m(\boldsymbol{\lambda}))\epsilon).
\label{Pf-Main-11.5}
\end{align}

Next, we show, via an induction on $m$, that for any $m=1,\ldots,L$ we have
\begin{equation}
\sum_{U \in \Omega_L^{(m)}}c_{\boldsymbol{\lambda}}(U)H(X_0,X_U|S_1^n,\ldots,S_m^n) \geq n
\left(\sum_{\alpha=m+1}^{L}f_\alpha(\boldsymbol{\lambda}) H(S_\alpha)-\sum_{\alpha=m+1}^{L}f_\alpha(\boldsymbol{\lambda})\delta_\alpha^{(n)}\right).
\label{Pf-Main-12}
\end{equation}
First consider the base case with $m=L$. In this case, the inequality \eqref{Pf-Main-12} is trivial as the right-hand side of the inequality is zero. Next, assume that the inequality \eqref{Pf-Main-12} holds for $m=l$ for some $l=2,\ldots,L$, i.e,
\begin{equation}
\sum_{U \in \Omega_L^{(l)}}c_{\boldsymbol{\lambda}}(U)H(X_0,X_U|S_1^n,\ldots,S_l^n) \geq n
\left(\sum_{\alpha=l+1}^{L}f_\alpha(\boldsymbol{\lambda}) H(S_\alpha)-\sum_{\alpha=l+1}^{L}f_\alpha(\boldsymbol{\lambda})\delta_\alpha^{(n)}\right).
\label{Pf-Main-13}
\end{equation}
For any $U \in \Omega_L^{(l)}$, we have 
\begin{align}
H&(X_0,X_U|S_1^n,\ldots,S_l^n)\notag\\
&=  H(X_0,X_U|S_1^n,\ldots,S_{l-1}^n)-I(S_l^n;X_0,X_U|S_1^n,\ldots,S_{l-1}^n)\\
&= H(X_0,X_U|S_1^n,\ldots,S_{l-1}^n)-H(S_l^n|S_1^n,\ldots,S_{l-1}^n)+H(S_l^n|X_0,X_U,S_1^n,\ldots,S_{l-1}^n)\\
& \leq H(X_0,X_U|S_1^n,\ldots,S_{l-1}^n)-H(S_l^n|S_1^n,\ldots,S_{l-1}^n)+H(S_l^n|X_0,X_U)\label{Pf-Main-14}\\
& \leq H(X_0,X_U|S_1^n,\ldots,S_{l-1}^n)-H(S_l^n|S_1^n,\ldots,S_{l-1}^n)+\delta_l^{(n)}\label{Pf-Main-15}\\
& =  H(X_0,X_U|S_1^n,\ldots,S_{l-1}^n)-nH(S_l)+\delta_l^{(n)}\label{Pf-Main-16}
\end{align}
where \eqref{Pf-Main-14} follows from the fact that conditioning reduces entropy, \eqref{Pf-Main-15} follows the fact that
\begin{equation}
H(S_l^n|X_0,X_U) \leq H(S_1^n,\ldots,S_l^n|X_0,X_U) \leq n\delta_l^{(n)}
\end{equation}
and \eqref{Pf-Main-16} follows from the fact that $S_1,\ldots,S_L$ are independent memoryless sources. Multiplying both sides of the inequality \eqref{Pf-Main-16} by $c_{\boldsymbol{\lambda}}(U)$ and summing over all $U \in \Omega_L^{(l)}$, we have
\begin{align}
\sum_{U \in \Omega_L^{(l)}}&c_{\boldsymbol{\lambda}}(U)H(X_0,X_U|S_1^n,\ldots,S_{l-1}^n)\notag\\
& \geq \sum_{U \in \Omega_L^{(l)}}c_{\boldsymbol{\lambda}}(U)\left(H(X_0,X_U|S_1^n,\ldots,S_l^n)+nH(S_l)-n\delta_l^{(n)}\right)\\
& = \sum_{U \in \Omega_L^{(l)}}c_{\boldsymbol{\lambda}}(U)H(X_0,X_U|S_1^n,\ldots,S_l^n)+n\left(f_l(\boldsymbol{\lambda})H(S_l)-f_l(\boldsymbol{\lambda})n\delta_l^{(n)}\right)\\
& \geq n\left(\sum_{\alpha=l+1}^{L}f_\alpha(\boldsymbol{\lambda})H(S_\alpha)-\sum_{\alpha=l+1}^{L}f_\alpha(\boldsymbol{\lambda})\delta_\alpha^{(n)}\right)+n\left(f_l(\boldsymbol{\lambda})H(S_l)-f_l(\boldsymbol{\lambda})\delta_l^{(n)}\right)\label{Pf-Main-17}\\
& = n\left(\sum_{\alpha=l}^{L}f_\alpha(\boldsymbol{\lambda})H(S_\alpha)-\sum_{\alpha=l}^{L}f_\alpha(\boldsymbol{\lambda})\delta_\alpha^{(n)}\right)
\end{align}
where \eqref{Pf-Main-17} follows from the induction assumption \eqref{Pf-Main-13}. Finally, by the subset entropy inequality of Yeung and Zhang \eqref{eq:YZ} we have
\begin{align}
\sum_{U \in \Omega_L^{(l-1)}}c_{\boldsymbol{\lambda}}(U)H(X_0,X_U|S_1^n,\ldots,S_{l-1}^n) & \geq \sum_{U \in \Omega_L^{(l)}}c_{\boldsymbol{\lambda}}(U)H(X_0,X_U|S_1^n,\ldots,S_{l-1}^n)\\
& \geq n\left(\sum_{\alpha=l}^{L}f_\alpha(\boldsymbol{\lambda})H(S_\alpha)-\sum_{\alpha=l}^{L}f_\alpha(\boldsymbol{\lambda})\delta_\alpha^{(n)}\right).
\end{align}
This proves that the inequality \eqref{Pf-Main-12} also holds for $m=l-1$ and hence completes the proof of \eqref{Pf-Main-12}.

Substituting \eqref{Pf-Main-12} into \eqref{Pf-Main-11.5} and dividing both sides of the inequality by $n$, we have
\begin{align}
f_m(\boldsymbol{\lambda})R_0+\sum_{l=1}^L\lambda_lR_l &\geq f_m(\boldsymbol{\lambda})\sum_{\alpha=1}^mH(S_\alpha)+\sum_{\alpha=m+1}^{L}f_\alpha(\boldsymbol{\lambda})H(S_\alpha)-\notag\\
&\hspace{15pt} \left(\sum_{\alpha=m}^{L}f_\alpha(\boldsymbol{\lambda})\delta_\alpha^{(n)}+(f_1(\boldsymbol{\lambda})+f_m(\boldsymbol{\lambda}))\epsilon\right). \label{Pf-Main-18}
\end{align}
Letting $n \rightarrow \infty$ and $\epsilon \rightarrow 0$, we have from \eqref{Pf-Main-18} that 
\begin{equation}
f_m(\boldsymbol{\lambda})R_0+\sum_{l=1}^L\lambda_lR_l \geq f_m(\boldsymbol{\lambda})\sum_{\alpha=1}^mH(S_\alpha)+\sum_{\alpha=m+1}^{L}f_\alpha(\boldsymbol{\lambda})H(S_\alpha)
\end{equation}
for any admissible rate tuple $(R_0,R_1,\ldots,R_L)$. This proves that $\mathcal{R} \subseteq \mathcal{R}_{sup}$ and hence completes the proof of the theorem.
\end{proof}

\subsubsection{Rate Allocation at the All-Access Encoder}\label{sec:Ext-SMDCA-RA}
In this section, we conclude our discussion on SMDC-A by focusing on a \emph{greedy} rate allocation policy at the all-access encoder. Based on our previous discussion in Section~\ref{sec:Ext-SMDCA-SC}, the output of the all-access encoder 0 consists of only \emph{uncoded} information bits for the source messages $W_1,\ldots,W_L$. Hence, its storage efficiency is the same for each of the information sources $S_1,\ldots,S_L$. On the other hand, for the randomly accessible encoders 1 to $L$, $S_1$ has the highest reconstruction requirement and hence is the least efficient source to encode, and $S_L$ has the lowest reconstruction requirement and hence is the most efficient source to encode. Therefore, intuitively, the greedy policy that assigns the remaining rate budget of the all-access encoder 0 to the least efficient source should be optimal. 

More specifically, suppose that the rate budget $R_0$ of the all-access encoder 0 satisfies
\begin{align}
\sum_{\alpha=1}^{q-1}H(S_\alpha) \leq R_0 < \sum_{\alpha=1}^{q}H(S_\alpha)
\end{align}
for some $q=1,\ldots,L$. The greedy policy stores the source messages $W_1,\ldots,W_{q-1}$ in their entireties (without any coding) at the all-access encoder 0, and the residual rate budget $R_0-\sum_{\alpha=1}^{q-1}H(S_\alpha)$ is then committed in full to the source message $W_q$. The residual source messages are $W_q$, with a residual rate
\begin{align}
H(S_q)-\left(R_0-\sum_{\alpha=1}^{q-1}H(S_\alpha)\right)=\sum_{\alpha=1}^{q}H(S_\alpha)-R_0
\end{align}
and $W_{q+1},\ldots,W_L$ with respective rates $H(S_{q+1}),\ldots,H(S_L)$. The residual source messages are encoded at the randomly accessible encoders using superposition coding, and the corresponding rate region $\mathcal{R}_{sup}'(R_0)$ is given by
\begin{align}
\mathcal{R}_{sup}'(R_0) &= \left\{(R_1,\ldots,R_L)\in (\mathbb{R}^+)^L: \sum_{l=1}^L\lambda_lR_l \geq  f_q(\boldsymbol{\lambda})\left(\sum_{\alpha=1}^{q}H(S_\alpha)-R_0\right)+\right.\notag\\
&\hspace{200pt} \left.\sum_{\alpha=q+1}^Lf_\alpha(\boldsymbol{\lambda})H(S_\alpha), \quad \forall \boldsymbol{\lambda} \in (\mathbb{R}^+)^L\right\}.
\end{align}
Of course, when
\begin{align}
R_0 \geq \sum_{\alpha=1}^{L}H(S_\alpha)
\end{align}
all source messages $W_1,\ldots,W_L$ can be stored at the all-access encoder 0 (without any coding), and there is no need to use the randomly access encoders 1 to $L$. In this case, we have $\mathcal{R}_{sup}'(R_0)=(\mathbb{R}^+)^L$.

To show that the aforementioned greedy rate allocation policy at the all-access encoder 0 is optimal, we need to show that $\mathcal{R}_{sup}'(R_0)$ \emph{matches} the $R_0$-slice of the superposition coding rate region 
\begin{align}
\mathcal{R}_{sup}(R_0) := \left\{(R_1,\ldots,R_L)\in (\mathbb{R}^+)^L:(R_0,R_1,\ldots,R_L) \in \mathcal{R}_{sup}\right\}
\end{align}
for \emph{all} $R_0 \geq 0$. By Proposition~\ref{prop:SC}, for any $R_0 \geq 0$ the $R_0$-slice of the superposition coding rate region can be written as
\begin{align}
\mathcal{R}_{sup}(R_0) &= \left\{(R_1,\ldots,R_L)\in (\mathbb{R}^+)^L: \sum_{l=1}^L\lambda_lR_l \geq  \max_{m=1,\ldots,L}\left\{
g_m(\boldsymbol{\lambda})\right\}, \quad \forall \boldsymbol{\lambda} \in (\mathbb{R}^+)^L\right\}
\end{align}
where
\begin{align}
g_m(\boldsymbol{\lambda}) := f_m(\boldsymbol{\lambda})\left(\sum_{\alpha=1}^{m}H(S_\alpha)-R_0\right)+
\sum_{\alpha=m+1}^Lf_\alpha(\boldsymbol{\lambda})H(S_\alpha)
\end{align}
For any $m=1,\ldots,L-1$, it is straightforward to calculate that
\begin{align}
g_{m+1}(\boldsymbol{\lambda})-g_m(\boldsymbol{\lambda}) = \left(f_{m+1}(\boldsymbol{\lambda})-f_{m}(\boldsymbol{\lambda})\right)\left(\sum_{\alpha=1}^{m}H(S_\alpha)-R_0\right).
\end{align}
By \eqref{eq:IC2}, $f_{m+1}(\boldsymbol{\lambda})-f_{m}(\boldsymbol{\lambda}) \leq 0$ for any $m=1,\ldots,L-1$ and $\boldsymbol{\lambda} \in (\mathbb{R}^+)^L$. Thus, when $\sum_{\alpha=1}^{q-1}H(S_\alpha) \leq R_0 < \sum_{\alpha=1}^{q}H(S_\alpha)$ for some $q=1,\ldots,L$, we have $\sum_{\alpha=1}^{m}H(S_\alpha)-R_0 \geq 0$ and hence $g_{m+1}(\boldsymbol{\lambda})-g_m(\boldsymbol{\lambda}) \leq 0$ for all $m=q,\ldots,L-1$, and $\sum_{\alpha=1}^{m}H(S_\alpha)-R_0 \leq 0$ and hence $g_{m+1}(\boldsymbol{\lambda})-g_m(\boldsymbol{\lambda}) \geq 0$ for all $m=1,\ldots,q-1$. We conclude that in this case,
\begin{align}
\max_{m=1,\ldots,L}\left\{g_m(\boldsymbol{\lambda})\right\}=g_q(\boldsymbol{\lambda})=f_q(\boldsymbol{\lambda})\left(\sum_{\alpha=1}^{q}H(S_\alpha)-R_0\right)+\sum_{\alpha=q+1}^Lf_\alpha(\boldsymbol{\lambda})H(S_\alpha)
\end{align}
for any $\boldsymbol{\lambda} \in (\mathbb{R}^+)^L$ and hence $\mathcal{R}_{sup}(R_0)=\mathcal{R}_{sup}'(R_0)$. When$R_0 \geq \sum_{\alpha=1}^{L}H(S_\alpha)$, we have $\sum_{\alpha=1}^{m}H(S_\alpha)-R_0 \leq 0$ and hence $g_{m+1}(\boldsymbol{\lambda})-g_m(\boldsymbol{\lambda}) \geq 0$ for all $m=1,\ldots,L-1$. In this case,
\begin{align}
\max_{m=1,\ldots,L}\left\{g_m(\boldsymbol{\lambda})\right\}=g_L(\boldsymbol{\lambda})=f_L(\boldsymbol{\lambda})\left(\sum_{\alpha=1}^{L}H(S_\alpha)-R_0\right)\leq 0
\end{align}
for any $\boldsymbol{\lambda} \in (\mathbb{R}^+)^L$, and we once again have $\mathcal{R}_{sup}(R_0)=\mathcal{R}_{sup}'(R_0)$. We summarize the above results in the following theorem.

\begin{theorem}
Greedy rate allocation at the all-access encoder combined with superposition coding at the randomly accessible encoders can achieve the entire admissible rate region for the general SMDC-A problem.
\end{theorem}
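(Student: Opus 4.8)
The plan is to deduce the theorem by combining Theorem~\ref{thm:SMDC-A}, which identifies the admissible rate region $\mathcal{R}$ with the superposition coding rate region $\mathcal{R}_{sup}$, with the fact that the greedy-plus-superposition scheme attains, for each choice of all-access budget $R_0\ge 0$, exactly the slice $\mathcal{R}_{sup}(R_0)$. First I would check that for a fixed $R_0$ the greedy policy achieves the region $\mathcal{R}_{sup}'(R_0)$ at the randomly accessible encoders: storing the source messages $W_1,\dots,W_{q-1}$ verbatim at encoder~$0$ and committing the residual budget $R_0-\sum_{\alpha=1}^{q-1}H(S_\alpha)$ in full to $W_q$ leaves a residual problem with sources of entropy rates $\sum_{\alpha=1}^{q}H(S_\alpha)-R_0$, $H(S_{q+1})$, \dots, $H(S_L)$, where the $\alpha$-th residual source ($\alpha=q,\dots,L$) must be recoverable from any $\alpha$ randomly accessible encoder outputs; applying the single-source MDS argument \eqref{eq:SSDC} level by level and passing to supporting hyperplanes as in the derivation of \eqref{eq:sup3}, this residual region is precisely the $\mathcal{R}_{sup}'(R_0)$ displayed above. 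Hence the greedy scheme achieves $\bigcup_{R_0\ge 0}\bigl(\{R_0\}\times\mathcal{R}_{sup}'(R_0)\bigr)$.

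Next I would establish the slice identity $\mathcal{R}_{sup}'(R_0)=\mathcal{R}_{sup}(R_0)$ for every $R_0\ge 0$. By Proposition~\ref{prop:SC}, $\mathcal{R}_{sup}(R_0)$ is the set of tuples satisfying $\sum_{l=1}^{L}\lambda_l R_l\ge\max_{1\le m\le L}g_m(\boldsymbol{\lambda})$ for all $\boldsymbol{\lambda}\in(\mathbb{R}^+)^L$, where $g_m(\boldsymbol{\lambda})=f_m(\boldsymbol{\lambda})\bigl(\sum_{\alpha=1}^{m}H(S_\alpha)-R_0\bigr)+\sum_{\alpha=m+1}^{L}f_\alpha(\boldsymbol{\lambda})H(S_\alpha)$. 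The first difference is $g_{m+1}(\boldsymbol{\lambda})-g_m(\boldsymbol{\lambda})=\bigl(f_{m+1}(\boldsymbol{\lambda})-f_m(\boldsymbol{\lambda})\bigr)\bigl(\sum_{\alpha=1}^{m}H(S_\alpha)-R_0\bigr)$, and since $f_1(\boldsymbol{\lambda})\ge\cdots\ge f_L(\boldsymbol{\lambda})\ge 0$ by \eqref{eq:IC2}, the sequence $g_1(\boldsymbol{\lambda}),\dots,g_L(\boldsymbol{\lambda})$ is unimodal with maximum at the index $q$ determined by $\sum_{\alpha=1}^{q-1}H(S_\alpha)\le R_0<\sum_{\alpha=1}^{q}H(S_\alpha)$, and at $m=L$ in the degenerate regime $R_0\ge\sum_{\alpha=1}^{L}H(S_\alpha)$ (where $g_L(\boldsymbol{\lambda})\le 0$ for all $\boldsymbol{\lambda}$, so the slice is all of $(\mathbb{R}^+)^L$). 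As $g_q(\boldsymbol{\lambda})$ is exactly the bound defining $\mathcal{R}_{sup}'(R_0)$, the two regions agree, and therefore the greedy scheme achieves $\bigcup_{R_0\ge 0}\bigl(\{R_0\}\times\mathcal{R}_{sup}(R_0)\bigr)=\mathcal{R}_{sup}=\mathcal{R}$.

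The step that requires the most care — though there is no deep obstacle here, since all the heavy lifting (the subset entropy inequality of Yeung and Zhang and Theorem~\ref{thm:SMDC-A}) is already in place — is the unimodality analysis of $g_1(\boldsymbol{\lambda}),\dots,g_L(\boldsymbol{\lambda})$ and the correct identification of the maximizing index $q$: one must track the sign of $\sum_{\alpha=1}^{m}H(S_\alpha)-R_0$ as $m$ varies against the monotonicity $f_{m+1}(\boldsymbol{\lambda})\le f_m(\boldsymbol{\lambda})$, so that $g_{m+1}-g_m\ge 0$ for $m<q$ and $g_{m+1}-g_m\le 0$ for $m\ge q$, placing the maximum at $m=q$ for every $\boldsymbol{\lambda}$ simultaneously. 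This is precisely what lets one avoid arguing about arbitrary ways of splitting $R_0$ among $W_1,\dots,W_L$: any alternative allocation that leaves residual rate spread over $W_{q'},\dots,W_L$ has $g_{q'}$ as its governing supporting-hyperplane bound, and $g_{q'}(\boldsymbol{\lambda})\le g_q(\boldsymbol{\lambda})$ for all $\boldsymbol{\lambda}$, so no alternative enlarges the slice. The boundary cases — $q=1$, where nothing is stored uncoded and $\mathcal{R}_{sup}'(R_0)$ is the classical SMDC slice, and $R_0\ge\sum_\alpha H(S_\alpha)$, where the randomly accessible encoders are idle and $\mathcal{R}_{sup}'(R_0)=(\mathbb{R}^+)^L$ — are then immediate.
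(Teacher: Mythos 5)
Your proposal is correct and follows essentially the same route as the paper: invoke Theorem~\ref{thm:SMDC-A} to reduce the claim to showing $\mathcal{R}_{sup}'(R_0)=\mathcal{R}_{sup}(R_0)$ for every $R_0\ge 0$, then use Proposition~\ref{prop:SC} to write the slice via $\max_m g_m(\boldsymbol{\lambda})$, compute $g_{m+1}(\boldsymbol{\lambda})-g_m(\boldsymbol{\lambda})=\bigl(f_{m+1}(\boldsymbol{\lambda})-f_m(\boldsymbol{\lambda})\bigr)\bigl(\sum_{\alpha=1}^{m}H(S_\alpha)-R_0\bigr)$, and use the monotonicity \eqref{eq:IC2} to place the maximum at $m=q$ (or at $m=L$ when $R_0\ge\sum_{\alpha=1}^{L}H(S_\alpha)$), which is exactly the paper's argument.
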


\subsection{Extension 2: S-SMDC}\label{sec:Ext-SSMDC}

\subsubsection{Problem Statement}\label{sec:Ext-SSMDC-PS}
\begin{figure}[!t]
\centering
\includegraphics[width=0.95\linewidth,draft=false]{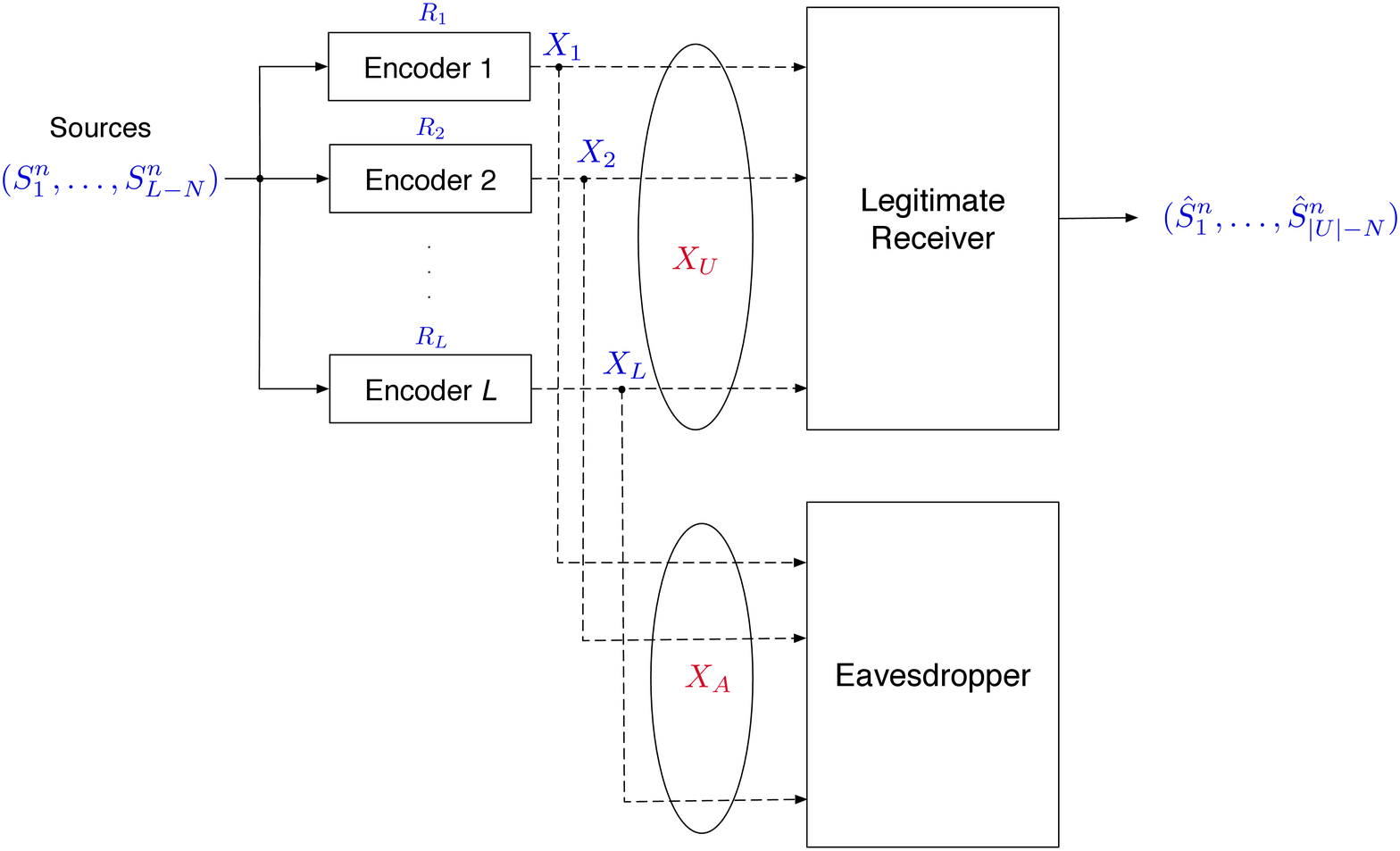}
\caption{S-SMDC with $L$ randomly accessible encoders $1$ to $L$. A total of $L-N$ independent discrete memoryless sources $(S_1,\ldots,S_{L-N})$ are to be encoded at the encoders. The legitimate receiver, which has access to a subset $U$ of the encoder outputs, needs to nearly perfectly reconstruct the sources $(S_1,\ldots,S_{|U|-N})$ whenever $|U| \geq N+1$. The eavesdropper has access to a subset $A$ of the encoder ouputs. All sources $(S_1,\ldots,S_{L-N})$ need to be kept perfectly secret from the eavesdropper whenever $|A| \leq N$.}
\label{fig:SSMDC}
\end{figure}

Let $L$ be a positive integer, and let $N \in \{0,\ldots,L-1\}$. Let $\{S_1[t],\ldots,S_{L-N}[t]\}_{t=1}^{\infty}$ be a collection of $L-N$ \emph{independent} discrete memoryless sources with time index $t$, and let $S_\alpha^n:=(S_\alpha[1],\ldots,S_\alpha[n])$ for $\alpha=1,\ldots,L-N$. As illustrated in Figure~\ref{fig:SSMDC}, an $(L,N)$ S-SMDC problem consists of a set of $L$ encoders, a legitimate receiver who has access to a subset $U$ of the encoder outputs, and an eavesdropper who has access to a subset $A$ of the encoder outputs. Which subsets of the encoder outputs are available at the legitimate receiver and the eavesdropper are \emph{unknown} a priori at the encoders. However, no matter which subsets $U$ and $A$ actually occur, the legitimate receiver must be able to asymptotically perfectly reconstruct the sources $(S_1,\ldots,S_\alpha)$ whenever $|U| \geq N+\alpha$, and all sources $(S_1,\ldots,S_{L-N})$ must be kept perfectly secure from the eavesdropper as long as $|A|\leq N$. 

Formally, an $(n,(M_1,\ldots,M_L))$ code is defined by a collection of $L$ encoding functions
\begin{equation}
e_l: \prod_{\alpha=1}^{L-N}\mathcal{S}_\alpha^n \times \mathcal{K} \rightarrow \{1,\ldots,M_l\}, \quad \forall l=1,\ldots,L
\end{equation}
and $\sum_{\alpha=N+1}^L
\left(
\begin{array}{c}
L   \\
\alpha
\end{array}
\right)$ decoding functions
\begin{equation}
d_U: \prod_{l \in U}\{1,\ldots,M_l\} \rightarrow \prod_{\alpha=1}^{|U|-N}\mathcal{S}_\alpha^n, \quad \forall U \subseteq \Omega_L \; \mbox{s.t.} \; |U| \geq N+1
\end{equation}
where $\mathcal{K}$ is the key space accessible to all $L$ encoders. A nonnegative rate tuple $(R_1,\ldots,R_L)$ is said to be \emph{admissible} if for every $\epsilon>0$, there exits, for sufficiently large block length $n$, an $(n,(M_1,\ldots,M_L))$ code such that:
\begin{itemize}
\item (Rate constraints)
\begin{equation}
\frac{1}{n}\log M_l \leq R_l +\epsilon, \quad \forall l =1,\ldots,L;
\label{eq:M-Rate}
\end{equation}
\item (Asymptotically perfect reconstruction at the legitimate receiver)
\begin{equation}
\mathrm{Pr}\{d_U(X_U) \neq (S_1^n,\ldots,S_{|U|-N}^n)\} \leq \epsilon, \quad \forall U \subseteq \Omega_L \; \mbox{s.t.} \; |U| \geq N+1
\label{eq:M-Rec}
\end{equation}
where $X_l:=e_l((S_1^n,\ldots,S_{L-N}^n),K)$ is the output of the $l$th encoder, and $K$ is the secret key shared by all $L$ encoders; and
\item (Perfect secrecy at the eavesdropper)
\begin{equation}
H(S_1^n,\ldots,S_{L-N}^n|X_A) = H(S_1^n,\ldots,S_{L-N}^n), \quad \forall A \subseteq \Omega_L \; \mbox{s.t.} \; |A|\leq N
\label{eq:M-Per}
\end{equation}
i.e., observing the encoder outputs $X_A$ does not provide \emph{any} information regarding to the sources $(S_1^n,\ldots,S_{L-N}^n)$. 
\end{itemize}
The \emph{admissible rate region} $\mathcal R$ is the collection of \emph{all} admissible rate tuples $(R_1,\dots,R_L)$.

\subsubsection{Superposition Coding Rate Region}\label{sec:Ext-SSMDC-SC}
A simple strategy for S-SMDC is to encode each of the $L-N$ sources separately without coding across different sources. Formally, the problem of encoding a single source $S_\alpha$ can be viewed as a special case of the general  S-SMDC problem with $H(S_m)=0$ for all $m \neq \alpha$. When $\alpha=1$, the problem of encoding the single source $S_1$ is the well-known $(L,N+1)$ \emph{threshold secret sharing} problem, for which the admissible rate region was characterized in the classical works \cite{Sha-CACM79,Bla-NCC79}. For the general case with $\alpha \geq 1$, the admissible rate region for encoding the single source $S_\alpha$ was characterized in \cite{BLLLM-ITR12} via a connection to the problem of threshold \emph{ramp-type} secret sharing \cite{Yam_ITF85,BM-Crypto85} and utilizing some basic polyhedral structure of the admissible rate region. The result is summarized in the following proposition. 

\begin{prop} \label{prop:S-SSDC}
Let $\mathcal{R}^{(\alpha)}$ be the collection of all admissible rate tuples for encoding the single source $S_\alpha$. Then, $\mathcal{R}^{(\alpha)}$ is given by the collection of all nonnegative tuples $(r_1^{(\alpha)},\ldots,r_L^{(\alpha)})$ such that
\begin{equation}
\sum_{l\in U}r_l^{(\alpha)}\ge H(S_\alpha), \quad \forall U\in\Omega_L^{(\alpha)}.
\label{eq:S-SSDC}
\end{equation}
\end{prop}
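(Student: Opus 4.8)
The plan is to prove the two inclusions of Proposition~\ref{prop:S-SSDC} separately. The ``$\supseteq$'' (achievability) direction is exactly the result of \cite{BLLLM-ITR12}, obtained by reducing the single-source S-SMDC problem (where $H(S_m)=0$ for all $m\ne\alpha$) to threshold ramp-type secret sharing \cite{Yam_ITF85,BM-Crypto85}. I would first compress $S_\alpha^n$ into a message $W_\alpha$ of rate arbitrarily close to $H(S_\alpha)$, and then, at the symmetric rate point $r_l^{(\alpha)}=H(S_\alpha)/\alpha$, apply an $(N+\alpha,N)$-threshold ramp scheme over a large finite field (of Reed--Solomon type) whose $N+\alpha$ coordinates are loaded with $N$ packetized blocks of the shared secret key $K$ and $\alpha$ packetized blocks of $W_\alpha$: any $N+\alpha$ shares then recover all coordinates (hence $W_\alpha$), while any $N$ shares stay statistically independent of $W_\alpha$, so \eqref{eq:M-Rec}--\eqref{eq:M-Per} hold with share rates $H(S_\alpha)/\alpha$. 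A general nonnegative tuple satisfying \eqref{eq:S-SSDC} is then reached from the polyhedral structure of that region: it is the Minkowski sum of the convex hull of finitely many extreme points with $(\mathbb{R}^+)^L$, each extreme point assigns rate zero to some encoders and a balanced allocation to the rest (realizable by an appropriate ramp scheme), and time-sharing across extreme points, together with the trivial fact that raising any rate preserves admissibility, covers the whole region.

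For the ``$\subseteq$'' (converse) direction, fix $U\in\Omega_L^{(\alpha)}$ and let $(r_1^{(\alpha)},\ldots,r_L^{(\alpha)})$ be admissible; the claim is $\sum_{l\in U}r_l^{(\alpha)}\ge H(S_\alpha)$. Since $\alpha\le L-N$, pick $B\subseteq\Omega_L\setminus U$ with $|B|=N$. As $|B|\le N$, the perfect-secrecy requirement \eqref{eq:M-Per} (specialized to the single source $S_\alpha$) gives $H(S_\alpha^n\mid X_B)=H(S_\alpha^n)$; and since $|U\cup B|=N+\alpha$, so that $|U\cup B|-N=\alpha$, the reconstruction requirement \eqref{eq:M-Rec} and Fano's inequality give $H(S_\alpha^n\mid X_{U\cup B})\le n\delta^{(n)}$ with $\delta^{(n)}\to0$ as $n\to\infty$ and $\epsilon\to0$. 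A short chain then yields
\begin{align}
H(X_U) &\ge H(X_U\mid X_B) \ge I(X_U;S_\alpha^n\mid X_B) = H(S_\alpha^n\mid X_B)-H(S_\alpha^n\mid X_U,X_B)\notag\\
&= H(S_\alpha^n)-H(S_\alpha^n\mid X_{U\cup B}) \ge nH(S_\alpha)-n\delta^{(n)},
\end{align}
where the last equality uses that $X_U$ and $X_B$ together are $X_{U\cup B}$ and that $S_\alpha$ is memoryless. Combining with the rate constraints \eqref{eq:M-Rate} and subadditivity of entropy, $\sum_{l\in U}(r_l^{(\alpha)}+\epsilon)\ge\frac1n\sum_{l\in U}\log M_l\ge\frac1n\sum_{l\in U}H(X_l)\ge\frac1nH(X_U)\ge H(S_\alpha)-\delta^{(n)}$; letting $n\to\infty$ and $\epsilon\to0$ gives the claim.

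The routine parts are the converse above and the symmetric ramp construction; the real work is hidden in the achievability at a \emph{general}, asymmetric rate tuple, where the polyhedral/extreme-point analysis of \cite{BLLLM-ITR12} is needed to exhibit an explicit ramp-type code matching each vertex of \eqref{eq:S-SSDC}. I would invoke that analysis rather than reproduce it, and flag it as the main obstacle.
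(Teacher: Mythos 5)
The paper itself gives no proof of Proposition~\ref{prop:S-SSDC}: it is imported directly from \cite{BLLLM-ITR12}, which establishes it via exactly the route you describe (ramp-type secret sharing realizing the balanced and degenerate extreme points of the polyhedron \eqref{eq:S-SSDC}, plus time-sharing and the polyhedral structure), so your achievability sketch coincides with the cited argument rather than replacing it. Your converse chain --- choosing $B\subseteq\Omega_L\setminus U$ with $|B|=N$, invoking perfect secrecy to get $H(S_\alpha^n\mid X_B)=H(S_\alpha^n)$, applying Fano's inequality to $X_{U\cup B}$ with $|U\cup B|=N+\alpha$, and finishing with conditioning and subadditivity --- is correct and complete, so your proposal is consistent with (and slightly more self-contained than) the paper's treatment.
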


By definition, the superposition coding rate region $\mathcal{R}_{sup}$ for encoding the sources $S_1,\ldots,S_{L-N}$ is given by the collection of nonnegative rate tuples $(R_1,\ldots,R_L)$ such that
\begin{equation}
R_l := \sum_{\alpha=1}^{L-N}r_l^{(\alpha)}, \quad \forall (r_l^{(1)},\ldots,r_l^{(L-N)}) \in \prod_{\alpha=1}^{L-N}\mathcal{R}^{(\alpha)}.
\end{equation}
Note that $\mathcal{R}^{(\alpha)}$ is \emph{identical} to the admissible rate region for encoding the single source $S_\alpha$ in classical SMDC (even though the reconstruction and secrecy requirements are different between these two settings). We thus conclude that the superposition coding rate region $\mathcal{R}_{sup}$ for S-SMDC is given by the collection of nonnegative rate tuples $(R_1,\ldots,R_L)$ satisfying
\begin{equation}
\sum_{l=1}^L\lambda_lR_l \geq \sum_{\alpha=1}^{L-N}f_\alpha(\boldsymbol{\lambda})H(S_\alpha), \quad \forall \boldsymbol{\lambda} \in (\mathbb{R}^+)^L
\label{eq:CJH}
\end{equation}
where $f_\alpha(\boldsymbol{\lambda})$ is the optimal value of the linear program \eqref{eq:dual2}.

\subsubsection{Optimality of Superposition Coding}
In \cite{BLLLM-ITR12}, it was shown that superposition coding can achieve the minimum sum rate for the general S-SMDC problem. The proof was based on the trivial conditional version of the subset entropy inequality of Han. The main result of this section is to show that superposition coding can, in fact, achieve the entire admissible region for the general S-SMDC problem. Our main technical tool is the conditional extension of the subset entropy inequality of Yeung and Zhang proved in Theorem~\ref{thm:CYZ}.

\begin{theorem}\label{thm:S-SMDC}
For the general S-SMDC problem, the admissible rate region
\begin{equation}
\mathcal{R}=\mathcal{R}_{sup}.
\end{equation}
\end{theorem}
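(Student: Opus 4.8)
The plan is to follow the same two-step template used for Theorem~\ref{thm:SMDC-A}. The inclusion $\mathcal{R}_{sup} \subseteq \mathcal{R}$ is immediate from the superposition-coding construction of Section~\ref{sec:Ext-SSMDC-SC}, so the real work is the converse $\mathcal{R} \subseteq \mathcal{R}_{sup}$: every admissible $(R_1,\ldots,R_L)$ must satisfy the supporting-hyperplane inequalities \eqref{eq:CJH}, i.e. $\sum_{l=1}^L \lambda_l R_l \ge \sum_{\alpha=1}^{L-N} f_\alpha(\boldsymbol{\lambda}) H(S_\alpha)$ for every $\boldsymbol{\lambda} \in (\mathbb{R}^+)^L$. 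The main technical input is Theorem~\ref{thm:CYZ}: fix $\boldsymbol{\lambda}$ and take the collections $\{\mathcal{A}_U\}$ and weights $s_{\boldsymbol{\lambda}}(U,\cdot)$ it provides (for the given $N$), so that each $c_{\boldsymbol{\lambda}}^{(\alpha)}$ with $c_{\boldsymbol{\lambda}}(U)=\sum_{A\in\mathcal{A}_U}s_{\boldsymbol{\lambda}}(U,A)$ is optimal for \eqref{eq:dual2}, every $A\in\mathcal{A}_U$ has $|A|=N$ and $A\cap U=\emptyset$, and the conditional subset entropy inequality \eqref{eq:CYZ2} holds.

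The heart of the argument is a conditional analogue of the single-step bound \eqref{eq:PR2}. Given an admissible code and $\epsilon>0$, fix $\alpha\in\{1,\ldots,L-N\}$, $U\in\Omega_L^{(\alpha)}$ and $A\in\mathcal{A}_U$. Expanding $H(X_U\mid X_A,S_1^n,\ldots,S_{\alpha-1}^n)$ by the chain rule and using (i) perfect secrecy \eqref{eq:M-Per} together with the mutual independence of the sources, via a chain-rule decomposition of $I(S_1^n,\ldots,S_{L-N}^n;X_A)=0$, to get $H(S_\alpha^n\mid X_A,S_1^n,\ldots,S_{\alpha-1}^n)=nH(S_\alpha)$, and (ii) Fano's inequality \eqref{eq:M-Rec} applied to $U\cup A$, which has size $N+\alpha\ge N+1$, to get $H(S_\alpha^n\mid X_{A\cup U},S_1^n,\ldots,S_{\alpha-1}^n)\le H(S_1^n,\ldots,S_\alpha^n\mid X_{A\cup U})\le n\delta_\alpha^{(n)}$, I would obtain
\[
H(X_U\mid X_A,S_1^n,\ldots,S_{\alpha-1}^n)\ \ge\ H(X_U\mid X_A,S_1^n,\ldots,S_\alpha^n)+nH(S_\alpha)-n\delta_\alpha^{(n)}
\]
(for $\alpha=1$ there is no source to condition on, and secrecy alone gives $H(S_1^n\mid X_A)=nH(S_1)$).

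The rest parallels the known proofs. Starting from the rate constraints $n(R_l+\epsilon)\ge H(X_l)$ and the uniqueness of the optimal solution $c_{\boldsymbol{\lambda}}(\{l\})=\lambda_l$ of \eqref{eq:dual2} at $\alpha=1$, I would prove by induction on $m=1,\ldots,L-N$ that
\[
\sum_{l=1}^L\lambda_l H(X_l)\ \ge\ \sum_{U\in\Omega_L^{(m)}}\sum_{A\in\mathcal{A}_U}s_{\boldsymbol{\lambda}}(U,A)\,H(X_U\mid X_A,S_1^n,\ldots,S_m^n)+n\sum_{\alpha=1}^m f_\alpha(\boldsymbol{\lambda})H(S_\alpha)-n\sum_{\alpha=1}^m f_\alpha(\boldsymbol{\lambda})\delta_\alpha^{(n)}.
\]
The base case $m=1$ uses $\sum_l\lambda_l H(X_l)=\sum_{V\in\Omega_L^{(1)}}\sum_{A\in\mathcal{A}_V}s_{\boldsymbol{\lambda}}(V,A)H(X_V)\ge\sum_V\sum_A s_{\boldsymbol{\lambda}}(V,A)H(X_V\mid X_A)$ (conditioning reduces entropy), followed by the single-step bound and $\sum_V\sum_A s_{\boldsymbol{\lambda}}(V,A)=\sum_V c_{\boldsymbol{\lambda}}(V)=f_1(\boldsymbol{\lambda})$. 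For the inductive step I would (a) apply \eqref{eq:CYZ2} at level $\alpha=r$, in the form conditioned on the auxiliary variable $(S_1^n,\ldots,S_{r-1}^n)$ — legitimate because Theorem~\ref{thm:MT}, and hence Proposition~\ref{prop:cMT} and \eqref{eq:CYZ2}, remain valid under an extra conditioning — to pass from the sum over $\Omega_L^{(r-1)}$ to the sum over $\Omega_L^{(r)}$, and then (b) apply the single-step bound at level $\alpha=r$, multiply by $s_{\boldsymbol{\lambda}}(U,A)\ge0$, and sum over $U\in\Omega_L^{(r)}$, $A\in\mathcal{A}_U$, using $\sum_{U\in\Omega_L^{(r)}}\sum_A s_{\boldsymbol{\lambda}}(U,A)=f_r(\boldsymbol{\lambda})$. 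Taking $m=L-N$, dropping the nonnegative leftover conditional-entropy term, substituting $H(X_l)\le n(R_l+\epsilon)$, dividing by $n$ and letting $n\to\infty$, $\epsilon\to0$ yields exactly \eqref{eq:CJH}.

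The main obstacle, and the reason Theorem~\ref{thm:CYZ} rather than the plain Theorem~\ref{thm:YZ} is needed, is bookkeeping the eavesdropper sets $A$ through the recursion: the conditioning variable $X_A$ must be coupled to $U$ along the tree structure \eqref{eq:Av} so that averaging $H(\cdot\mid X_A)$ with the fractional covers $g_U$ telescopes correctly across levels, while at every level the pair $(U,A)$ must satisfy $|A|=N$, $A\cap U=\emptyset$ and $|U\cup A|\ge N+1$ so that both the secrecy step \eqref{eq:M-Per} and the reconstruction/Fano step \eqref{eq:M-Rec} apply to the same sets. That these combinatorial constraints are preserved by the construction is precisely what Theorem~\ref{thm:CYZ} guarantees; once it is in place, the remaining entropy manipulations are routine.
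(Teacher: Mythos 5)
Your proposal is correct and follows essentially the same route as the paper's proof: it invokes Theorem~\ref{thm:CYZ} for the fixed $\boldsymbol{\lambda}$ and $N$, establishes the conditional single-step bound (the paper's \eqref{eq:ML}) from Fano's inequality on $U\cup A$ with $|U\cup A|=N+\alpha$ together with the perfect-secrecy condition, and then runs the same induction on $m$ up to $L-N$ using the conditional subset entropy inequality \eqref{eq:CYZ2} (applied, as you note, with the extra conditioning on the sources) before dropping the residual term and passing to the limit. The only differences are cosmetic: you fold the conditioning-reduces-entropy step and the $\alpha=1$ identification $c_{\boldsymbol{\lambda}}(\{l\})=\lambda_l$ into the base case rather than at the end, and you derive the single-step bound via the chain-rule/mutual-information decomposition instead of the paper's additive chain.
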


\begin{proof}
Based on the discussions from Section~\ref{sec:Ext-SSMDC-SC}, we naturally have $\mathcal{R}_{sup} \subseteq \mathcal{R}$. Thus, to show $\mathcal{R}_{sup} = \mathcal{R}$ we only need to show that $\mathcal{R} \subseteq \mathcal{R}_{sup}$, i.e., \emph{any} admissible rate tuple $(R_1,\ldots,R_L)$ must satisfy \eqref{eq:CJH}.

Let $(R_1,\ldots,R_L)$ be an admissible rate tuple. By definition, for any sufficiently large block-length $n$ there exists an $(n,(M_1,\ldots,M_L))$ code satisfying the rate constraints \eqref{eq:M-Rate} for the admissible rate tuple $(R_1,\ldots,R_L)$, the asymptotically perfect reconstruction requirement \eqref{eq:M-Rec}, and the perfect secrecy requirement \eqref{eq:M-Per}. Fix $\boldsymbol{\lambda} \in (\mathbb{R}^+)^L$, and choose $\mathcal{A}^{(\alpha)}$ and $s_{\boldsymbol{\lambda}}^{(\alpha)}$, $\alpha=1,\ldots,L-N$, to satisfy all the requirement of Theorem~\ref{thm:CYZ}. 

First, let us show that
\begin{equation}
H(X_U|S_1^n,\ldots,S_{\alpha-1}^n,X_A) \geq nH(S_\alpha)-n\delta_n^{(\alpha)}+H(X_U|S_1^n,\ldots,S_\alpha^n,X_A)
\label{eq:ML}
\end{equation}
for any $U \in \Omega_L^{(\alpha)}$, $A \in \mathcal{A}_U$, and $\alpha=1,\ldots,L-N$, where $\delta_n^{(\alpha)} \rightarrow 0$ in the limit as $n \rightarrow \infty$ and $\epsilon \rightarrow 0$.

Fix $U \in \Omega_L^{(\alpha)}$, $A \in \mathcal{A}_U$, and $\alpha=1,\ldots,L-N$. By construction $|U|=\alpha$, $|A|=N$, and $A\cap U =\emptyset$, so we have $|U \cup A|=|U|+|A|=N+\alpha$. By the asymptotically perfect reconstruction requirement \eqref{eq:M-Rec} and the well-known Fano's inequality, we have
\begin{equation}
H(S_1^n,\ldots,S_\alpha^n|X_U,X_A)\leq n\delta_n^{(\alpha)} \label{T660}
\end{equation}
where $\delta_n^{(\alpha)} \rightarrow 0$ in the limit as $n \rightarrow \infty$ and $\epsilon \rightarrow 0$. Furthermore, by the perfect secrecy requirement \eqref{eq:M-Per} we have
\begin{equation}
H(S_1^n,\ldots,S_\alpha^n|X_A)=H(S_1^n,\ldots,S_\alpha^n).\label{T661}
\end{equation}
We thus have
\begin{eqnarray}
&& H(X_U|S_1^n,\ldots,S_{\alpha-1}^n,X_A)+n\delta_n^{(\alpha)}\nonumber\\
&& \hspace{20pt} \geq \; H(X_U|S_1^n,\ldots,S_{\alpha-1}^n,X_A)+H(S_1^n,\ldots,S_\alpha^n|X_U,X_A)
\label{T900}\\
&& \hspace{20pt} \geq \; H(X_U|S_1^n,\ldots,S_{\alpha-1}^n,X_A)+H(S_\alpha^n|S_1^n,\ldots,S_{\alpha-1}^n,X_U,X_A)
\\
&& \hspace{20pt} = \; H(X_U,S_{k}^n|S_1^n,\ldots,S_{\alpha-1}^n,X_A)\\
&& \hspace{20pt} = \; H(S_\alpha^n|S_1^n,\ldots,S_{k-1}^n,X_A)+H(X_V|S_1^n,\ldots,S_\alpha^n,X_A)\\
&& \hspace{20pt} = \; H(S_1^n,\ldots,S_\alpha^n|X_A)-H(S_1^n,\ldots,S_{\alpha-1}^n|X_A)+H(X_U|S_1^n,\ldots,S_\alpha^n,X_A)\\
&& \hspace{20pt} = \; H(S_1^n,\ldots,S_\alpha^n)-H(S_1^n,\ldots,S_{\alpha-1}^n|X_A)+H(X_U|S_1^n,\ldots,S_\alpha^n,X_A) \label{T901}\\
&& \hspace{20pt} \geq \; H(S_1^n,\ldots,S_\alpha^n)-H(S_1^n,\ldots,S_{\alpha-1}^n)+H(X_U|S_1^n,\ldots,S_\alpha^n,X_A)\label{T902}\\
&& \hspace{20pt} = \; H(S_\alpha^n|S_1^n,\ldots,S_{\alpha-1}^n)+H(X_U|S_1^n,\ldots,S_\alpha^n,X_A)\\
&& \hspace{20pt} = \; H(S_\alpha^n)+H(X_U|S_1^n,\ldots,S_\alpha^n,X_A) \label{T903}\\
&& \hspace{20pt} = \; nH(S_\alpha)+H(X_U|S_1^n,\ldots,S_\alpha^n,X_A) \label{T904}
\end{eqnarray}
where \eqref{T900} follows from \eqref{T660}, \eqref{T901} follows from \eqref{T661}, \eqref{T902} follows from the fact that conditioning reduces entropy, \eqref{T903} follows from the fact that the sources $S_1,\ldots,S_\alpha$ are mutually independent, and \eqref{T904} follows from the fact that the source $S_\alpha$ is memoryless. Moving $n\delta_n^{(\alpha)}$ to the right-hand side of the inequality completes the proof of \eqref{eq:ML}.

Next, let us we show that 
\begin{align}
&\sum_{U\in\Omega_L^{(1)}}\sum_{A\in \mathcal A_U}s_{\boldsymbol{\lambda}}(U,A)H(X_U|X_A)\notag\\
& \ge n\sum_{\alpha=1}^{m} f_\alpha(\boldsymbol{\lambda})H(S_\alpha)-n\sum_{\alpha=1}^{m}f_\alpha(\boldsymbol{\lambda})\delta_n^{(\alpha)}+\sum_{U\in\Omega_L^{(m)}}\sum_{A\in \mathcal A_U}s_{\boldsymbol{\lambda}}(U,A)H(X_U|S_1^n,\ldots,S_m^n,X_A)
\label{eq:induc}
\end{align}
for any $m=1,\ldots,L-N$. 

Consider a proof via an induction on $m$. First consider the base case with $m=1$. We have
\begin{eqnarray}
&&\sum_{U \in \Omega_L^{(1)}}\sum_{A \in \mathcal A(U)}s_{\boldsymbol{\lambda}}(U,A)H(X_U|X_A)\nonumber\\
&& \hspace{20pt} \ge \; \sum_{U \in \Omega_L^{(1)}}\sum_{A \in \mathcal A(U)}s_{\boldsymbol{\lambda}}(U,A)\left[nH(S_1)-n\delta_n^{(1)}+H(X_U|S_1^n,X_A)\right]\label{eq:c1}\\
&& \hspace{20pt} = \; nf_1(\boldsymbol{\lambda})H(S_1)-nf_1(\boldsymbol{\lambda})\delta_n^{(1)}+\sum_{U \in \Omega_L^{(1)}}\sum_{A \in \mathcal A_U}s_{\boldsymbol{\lambda}}(U,A)H(X_U|S_1^n,X_A)\label{eq:c2}
\end{eqnarray}
where \eqref{eq:c1} follows from \eqref{eq:ML} with $\alpha=1$. 

Next, assume that the inequality \eqref{eq:induc} holds for $m=k-1$ for some $k =2,\ldots,L-N$, i.e.,
\begin{align}
&\sum_{U \in \Omega_L^{(1)}}\sum_{A\in \mathcal{A}_U}s_{\boldsymbol{\lambda}}(U,A)H(X_U|X_A) \notag\\
& \ge n\sum_{\alpha=1}^{k-1}f_\alpha(\boldsymbol{\lambda})H(S_\alpha)-n\sum_{\alpha=1}^{k-1}f_\alpha(\boldsymbol{\lambda})\delta_n^{(\alpha)}+\sum_{V\in\Omega_L^{(k-1)}}\sum_{A\in \mathcal{A}_U}s_{\boldsymbol{\lambda}}(U,A)H(X_U|S_1^n,\ldots,S_{k-1}^n,X_A).
\label{eq:induc2}
\end{align}
We have
\begin{eqnarray}
\hspace{-20pt}&&\sum_{U \in \Omega_L^{(k-1)}}\sum_{A\in \mathcal{A}_U}s_{\boldsymbol{\lambda}}(U,A)H(X_U|S_1^n,\ldots,S_{k-1}^n,X_A)\nonumber\\
\hspace{-20pt}&& \hspace{20pt} \geq \; \sum_{U \in \Omega_L^{(k)}}\sum_{A\in \mathcal{A}_U}s_{k}(U,A)H(X_U|S_1^n,\ldots,S_{k-1}^n,X_A)\label{eq:c3}\\
\hspace{-20pt}&& \hspace{20pt} \geq \; \sum_{U \in \Omega_L^{(k)}}\sum_{A\in \mathcal{A}_U}s_{k}(U,A)\left[nH(S_{k})-n\delta_n^{(k)}+H(X_U|S_1^n,\ldots,S_k^n,X_A)\right]\label{eq:c4}\\
\hspace{-20pt}&& \hspace{20pt} \geq \; nf_k(\boldsymbol{\lambda})H(S_k)-nf_k(\boldsymbol{\lambda})\delta_n^{(k)}+\sum_{U \in \Omega_L^{(k)}}\sum_{A \in \mathcal{A}_U}s_{k}(U,A)H(X_U|S_1^n,\ldots,S_k^n,X_A)\label{eq:c5}
\end{eqnarray}
where \eqref{eq:c3} follows from \eqref{eq:CYZ2}, and \eqref{eq:c4} follows from \eqref{eq:ML} with $\alpha=k$. Substituting \eqref{eq:c5} into \eqref{eq:induc2} gives
\begin{align}
&\sum_{U\in\Omega_L^{(1)}}\sum_{A\in \mathcal A_U}s_{\boldsymbol{\lambda}}(U,A)H(X_U|X_A)\notag\\
& \ge n\sum_{\alpha=1}^{k} f_\alpha(\boldsymbol{\lambda})H(S_\alpha)-n\sum_{\alpha=1}^{k}f_\alpha(\boldsymbol{\lambda})\delta_n^{(\alpha)}+\sum_{U\in\Omega_L^{(k)}}\sum_{A\in \mathcal A_U}s_{\boldsymbol{\lambda}}(U,A)H(X_U|S_1^n,\ldots,S_k^n,X_A)
\label{eq:induc3}
\end{align}
i.e., the inequality \eqref{eq:induc} also holds for $m=k$. This completes the induction step and hence the proof of \eqref{eq:induc}.

Finally, note that for $\alpha=1$ the optimal solution for the linear program \eqref{eq:dual2} is \emph{unique} and is given by
\begin{equation}
c_{\boldsymbol{\lambda}}(\{l\})=\lambda_l, \quad \forall l=1,\ldots,L.
\end{equation}
We thus have
\begin{eqnarray}
n\left(\sum_{l=1}^L\lambda_lR_l\right) &=&\sum_{l=1}^Lc_{\boldsymbol{\lambda}}(\{l\})nR_l\\
& \geq & \sum_{l=1}^Lc_{\boldsymbol{\lambda}}(\{l\})(H(X_l)-n\epsilon) \label{Pf-SSMDC-1}\\
& = & \sum_{l=1}^Lc_{\boldsymbol{\lambda}}(\{l\})H(X_l)-nf_1(\boldsymbol{\lambda})\epsilon \label{Pf-SSMDC-2}\\
& = & \sum_{U \in \Omega_L^{(1)}}c_{\boldsymbol{\lambda}}(U)H(X_U)-nf_1(\boldsymbol{\lambda})\epsilon\\
& = & \sum_{U \in \Omega_L^{(1)}}\left[\sum_{A\in \mathcal A_U}s_{\boldsymbol{\lambda}}(U,A)\right]H(X_U)-nf_1(\boldsymbol{\lambda})\epsilon\\
& = & \sum_{U \in \Omega_L^{(1)}}\sum_{A\in \mathcal A_U}s_{\boldsymbol{\lambda}}(U,A)H(X_U)-nf_1(\boldsymbol{\lambda})\epsilon\\
&\ge &\sum_{U\in\Omega_L^{(1)}}\sum_{A\in \mathcal A_U}s_{\boldsymbol{\lambda}}(U,A)H(X_U|X_A)-nf_1(\boldsymbol{\lambda})\epsilon\label{Pf-SSMDC-3}\\
&\geq &  \left[n\sum_{\alpha=1}^{L-N}f_\alpha(\boldsymbol{\lambda})H(S_\alpha)-n\sum_{\alpha=1}^{L-N}f_\alpha(\boldsymbol{\lambda})\delta_n^{(\alpha)}+\right.\nonumber\\
&&\left.\sum_{U\in\Omega_L^{(L-N)}}\sum_{A\in \mathcal A_U}s_{\boldsymbol{\lambda}}(U,A)H(X_U|S_1^n,\ldots,S_{L-N}^n,X_A)\right]-nf_1(\boldsymbol{\lambda})\epsilon\label{Pf-SSMDC-4}\\
&\geq & n\sum_{\alpha=1}^{L-N} f_\alpha(\boldsymbol{\lambda})H(S_\alpha)-n\sum_{\alpha=1}^{L-N}f_\alpha(\boldsymbol{\lambda})\delta_n^{(\alpha)}-nf_1(\boldsymbol{\lambda})\epsilon\label{Pf-SSMDC-5}
\end{eqnarray}
where \eqref{Pf-SSMDC-1} follows from the rate constraint \eqref{eq:M-Rate}, \eqref{Pf-SSMDC-2} follows from the fact that $c_{\boldsymbol{\lambda}}^{(1)}$ is optimal so $f_1(\boldsymbol{\lambda}) = \sum_{l=1}^Lc_{\boldsymbol{\lambda}}(\{l\})$, \eqref{Pf-SSMDC-3} follows from the fact that conditioning reduce entropy, and \eqref{Pf-SSMDC-4} follows from \eqref{eq:induc} with $m=L-N$. Divide both sides of \eqref{Pf-SSMDC-5} by $n$ and let $n \rightarrow \infty$ and $\epsilon \rightarrow 0$. Note that $\delta_n^{(\alpha)}\rightarrow 0$ in the limit as $n \rightarrow \infty$ and $\epsilon \rightarrow 0$ for all $\alpha=1,\ldots,L-N$. We have thus proved that \eqref{eq:CJH} holds for any admissible rate tuple $(R_1,\ldots,R_L)$. This completes the proof of the theorem.
\end{proof}

\section{Concluding Remarks} \label{sec:Con}
SMDC is a classical model for coding over distributed storage. In this setting, a simple separate encoding strategy known as superposition coding was shown to be optimal in terms of achieving the minimum sum rate \cite{RYH-IT97} and the entire admissible rate region \cite{YZ-IT99} of the problem. The proofs utilized carefully constructed induction arguments, for which the classical subset entropy inequality of Han \cite{Han-IC78} played a key role. 

This paper includes two parts. In the first part the existing optimality proofs for classical SMDC were revisited, with a focus on their connections to subset entropy inequalities. First, a new sliding-window subset entropy inequality was introduced and then used to establish the optimality of superposition coding for achieving the minimum sum rate under a weaker source-reconstruction requirement. Second, a subset entropy inequality recently proved by Madiman and Tetali \cite{MT-IT10} was used to develop a new structural understanding to the proof of Yeung and Zhang \cite{YZ-IT99} on the optimality of superposition coding for achieving the entire admissible rate region. Building on the connections between classical SMDC and the subset entropy inequalities developed in the first part, in the second part the optimality of superposition coding was further extended to the cases where there is either an additional all-access encoder (SMDC-A) or an additional secrecy constraint (S-SMDC).

Finally, we mention here that an ``asymmetric" setting of the multilevel diversity coding problem was considered in the recent work \cite{MTD-IT10}, where the sources that need to be asymptotically perfectly reconstructed depend on, not only the cardinality, but the actual subset of the encoder outputs available at the decoder. Unlike the symmetrical setting considered in \cite{Roc-Thesis92,Yeu-IT95,RYH-IT97,YZ-IT99} and in this paper, as demonstrated in \cite{MTD-IT10} for the case with three encoders,  coding across different sources is generally needed to achieve the entire admissible rate region of the problem.

\appendix

\section{Proof of Theorem~\ref{thm:YZ2}}\label{app:A}
Consider a proof via an induction on the total number of encoders $L$. Fix $\boldsymbol{\lambda} \in (\mathbb{R}^+)^L$. Without loss of generality, let us assume that
\begin{equation}
\lambda_1 \geq \lambda_2 \geq \cdots \geq \lambda_L\geq 0.
\end{equation}

First consider the base case with $L=2$. In this case, the optimal solution to the linear program \eqref{eq:dual2} is unique and is given by
\begin{equation}
c_{\boldsymbol{\lambda}}(\{l\})=\lambda_l, \quad l=1,2 \quad \mbox{and} \quad c_{\boldsymbol{\lambda}}(\{1,2\})=\lambda_2.
\end{equation}
When $f_2(\boldsymbol{\lambda})=\lambda_2>0$, it is straightforward to verify that 
\begin{equation}
g_{\{1,2\}}(\{l\})=\lambda_l/\lambda_2, \quad l=1,2
\end{equation}
is a fractional cover of $(\{1,2\},\{\{1\},\{2\}\})$ and such that
\begin{align}
c_{\boldsymbol{\lambda}}(\{l\})=g_{\{1,2\}}(\{l\})c_{\boldsymbol{\lambda}}(\{1,2\}), \quad l=1,2.
\end{align}

Now, assume that the theorem holds for $L=N-1$ for some integer $N \geq 3$. Fix $\alpha \in \{2,\ldots,N\}$, and let $c_{\boldsymbol{\lambda}}^{(\alpha)}$ be an optimal solution to the linear program to \eqref{eq:dual2} with the optimal value $f_{\alpha}(\boldsymbol{\lambda})>0$. Next, we show that we can always find a collection of functions $\{g_U: U \in \Omega_L^{(\alpha)}\}$ for which each $g_U$ is a fractional cover of $(U,\mathcal{V}_U)$ and such that $c_{\boldsymbol{\lambda}}^{(\alpha-1)}=\{c_{\boldsymbol{\lambda}}(V): V \in \Omega_L^{(\alpha-1)}\}$ where $c_{\boldsymbol{\lambda}}(V)$ is given by \eqref{eq:YZ2} is an optimal solution to the linear program \eqref{eq:dual2} with $\alpha$ replaced by $\alpha-1$.

We shall consider the following three cases separately.

Case 1: $\lambda_1 \leq \frac{\lambda_2+\cdots+\lambda_N}{\alpha-1}$. In this case, it is sufficient to consider for any $U \in \Omega_N^{(\alpha)}$, the \emph{uniform} fractional cover
\begin{equation}
g_U(V)=\frac{1}{\alpha-1}, \quad \forall V \in  \mathcal{V}_U
\end{equation}
for the hypergraph $(U,\mathcal{V}_U)$
so we have
\begin{equation}
c_{\boldsymbol{\lambda}}(V)=\sum_{U \in \mathcal{U}_V}\frac{c_{\boldsymbol{\lambda}}(U)}{\alpha-1}, \quad \forall V \in \Omega_N^{(\alpha-1)}.
\end{equation}
By \cite[Eq.~(39)]{YZ-IT99}, $c_{\boldsymbol{\lambda}}^{(\alpha-1)}$ constructed as such is an optimal solution to the linear program \eqref{eq:dual2} with $\alpha$ replaced by $\alpha-1$.

Case 2: $\lambda_1 > \frac{\lambda_2+\cdots+\lambda_N}{\alpha-2}$. In this case, by \cite[Lemma~6]{YZ-IT99} $c_\alpha(U) >0$ implies that $U \ni 1$. Furthermore, by \cite[Lemma~8]{YZ-IT99} $\tilde{c}_{\boldsymbol{\lambda}}^{(\alpha-1)}=\{\tilde{c}_{\boldsymbol{\lambda}}(\tilde{U}):\tilde{U} \subseteq \tilde{\Omega}_{N-1}:=\{2,\ldots,N\}\}$ where
\begin{equation}
\tilde{c}_{\boldsymbol{\lambda}}(\tilde{U})=c_{\boldsymbol{\lambda}}(\{1\}\cup\tilde{U})
\end{equation}
is an optimal solution to the linear program
\begin{equation}
\begin{array}{rcl}
\max &&  \sum_{\tilde{U} \in \tilde{\Omega}_{N-1}^{(\alpha-1)}} \tilde{c}_{\boldsymbol{\lambda}}(\tilde{U})\\
\mbox{subject to} &&  \sum_{\tilde{U} \in \tilde{\Omega}_{N-1}^{(\alpha-1)}, \tilde{U} \ni l} \tilde{c}_{\boldsymbol{\lambda}}(\tilde{U}) \leq \lambda_l, \quad \forall l=2,\ldots,N \\
&& \tilde{c}_{\boldsymbol{\lambda}}(\tilde{U}) \geq 0, \quad \forall \tilde{U} \in \tilde{\Omega}_{N-1}^{(\alpha-1)}
\end{array}
\label{eq:LP3}
\end{equation}
with the optimal solution $\tilde{f}_{\alpha-1}(\boldsymbol{\lambda})=f_\alpha(\boldsymbol{\lambda})>0$. Thus, by the induction assumption there exists a collection of functions $\{\tilde{g}_{\tilde{U}}:\tilde{U} \in \tilde{\Omega}_{N-1}^{(\alpha-1)}\}$ such that each $\tilde{g}_{\tilde{U}}$ is a fractional cover of $(\tilde{U},\tilde{\mathcal{V}}_{\tilde{U}})$ and $\tilde{c}_{\boldsymbol{\lambda}}^{(\alpha-2)}=\{\tilde{c}_{\boldsymbol{\lambda}}(\tilde{V}):\tilde{V} \in \tilde{\Omega}_{N-1}^{(\alpha-2)}\}$ where
\begin{equation}
\tilde{c}_{\boldsymbol{\lambda}}(\tilde{V}) := \sum_{\tilde{U} \in \tilde{\mathcal{U}}_{\tilde{V}}}
\tilde{c}_{\boldsymbol{\lambda}}(\tilde{U})\tilde{g}_{\tilde{U}}(\tilde{V})
\end{equation}
is an optimal solution to the linear program
\begin{equation}
\begin{array}{rcl}
\max &&  \sum_{\tilde{V} \in \tilde{\Omega}_{N-1}^{(\alpha-2)}} \tilde{c}_{\boldsymbol{\lambda}}(\tilde{V})\\
\mbox{subject to} &&  \sum_{\tilde{V} \in \tilde{\Omega}_{N-1}^{(\alpha-2)}, \tilde{V} \ni l} \tilde{c}_{\boldsymbol{\lambda}}(\tilde{V}) \leq \lambda_l, \quad \forall l=2,\ldots,N \\
&& \tilde{c}_{\boldsymbol{\lambda}}(\tilde{V}) \geq 0, \quad \forall \tilde{V} \in \tilde{\Omega}_{N-1}^{(\alpha-2)}.
\end{array}
\label{eq:LP4}
\end{equation}

For any $U \in \Omega_{N}^{(\alpha)}$ such that $U \ni 1$, let $\tilde{U} =U \setminus \{1\}$, and let
\begin{equation}
g_U(V) := \left\{
\begin{array}{ll}
\tilde{g}_{\tilde{U}}(\tilde{V}), & \mbox{if $V=\{1\}\cup\tilde{V}$ for some $\tilde{V} \in \tilde{\mathcal{V}}_{\tilde{U}}$}\\
0, & \mbox{otherwise}.
\end{array}
\right.
\end{equation}
For any $U \in \Omega_{N}^{(\alpha)}$ such that $1 \notin U$, let us choose $g_U$ to be an \emph{arbitrary} fractional cover of $(U,\mathcal{V}_U)$. Then, for any $V \in \Omega_N^{(\alpha-1)}$ such that $V \ni 1$ we have
\begin{eqnarray}
c_{\boldsymbol{\lambda}}(V) &=& \sum_{U \in \mathcal{U}_V}c_{\boldsymbol{\lambda}}(U)g_U(V)\\
&=& \sum_{\tilde{U} \in \tilde{\mathcal{U}}_{\tilde{V}}}c_{\boldsymbol{\lambda}}(\{1\}\cup\tilde{U})\tilde{g}_{\tilde{U}}(\tilde{V})\\
&=& \sum_{\tilde{U} \in \tilde{\mathcal{U}}_{\tilde{V}}}\tilde{c}_{\boldsymbol{\lambda}}(\tilde{U})\tilde{g}_{\tilde{U}}(\tilde{V})\\
&=& \tilde{c}_{\boldsymbol{\lambda}}(\tilde{V})
\end{eqnarray}
where $\tilde{V}=V\setminus \{1\}$, and for any $V \in \Omega_N^{(\alpha-1)}$ such that $1 \notin V$
\begin{equation}
c_{\boldsymbol{\lambda}}(V) = \sum_{U \in \mathcal{U}_V}c_{\boldsymbol{\lambda}}(U)g_U(V) =0.
\end{equation}
By \cite[Eq.~(46)]{YZ-IT99}, $c_{\boldsymbol{\lambda}}^{(\alpha-1)}$ constructed as such is an optimal solution to the linear program \eqref{eq:dual2} with $\alpha$ replaced by $\alpha-1$. It remains to show that $g_U$ is a fractional cover of $(U,\mathcal{V}_U)$ for any $U \in \Omega_{N}^{(\alpha)}$ such that $U \ni 1$.

Fix $U \in \Omega_{N}^{(\alpha)}$ such that $U \ni 1$. For any $i \in U \setminus \{1\}$, we have
\begin{equation}
\sum_{\{V \in \mathcal{V}_U: V \ni i\}}g_U(V) = \sum_{\{V \in \mathcal{V}_U: V \supseteq \{1,i\}\}}g_U(V)
= \sum_{\{\tilde{V} \in \tilde{\mathcal{V}}_{\tilde{U}}: \tilde{V} \ni i\}}\tilde{g}_{\tilde{U}}(\tilde{V}) \geq 1
\end{equation}
and 
\begin{equation}
\sum_{\{V \in \mathcal{V}_U: V \ni 1\}}g_U(V) \geq \sum_{\{V \in \mathcal{V}_U: V \supseteq \{1,i\}\}}g_U(V)
\geq 1.
\end{equation}
This completes the proof of Case 2.

Case 3: $\frac{\lambda_2+\cdots+\lambda_N}{\alpha-1} <\lambda_1 \leq \frac{\lambda_2+\cdots+\lambda_N}{\alpha-2}$.
In this case, we shall need the following notations. For any $U \in \Omega_N^{(\alpha)}$ and $\tau \in \{1,\ldots,\alpha\}$, denote by $a_U(\tau)$ the \emph{smallest} positive integer $l$ such that 
\begin{equation}
|\{1,\ldots,l\}\cap U|=\tau.
\end{equation}
Let
\begin{equation}
W_\tau(U) := U\setminus \{a_U(\tau)\}
\end{equation}
so $W_\tau(U) \in \Omega_N^{(\alpha-1)}$. For each $U \in \Omega_N^{(\alpha)}$, $m \in \{2,\ldots,\alpha\}$, and $\tau \in \{m,\ldots,\alpha\}$, let $\xi_{U,m,\tau}: \Omega_N^{(\alpha-1)}\rightarrow \mathbb{R}^+$ where
\begin{align}
\xi_{U,m,\tau}(V) &:=\left\{
\begin{array}{ll}
\frac{b_{m-1}^{(\alpha)}-b_{m}^{(\alpha)}}{f_\alpha(\boldsymbol{\lambda})}, & \mbox{if $V=W_\tau(U)$}\\
0, & \mbox{otherwise}
\end{array}
\right.\\
b_l^{(\alpha)} & :=  \lambda_l-\tilde{\lambda}_l\\
\mbox{and} \quad \tilde{\lambda}_l & := \sum_{\{U\in \Omega_N^{(\alpha)},U \ni l\}}c_{\boldsymbol{\lambda}}(U), \quad \forall l=1,\ldots,L.
\end{align}
Let
\begin{equation}
\beta := \sum_{m=2}^{\alpha-1}(b_1^{(\alpha)}-b_m^{(\alpha)}).
\end{equation}
Consider the collection of functions $\{g_U: U \in \Omega_N^{(\alpha)}\}$ where
\begin{equation}
g_U(V):=\left(1-\frac{\beta}{f_\alpha(\boldsymbol{\lambda})}\right)\frac{1}{\alpha-1}+\sum_{m=2}^{\alpha}\sum_{\tau=m}^{\alpha}\xi_{U,m,\tau}(V), \quad \forall V \in \mathcal{V}_U.
\label{C3}
\end{equation}
This gives
\begin{equation}
c_{\boldsymbol{\lambda}}(V) = \left(1-\frac{\beta}{f_\alpha(\boldsymbol{\lambda})}\right)\sum_{U \in \mathcal{U}_V}\frac{c_{\boldsymbol{\lambda}}(U)}{\alpha-1}+\sum_{U \in \mathcal{U}_V}\sum_{m=2}^{\alpha}\sum_{\tau=m}^{\alpha}\xi_{U,m,\tau}(V)c_{\boldsymbol{\lambda}}(U), \quad \forall V \in \Omega_N^{(\alpha-1)}.
\end{equation}
By \cite[Eq.~(55)]{YZ-IT99}, $c_{\boldsymbol{\lambda}}^{(\alpha-1)}$ constructed as such is an optimal solution to the linear program \eqref{eq:dual2} with $\alpha$ replaced by $\alpha-1$. It remains to show that $g_U$ is a fractional cover of $(U,\mathcal{V}_U)$ for any $U \in \Omega_{N}^{(\alpha)}$

Note that for any $i \in U$,
\begin{equation}
\sum_{\{V \in \mathcal{V}_U, V \ni i\}}\left(1-\frac{\beta}{f_\alpha(\boldsymbol{\lambda})}\right)\frac{1}{\alpha-1}=1-\frac{\beta}{f_\alpha(\boldsymbol{\lambda})}
\label{C3-1}
\end{equation}
and
\begin{eqnarray}
\sum_{\{V \in \mathcal{V}_U, V \ni i\}}\sum_{m=2}^{\alpha}\sum_{\tau=m}^{\alpha}\xi_{U,m,\tau}(V) &=&
\sum_{m=2}^{\alpha}\sum_{\tau=m}^{\alpha}\left(\sum_{\{V \in \mathcal{V}_U, V \ni i\}}\xi_{U,m,\tau}(V)\right)\\
&= &\sum_{m=2}^{\alpha}\sum_{\tau=m}^{\alpha}\frac{b_{m-1}^{(\alpha)}-b_{m}^{(\alpha)}}{f_\alpha(\boldsymbol{\lambda})}1_{\{a_U(\tau) \neq i\}}\\
&= &\sum_{m=2}^{\alpha}\frac{b_{m-1}^{(\alpha)}-b_{m}^{(\alpha)}}{f_\alpha(\boldsymbol{\lambda})}\left(\sum_{\tau=m}^{\alpha}1_{\{a_U(\tau) \neq i\}}\right)\\
&\geq &\sum_{m=2}^{\alpha}\frac{b_{m-1}^{(\alpha)}-b_{m}^{(\alpha)}}{f_\alpha(\boldsymbol{\lambda})}(\alpha-m)\\
&=& \frac{\beta}{f_\alpha(\boldsymbol{\lambda})}
\label{C3-2}
\end{eqnarray}
where \eqref{C3-2} follows from \cite[Eq.~(66)]{YZ-IT99}. Combing \eqref{C3-1} and \eqref{C3-2} gives
\begin{equation}
\sum_{\{V \in \mathcal{V}_U, V \ni i\}}g_U(V) \geq 1-\frac{\beta}{f_\alpha(\boldsymbol{\lambda})}+\frac{\beta}{f_\alpha(\boldsymbol{\lambda})}=1.
\end{equation}
We thus conclude that $g_U$ as defined in \eqref{C3} is indeed a fractional cover of $(U,\mathcal{V}_U)$ for any $U \in \Omega_N^{(\alpha)}$. This completes the proof of Case 3.

\section*{Acknowledgement}
Tie Liu would like to thank Dr. Jihong Chen for discussions that have inspired some ideas of the paper.

\begin{thebibliography}{99}

\bibitem{Roc-Thesis92} J.~R.~Roche, ``Distributed information storage," \emph{Ph.D. Dissertation}, Stanford University, Stanford, CA, Mar.~1992. 

\bibitem{Yeu-IT95} R.~W.~Yeung, ``Multilevel diversity coding with distortion," \emph{IEEE Trans. Inf. Theory}, vol.~41, pp.~412--422, Mar.~1995.

\bibitem{RYH-IT97} J.~R.~Roche, R.~W.~Yeung, and K.~P.~Hau, ``Symmetrical multilevel diversity coding," \emph{IEEE Trans. Inf. Theory}, vol.~43, pp.~1059--1064, May~1997.

\bibitem{YZ-IT99} R.~W.~Yeung and Z.~Zhang, ``On symmetrical multilevel diversity coding," \emph{IEEE Trans. Inf. Theory}, vol.~45, pp.~609--621, Mar.~1999.

\bibitem{Sin-IT64} R.~C.~Singleton, ``Maximum distance $q$-nary codes," \emph{IEEE Trans. Inf. Theory}, vol.~IT-10, pp.~116--118, Apr.~1964.

\bibitem{MT-IT10} M.~Madiman and P.~Tetali, ``Information inequalities for joint distributions, with interpretations and applications,"  \emph{IEEE Trans. Inf. Theory}, vol.~56, no.~6, pp.~2699--2713, June~2010.

\bibitem{Han-IC78} T.~S.~Han, ``Nonnegative entropy measures of multivariate symmetric correlations," \emph{Inf. Control}, vol.~36, no.~2, pp.~133--156, Feb.~1978.

\bibitem{BLLLM-ITR12} A.~Balasubramanian, H.~D.~Ly, S.~Li, T.~Liu, and S.~L.~Miller, ``Secure symmetrical multilevel diversity coding," \emph{IEEE Trans. Inf. Theory}, submitted for publication. Available online at \url{http://arxiv.org/abs/1201.1935}

\bibitem{CT-B06} T.~M.~Cover and J.~A.~Thomas, \emph{Elements of Information Theory, 2nd ed.} Hoboken, NJ: John Wiley \& Sons, 2006. 

\bibitem{Yeu-B08} R.~W.~Yeung, \emph{Information Theory and Network Coding.} New York, NY: Springer, 2008.

\bibitem{Sha-CACM79} A.~Shamir, ``How to share a secret," \emph{Comm. ACM}, vol.~22, pp.~612--613, Nov.~1979.

\bibitem{Bla-NCC79} G.~R.~Blakley, ``Safeguarding cryptographic keys," in \emph{Proc. National Computer Conference}, New York, NY, June~1979, vol.~48, pp.~313--317.

\bibitem{Yam_ITF85} H.~Yamamoto, ``Secret sharing system using $(k,L,n)$ threshold scheme," \emph{IEICE Trans. Fundamentals (Japanese Edition)}, vol.~J68-A, pp.~945--952, Sept.~1985 (English Translation: Scripta Technica, Inc., Electronics and Comm. in Japan, Part I, vol.~69, pp.~46--54, 1986).

\bibitem{BM-Crypto85} G.~R.~Blakley and C.~Meadows, ``Security of ramp scheme," in \emph{Advances in Cryptology - CRYPTO '84, LNCS~196}, pp.~242--269, 1985.

\bibitem{MTD-IT10} S.~Mohajer, C.~Tian, and S.~N.~Diggavi, ``Asymmetric multilevel diversity coding and asymmetric Gaussian multiple descriptions," \emph{IEEE Trans. Inf. Theory}, vol.~56, no.~9, pp.~4367--4387, Sept.~2010.

\end {thebibliography}

\end{document}